\documentclass[letterpaper,onecolumn,11pt,accepted=2019-12-03]{quantumarticle}
\pdfoutput=1
\usepackage[english]{babel}

\usepackage{amssymb,amsmath,amsthm,amsfonts}
\usepackage{mathtools}
\usepackage{enumitem}
\usepackage[numbers,comma,sort&compress]{natbib}
\usepackage{authblk}
\usepackage{graphicx}
\usepackage[font=small]{caption}
\usepackage[labelformat=simple]{subcaption}
\usepackage{float}
\usepackage[ruled,vlined,linesnumbered]{algorithm2e}
\usepackage{physics}
\usepackage{footnote}
\usepackage{xcolor}

\usepackage{tikz}
\usetikzlibrary{calc,patterns,angles,quotes}
\usepackage{siunitx}

\usepackage[colorlinks]{hyperref}

\newtheorem{theorem}{Theorem}
\newtheorem{definition}{Definition}
\newtheorem{lemma}{Lemma}
\newtheorem{proposition}{Proposition}

\numberwithin{equation}{section}
\numberwithin{theorem}{section}
\numberwithin{definition}{section}
\numberwithin{lemma}{section}
\numberwithin{proposition}{section}
\numberwithin{example}{section}
\numberwithin{corollary}{section}
\numberwithin{remark}{section}

\newcommand{\eqn}[1]{(\ref{eqn:#1})}
\newcommand{\eq}[1]{(\ref{eq:#1})}
\newcommand{\prb}[1]{(\ref{prb:#1})}
\newcommand{\thm}[1]{\hyperref[thm:#1]{Theorem~\ref*{thm:#1}}}
\newcommand{\cor}[1]{\hyperref[cor:#1]{Corollary~\ref*{cor:#1}}}
\newcommand{\defn}[1]{\hyperref[defn:#1]{Definition~\ref*{defn:#1}}}
\newcommand{\lem}[1]{\hyperref[lem:#1]{Lemma~\ref*{lem:#1}}}
\newcommand{\prop}[1]{\hyperref[prop:#1]{Proposition~\ref*{prop:#1}}}
\newcommand{\fig}[1]{\hyperref[fig:#1]{Figure~\ref*{fig:#1}}}
\newcommand{\tab}[1]{\hyperref[tab:#1]{Table~\ref*{tab:#1}}}
\newcommand{\algo}[1]{\hyperref[algo:#1]{Algorithm~\ref*{algo:#1}}}
\renewcommand{\sec}[1]{\hyperref[sec:#1]{Section~\ref*{sec:#1}}}
\newcommand{\append}[1]{\hyperref[append:#1]{Appendix~\ref*{append:#1}}}
\newcommand{\fac}[1]{\hyperref[fac:#1]{Fact~\ref*{fac:#1}}}
\newcommand{\lin}[1]{\hyperref[lin:#1]{Line~\ref*{lin:#1}}}
\newcommand{\itm}[1]{\ref{itm:#1}}

\newcommand{\algname}[1]{\textup{\texttt{#1}}}

\newcommand{\hence}{\Rightarrow}
\newcommand{\softoh}[1]{\tilde{O}(#1)}

\newcommand{\grid}[2]{G_{#1 , #2}}

\def\>{\rangle}
\def\<{\langle}

\newcommand{\N}{\mathbb{N}}
\newcommand{\Z}{\mathbb{Z}}
\newcommand{\R}{\mathbb{R}}

\newcommand{\E}{\mathbb{E}}

\DeclareMathOperator{\poly}{poly}
\DeclareMathOperator{\sgn}{sgn}

\DeclareMathOperator{\Ord}{Ord}

\DeclareMathOperator{\EVAL}{EVAL}
\DeclareMathOperator{\MEM}{MEM}
\DeclareMathOperator{\SEP}{SEP}
\DeclareMathOperator{\OPT}{OPT}

\DeclareMathOperator{\QFT}{QFT}

\newcommand{\devlin}{\Delta}

\renewcommand{\d}{\mathrm{d}}
\renewcommand{\emptyset}{\varnothing}
\def\Tr{\operatorname{Tr}}\def\:{\hbox{\bf:}}
\newcommand{\range}[1]{[#1]}

\newcommand{\hd}[1]{\vspace{4mm} \noindent \textbf{#1}}
\def \eps {\epsilon}

\makeatletter
\newcommand{\doubletilde}[1]{{%
  \mathpalette\double@tilde{#1}%
}}
\newcommand{\double@tilde}[2]{%
  \sbox\z@{$\m@th#1\tilde{#2}$}%
  \ht\z@=.9\ht\z@
  \tilde{\box\z@}%
}
\makeatother

\let\oldnl\nl
\newcommand{\nonl}{\renewcommand{\nl}{\let\nl\oldnl}}


\begin{document}

\title{Quantum algorithms and lower bounds for convex optimization}

\author[aff1]{Shouvanik Chakrabarti}
\author[aff1]{Andrew M.\ Childs}
\author[aff1]{Tongyang Li}
\author[aff1]{Xiaodi Wu}
\affiliation[aff1]{Department of Computer Science, Institute for Advanced Computer Studies, and Joint Center for Quantum Information and Computer Science, University of Maryland}

\maketitle

\begin{abstract}
While recent work suggests that quantum computers can speed up the solution of semidefinite programs, little is known about the quantum complexity of more general convex optimization. We present a quantum algorithm that can optimize a convex function over an $n$-dimensional convex body using $\tilde{O}(n)$ queries to oracles that evaluate the objective function and determine membership in the convex body. This represents a quadratic improvement over the best-known classical algorithm. We also study limitations on the power of quantum computers for general convex optimization, showing that it requires $\tilde{\Omega}(\sqrt n)$ evaluation queries and $\Omega(\sqrt{n})$ membership queries.
\end{abstract}


\section{Introduction} \label{sec:intro}

Convex optimization has been a central topic in the study of mathematical optimization, theoretical computer science, and operations research over the last several decades. On the one hand, it has been used to develop numerous algorithmic techniques for problems in combinatorial optimization, machine learning, signal processing, and other areas. On the other hand, it is a major class of optimization problems that admits efficient classical algorithms \cite{boyd2004optimization,grotschel2012geometric}. Approaches to convex optimization include the ellipsoid method \cite{grotschel2012geometric}, interior-point methods \cite{karmarkar1984new,dantzig2006linear}, cutting-plane methods \cite{kelley1960cutting,vaidya1989new}, and random walks \cite{kalai2006simulated,lovasz2006fast}.

The fastest known classical algorithm for general convex optimization solves an instance of dimension $n$ using $\tilde{O}(n^{2})$ queries to oracles for the convex body and the objective function, and runs in time $\tilde{O}(n^{3})$ \cite{lee2017efficient}.\footnote{\label{footnote:tilde}The notation $\tilde{O}$ suppresses poly-logarithmic factors in $n,R,r,\epsilon$, i.e., $\tilde{O}(f(n))=f(n)\log^{O(1)}(nR/r\epsilon)$.} The novel step of \cite{lee2017efficient} is a construction of a separation oracle by a subgradient calculation with $O(n)$ objective function calls and $O(n)$ extra time. It then relies on a reduction from optimization to separation that makes $\tilde{O}(n)$ separation oracle calls and runs in time $\tilde{O}(n^3)$ \cite{lee2015faster}. Although it is unclear whether the query complexity of $\tilde{O}(n^2)$ is optimal for all possible classical algorithms, it is the best possible result using the above framework. This is because it takes $\tilde{\Omega}(n)$ queries to compute the (sub)gradient (see \sec{classical-gradient}) and it also requires $\Omega(n)$ queries to produce an optimization oracle from a separation oracle (see \cite{nemirovsky1983problem} and \cite[Section 10.2.2]{nemirovski1995information}).

It is natural to ask whether quantum computers can solve convex optimization problems faster. Recently, there has been significant progress on quantum algorithms for solving a special class of convex optimization problems called semidefinite programs (SDPs). SDPs generalize the better-known linear programs (LPs) by allowing positive semidefinite matrices as variables. For an SDP with $n$-dimensional, $s$-sparse input matrices and $m$ constraints, the best known classical algorithm \cite{lee2015faster} finds a solution in time $\tilde{O}(m(m^2+n^{\omega}+mns)\poly\log(1/\eps))$, where $\omega$ is the exponent of matrix multiplication and $\eps$ is the accuracy of the solution.
Brand{\~a}o and Svore gave the first quantum algorithm for SDPs with worst-case running time $\tilde{O}(\sqrt{mn} s^2 (Rr/\varepsilon)^{32})$, where $R$ and $r$ upper bound the norms of the optimal primal and dual solutions, respectively \cite{brandao2016quantum}. Compared to the aforementioned classical SDP solver \cite{lee2015faster}, this gives a polynomial speedup in $m$ and $n$. Van Apeldoorn et al.\ \cite{vanApeldoorn2017quantum} further improved the running time of a quantum SDP solver to $\tilde{O}(\sqrt{mn} s^2 (Rr/\eps)^8)$, which was subsequently improved to $\tilde{O}\big((\sqrt{m}+\sqrt{n}(Rr/\eps))s (Rr/\eps)^4\big)$ \cite{brandao2017exponential,vanApeldoorn2018SDP}.
The latter result is tight in the dependence of $m$ and $n$ since there is a quantum lower bound of $\Omega(\sqrt{m}+\sqrt{n})$ for constant $R,r,s,\epsilon$~\cite{brandao2016quantum}.

However, semidefinite programming is a structured form of convex optimization that does not capture the problem in general. In particular, SDPs are specified by positive semidefinite matrices, and their solution is related to well-understood tasks in quantum computation such as solving linear systems (e.g.,~\cite{harrow2009quantum,childs2015quantum}) and Gibbs sampling (e.g.,~\cite{brandao2017exponential,vanApeldoorn2018SDP}). General convex optimization need not include such structural information, instead only offering the promise that the objective function and constraints are convex. Currently, little is known about whether quantum computers could provide speedups for general convex optimization. Our goal is to shed light on this question.

\subsection{Convex optimization}
We consider the following general minimization problem:
  \begin{equation}
    \label{prb:convex-opt}
    \min_{x \in K} f(x),\text{ where $K  \subseteq \mathbb{R}^n$ is a convex set and $f\colon K\rightarrow\R$ is a convex function.}
  \end{equation}
We assume we are given upper and lower bounds on the function values, namely $m \le \min_{x \in K} f(x) \le M$, and inner and outer bounds on the convex set $K$, namely
\begin{align}\label{eqn:ball-relationship}
B_2(0,r) \subseteq K \subseteq B_2(0,R),
\end{align}
where $B_2(x,l)$ is the ball of radius $l$ in $L_{2}$ norm centered at $x\in\R^{n}$. We ask for a solution $\tilde{x} \in K$ with precision $\epsilon$, in the sense that
  \begin{equation}
    \label{eqn:approx}
    f(\tilde x)\leq\min_{x \in K} f(x)+\epsilon.
  \end{equation}

We consider the very general setting where the convex body $K$ and convex function $f$ are only specified by oracles. In particular, we have:
\begin{itemize}
\item A \emph{membership oracle} $O_K$ for $K$, which determines whether a given $x\in\R^{n}$ belongs to $K$;
\item An \emph{evaluation oracle} $O_f$ for $f$, which outputs $f(x)$ for a given $x \in K$.
\end{itemize}

Convex optimization has been well-studied in the model of membership and evaluation oracles since this provides a reasonable level of abstraction of $K$ and $f$, and it helps illuminate the algorithmic relationship between the optimization problem and the relatively simpler task of determining membership \cite{grotschel2012geometric,lee2015faster,lee2017efficient}. The efficiency of convex optimization is then measured by the number of queries to the oracles (i.e., the \emph{query complexity}) and the total number of other elementary gates (i.e., the \emph{gate complexity}).

It is well known that a general bounded convex optimization problem is equivalent to one with a linear objective function over a different bounded convex set. In particular, if promised that $\min_{x \in K} f(x) \le M$, \prb{convex-opt} is equivalent to the problem
\begin{align}\label{prb:gen_opt}
\min_{x' \in \R,\,x \in K}x' \quad\text{such that}\quad f(x)\leq x'\leq M.
\end{align}
Observe that a membership query to the new convex set
\begin{align}\label{eq:kprime}
  K':=\{(x',x) \in \R \times K \mid f(x)\leq x'\leq M\}
\end{align}
can be implemented with one query to the membership oracle for $K$ and one query to the evaluation oracle for $f$. Thus the ability to optimize a linear function
\begin{align}
    \label{eq:rephrased-convex}
    \min_{x \in K} c^Tx
\end{align}
for any $c \in \R^n$ and convex set $K \subseteq \R^n$ is essentially equivalent to solving a general convex optimization problem. A procedure to solve such a problem for any specified $c$ is known as an \emph{optimization oracle}. Thus convex optimization reduces to implementing optimization oracles over general convex sets (\lem{sco}). The related concept of a \emph{separation oracle} takes as input a point $p \notin K$ and outputs a hyperplane separating $p$ from $K$.

In the quantum setting, we model oracles by unitary operators instead of classical procedures. In particular, in the quantum model of membership and evaluation oracles, we are promised to have unitaries $O_{K}$ and $O_{f}$ such that
\begin{itemize}
\item For any $x\in\R^{n}$, $O_{K}|x,0\>=|x,\delta[x \in K]\>$, where $\delta[P]$ is $1$ if $P$ is true and $0$ if $P$ is false;
\item For any $x\in \R^{n}$, $O_{f}|x,0\>=|x,f(x)\>$.
\end{itemize}

In other words, we allow coherent superpositions of queries to both oracles. If the classical oracles can be implemented by explicit circuits, then the corresponding quantum oracles can be implemented by quantum circuits of about the same size, so the quantum query model provides a useful framework for understanding the quantum complexity of convex optimization.

\subsection{Contributions}\label{sec:contributions}

We now describe the main contributions of this paper. Our first main result is a quantum algorithm for optimizing a convex function over a convex body. Specifically, we show the following:

\begin{theorem}\label{thm:upper-main}
There is a quantum algorithm for minimizing a convex function $f$ over a convex set $K \subseteq \R^n$ using $\tilde{O}(n)$ queries to an evaluation oracle for $f$ and $\tilde{O}(n)$ queries to a membership oracle for $K$. The gate complexity of this algorithm is $\tilde{O}(n^3)$.
\end{theorem}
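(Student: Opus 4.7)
My plan is to combine a classical reduction from optimization to separation with a quantum implementation of the separation oracle built from gradient estimation. As already noted in the preceding text, minimizing $f$ over $K$ reduces to optimizing a linear functional over the extended convex set $K'$ of \eq{kprime}, and a single membership query to $K'$ costs one query to $O_K$ and one query to $O_f$. The classical optimization-from-separation framework of \cite{lee2015faster} then solves such a linear problem using $\tilde{O}(n)$ separation queries and $\tilde{O}(n^3)$ additional time. Since that reduction is purely oracle-based, it lifts directly to the quantum setting once we supply a quantum separation oracle for $K'$ that makes few queries to $O_K$ and $O_f$.

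The new ingredient is such a quantum separation oracle, which I would design to use only $\tilde{O}(1)$ queries to each of $O_f$ and $O_K$ per call. Given a test point $(p', p)$, first invoke $O_K$ on $p$; if $p \notin K$, use a classical separating hyperplane for $K$ obtained from a subgradient of the distance-to-$K$ function. Otherwise the only way $(p', p) \notin K'$ is that $f(p) > p'$, and any (approximate) subgradient $g$ of $f$ at $p$ yields the separating hyperplane $\{(x',x) : x' - g^\top x \ge p' - g^\top p\}$. The problem thus reduces to producing a sufficiently accurate subgradient of $f$ at a single point with few quantum evaluation queries, for which I would invoke a quantum version of Jordan's gradient estimation algorithm: by preparing a uniform superposition over a small hypercube around $p$, querying $O_f$ in superposition, and applying an inverse quantum Fourier transform on each of the $n$ coordinate registers, one extracts an estimate of $\nabla f(p)$ with $O(1)$ evaluation queries whenever $f$ is accurately described by its first-order Taylor expansion on that hypercube.

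The hard part is that a convex function given only by an evaluation oracle need not be differentiable, and Jordan's algorithm is sensitive both to higher-order Taylor terms and to evaluation error. To overcome this I would replace $f$ by a smoothing $f_\delta(x) := \E_Y[f(x + \delta Y)]$ with $Y$ drawn from a suitable distribution, for instance the uniform measure on $B_2(0,1)$ or a standard Gaussian. Standard convex-analytic arguments show that $f_\delta$ is differentiable, has Lipschitz gradient with constant $O(L/\delta)$ when $f$ is $L$-Lipschitz, and satisfies $\|\nabla f_\delta(p) - g(p)\| = O(L\delta)$ for some genuine subgradient $g(p)$ of $f$, which is strong enough to drive the outer separation-based reduction. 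Because we cannot build an exact coherent oracle for $f_\delta$ from $O_f$, I would sample $Y$ into an ancillary register and apply a \emph{robust} variant of Jordan's algorithm that tolerates the resulting noisy evaluations, then median-amplify to boost success probability. Proving that this robust estimator still returns a subgradient of the required precision using only $\tilde{O}(1)$ queries, and choosing $\delta$ small enough that the accumulated subgradient error does not inflate the outer optimization error beyond $\epsilon$, is the central technical step.

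Combining everything, each invocation of the quantum separation oracle uses $\tilde{O}(1)$ queries to $O_f$ and $O_K$ and $\tilde{O}(n^2)$ gates (for preparing the hypercube superposition, sampling $Y$, and applying the $n$-register inverse QFT). The outer optimization-to-separation reduction makes $\tilde{O}(n)$ such calls, giving the claimed $\tilde{O}(n)$ queries to each oracle and $\tilde{O}(n^3)$ total gate complexity, while returning a point $\tilde{x}$ satisfying \eqn{approx}.
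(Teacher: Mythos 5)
Your high-level architecture matches the paper exactly: reduce to linear optimization over $K'$, invoke the classical optimization-to-separation reduction from \cite{lee2015faster}, and implement the separation oracle by quantum gradient estimation \`a la Jordan. You correctly identify that the crux is producing an approximate subgradient of a possibly non-differentiable convex function from a noisy evaluation oracle using $\tilde{O}(1)$ queries. However, that crux is precisely where the proposal has a genuine gap, and your proposed route through the randomized smoothing $f_\delta(x)=\E_Y[f(x+\delta Y)]$ conceals rather than resolves it. Two specific issues. First, the claim that $f_\delta$ ``has Lipschitz gradient with constant $O(L/\delta)$'' is not correct in general: for an $L$-Lipschitz convex $f$ on $\R^n$, uniform-ball or Gaussian smoothing gives a Hessian bound of order $L\sqrt{n}/\delta$, and that extra dimension factor is fatal if you try to plug it into Jordan's error bound $\sim\sqrt{n\,\epsilon\,\beta}$ as a uniform smoothness parameter --- the resulting error does not vanish as $\delta\to 0$. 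Second, and more fundamentally, ``sample $Y$ into an ancillary register and run Jordan's algorithm'' does not implement a phase oracle for $f_\delta$; the phase becomes entangled with the ancilla, so upon measurement you obtain Jordan's output for the fixed-$Y$ function $x\mapsto f(x+\delta Y)$, i.e.\ an estimated gradient of $f$ at the random point $p+\delta Y$, not an estimate of $\nabla f_\delta(p)$. That random-point estimator is in fact the algorithm the paper uses (\algo{classical-smoothing}), but its correctness does not follow from any property of $f_\delta$ --- it requires showing that, with high probability over the random shift, the original function has small deviation from linearity in the tiny Jordan hypercube around the shifted point.

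That missing lemma is the paper's central technical contribution (\lem{effective-smoothness} plus \thm{general-quantum-gradient}): a divergence-theorem argument shows that for an $L$-Lipschitz convex function, the integral of the directional second derivative along a random axis-aligned segment of length $l$ inside $B_\infty(x,r_1)$ is $O(Ll/r_1)$ in expectation, so the deviation from linearity over the Jordan grid is controlled with high probability over the random shift --- a much finer statement than any uniform bound on the Hessian of a smoothed surrogate. The paper also uses mollification (\prop{mollify}, \lem{nearby-gradient}) purely as an analytical device to make ``gradient'' and ``Hessian'' well-defined, never as something the algorithm queries; your proposal tries to make the smoothing operational, which is where it breaks. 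Relatedly, your claim ``$\|\nabla f_\delta(p)-g(p)\|=O(L\delta)$ for some genuine subgradient $g(p)$ of $f$'' is false at a kink (e.g.\ $f(x)=|x|$ at $p$ slightly off $0$ gives an $O(1)$ gap); what one can prove is the weaker approximate-subgradient inequality of the form in \eq{nearby-gradient}, which is what \lem{nearby-gradient} supplies. In short, you have the right skeleton and even the right algorithm, but the analysis you sketch does not establish the required $\tilde{O}(1)$-query subgradient bound, and the place you explicitly defer (``the central technical step'') is exactly the place a new idea is needed.
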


Recall that the state-of-the-art classical algorithm \cite{lee2017efficient} for general convex optimization with evaluation and membership oracles uses $\tilde{O}(n^2)$ queries to each. Thus our algorithm provides a quadratic improvement over the best known classical result. While the query complexity of \cite{lee2017efficient} is not known to be tight, it is the best possible result that can be achieved using subgradient computation to implement a separation oracle, as discussed above.

The proof of \thm{upper-main} follows the aforementioned classical strategy of constructing a separating hyperplane for any given point outside the convex body \cite{lee2017efficient}. We find this hyperplane using a fast quantum algorithm for gradient estimation using $\tilde{O}(1)$ evaluation queries,\footnote{Here $\tilde{O}(1)$ has the same definition as footnote \ref{footnote:tilde}, i.e., $\tilde{O}(1)=\log^{O(1)}(nR/r\epsilon)$.} as first proposed by Jordan \cite{jordan2005fast} and later refined by \cite{gilyen2019optimizing} with more rigorous analysis. However, finding a suitable hyperplane in general requires calculating approximate subgradients of convex functions that may not be differentiable, whereas the algorithms in \cite{jordan2005fast} and \cite{gilyen2019optimizing} both require bounded second derivatives or more stringent conditions. To address this issue, we introduce classical randomness into the algorithm to produce a suitable approximate subgradient with $\tilde{O}(1)$ evaluation queries, and show how to use such an approximate subgradient in the separation framework to produce a faster quantum algorithm.

Our new quantum algorithm for subgradient computation is the source of the quantum speedup of the entire algorithm and establishes a separation in query complexity for the subgradient computation between quantum ($\tilde{O}(1)$) and classical ($\tilde{\Omega}(n)$, see \sec{classical-gradient}) algorithms. This subroutine could also be of independent interest, in particular in the study of quantum algorithms based on gradient descent and its variants (e.g., \cite{rebentrost2016, kerenidis2017}).

Our techniques for finding an approximate subgradient only require an approximate oracle for the function to be differentiated. \thm{upper-main} also applies if the membership and evaluation oracles are given with error that is polynomially related to the required precision in minimizing the convex function (see \thm{mem-to-opt}). Precise definitions for these oracles with error can be found in \sec{oracle}.

On the other hand, we also aim to establish corresponding quantum lower bounds to understand the potential for quantum speedups for convex optimization. To this end, we prove:

\begin{theorem}\label{thm:lower-main}
There exists a convex body $K\subseteq \R^n$, a convex function $f$ on $K$, and a precision $\epsilon>0$, such that a quantum algorithm needs at least $\Omega(\sqrt{n})$ queries to a membership oracle for $K$ and $\Omega(\sqrt{n}/\log n)$ queries to an evaluation oracle for $f$ to output a point $\tilde{x}$ satisfying
\begin{align}\label{eqn:approx-all-contrib}
f(\tilde{x})\leq\min_{x\in K}f(x)+\epsilon
\end{align}
with high success probability (say, at least 0.8).
\end{theorem}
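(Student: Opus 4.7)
The plan is to establish each bound by reducing from a quantum query problem with a known $\Omega(\sqrt{n})$ lower bound --- specifically unstructured search (equivalently, computing the $\mathrm{OR}$ of $n$ bits). I would exhibit a family of convex optimization instances indexed by a hidden bit string $z \in \{0,1\}^{n}$, so that any $\epsilon$-approximate minimizer reveals enough of $z$ to solve the search problem, while each query to the relevant oracle can be simulated using only a few queries to $z$.

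For the membership lower bound I would fix the objective as a simple linear function $\langle c, x\rangle$ and take a family of convex bodies $\{K_z\}$ obtained from a fixed ``base'' body (e.g.\ a Euclidean ball satisfying $B_2(0,r) \subseteq K_z \subseteq B_2(0,R)$) by attaching or carving off tiny convex bumps at $n$ prescribed boundary locations, with the $i$-th bump present iff $z_i = 1$. The bumps would be positioned and sized so that even a single flipped bit shifts $\min_{x\in K_z}\langle c,x\rangle$ by strictly more than $\epsilon$. Membership in $K_z$ at any point is then determined by at most one bit of $z$, so a quantum algorithm using $q$ (coherent) membership queries yields a quantum algorithm for $\mathrm{OR}$ using $O(q)$ oracle queries, and the standard $\Omega(\sqrt{n})$ search bound forces $q = \Omega(\sqrt{n})$.

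For the evaluation lower bound I would instead fix a simple convex body $K$ and encode $z$ in a family of convex functions of the form $f_z(x) = \max_{i\in[n]} \ell_{i,z_i}(x)$, a pointwise maximum of affine pieces whose coefficients depend on $z$ (so convexity is automatic). The pieces would be chosen so that their arrangement puts the minimum of $f_z$ at a location whose $\epsilon$-neighborhood pins down the support of $z$. Simulating one evaluation query of $f_z$ in superposition reduces to identifying which affine piece dominates at a given $x$, which one can arrange to require $O(\log n)$ queries to $z$ via a coherent binary search over a sorted schedule of dominating indices. Combining this per-query overhead with the $\Omega(\sqrt{n})$ search lower bound yields the claimed $\Omega(\sqrt{n}/\log n)$ evaluation bound, the logarithmic loss coming precisely from this per-query simulation cost.

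The main obstacle is balancing two competing constraints: convexity imposes strong global rigidity on the instance family, yet the hidden bits must remain informationally undetectable by individual oracle queries far from the optimum. The construction must therefore place its adversarial information in extremal regions where a single oracle query cannot incidentally reveal more than a bounded number of bits of $z$, while still forcing any $\epsilon$-approximate minimizer to locate them. A secondary technical point is verifying that the oracle simulations work \emph{coherently} in superposition over the query register, so that queries in superposition over many $x$ still cost only $O(1)$ (respectively $O(\log n)$) queries to the oracle for $z$; this amounts to checking that the dominating piece, or the decisive bit, can be computed by a uniform quantum circuit from the hidden bits.
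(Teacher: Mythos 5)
Your high-level plan---reduce from a hard quantum query problem on a hidden string $z$, and build convex instances whose oracle calls reveal only a few bits of $z$ while the optimum pins down $z$---matches the shape of the paper's argument, but your choice of \emph{unstructured search} (OR) as the base problem is not workable, and that gap propagates into both reductions. The crux is that for the natural convex constructions, a single membership or evaluation query probes a structured \emph{subset} of the hidden bits, not one bit. For membership, you propose bolting disjoint convex bumps onto a base body, but $K_z$ must be the convex hull of the base body with all present bumps, and a query point lying in the hull of two distant bumps then has membership status depending on both of the corresponding bits---so the ``one bit per query'' claim fails for any attachment-based construction. And once queries probe subsets, reducing from OR is unsound: OR is solvable with $O(\log n)$ subset (wildcard) queries. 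The paper's choice of the hypercube $\bigtimes_i[s_i-2,s_i+1]$ makes the reduction land exactly on a \emph{search-with-wildcards} query (checking $s_{|T}=y$ for the ``ambiguous'' coordinates), and it uses the Ambainis--Montanaro $\Omega(\sqrt{n})$ lower bound for that problem. That substitution is essential.

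For the evaluation bound the gap is larger. You take $f_z(x)=\max_i\ell_{i,z_i}(x)$ and claim that finding the dominating piece costs $O(\log n)$ queries to $z$ via binary search over a sorted candidate schedule---but each binary-search step asks ``is there a match among indices $1,\dots,m$?'', which is a subset condition costing up to $m$ individual-bit queries; in the worst case a linear scan of the candidates is $\Theta(n)$. Again the right primitive is one wildcard query per threshold test $\delta[f_c(x)\le t]$, and then the paper still needs a further ingredient you omit: the \emph{discretization} argument, which maps any query $x$ to a canonical point $x^*$ in a discrete grid $D_n$ on which $f$ takes only $O(n)$ values (multiples of $1/(2n+1)$), so that $O(\log n)$ bits of precision---hence $O(\log n)$ wildcard queries---suffice to recover $f(x)$ exactly. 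Without discretization you have no handle on the precision $b$ in the $\Omega(\sqrt{n}/b)$ bound. Finally, the theorem asks for a single $(K,f,\epsilon)$ realizing both lower bounds; you construct two unrelated families, whereas the paper concatenates them into a $2n$-dimensional instance where the membership oracle depends only on $s$ and the evaluation oracle (after subtracting the known sum-of-coordinates part) depends only on $c$, so neither oracle leaks the other's secret.
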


We establish the query lower bound on the membership oracle by reductions from search with wildcards \cite{ambainis2014quantum}. The lower bound on evaluation queries uses a similar reduction, but this only works for an evaluation oracle with low precision. To prove a lower bound on precise evaluation queries, we propose a discretization technique that relates the difficulty of the continuous problem to a corresponding discrete one. This approach might be of independent interest since optimization problems naturally have continuous inputs and outputs, whereas most previous work on quantum lower bounds focuses on discrete inputs. Using this technique, we can simulate one perfectly precise query by one low-precision query at discretized points, thereby establishing the evaluation lower bound as claimed in \thm{lower-main}. As a side point, this evaluation lower bound holds even for an unconstrained convex optimization problem on $\R^{n}$, which might be of independent interest since this setting has also been well-studied classically \cite{boyd2004optimization,nesterov2013introductory,nemirovsky1983problem,nemirovski1995information}.

We summarize our main results in \tab{main-OPT}.

\begin{table}[H]
\centering
\resizebox{0.75\columnwidth}{!}{%
\begin{tabular}{|c||c|c|}
\hline
 & Classical bounds & Quantum bounds {(this paper)} \\ \hline\hline
Membership queries & $\tilde{O}(n^{2})$ \cite{lee2017efficient}, $\Omega(n)$ \cite{lee2018personal} & $\tilde{O}(n)$, $\Omega(\sqrt{n})$ \\ \hline
Evaluation queries & $\tilde{O}(n^{2})$ \cite{lee2017efficient}, $\Omega(n)$ \cite{lee2018personal} & $\tilde{O}(n)$, $\tilde{\Omega}(\sqrt{n})$ \\ \hline\hline
Time complexity & $\tilde{O}(n^{3})$ \cite{lee2017efficient} & $\tilde{O}(n^{3})$ \\ \hline
\end{tabular}
}
\caption{Summary of classical and quantum complexities of convex optimization.}
\label{tab:main-OPT}
\end{table}

\subsection{Overview of techniques}

\subsubsection{Upper bound}\label{sec:upper-bound-techniques}

To prove our upper bound result in \thm{upper-main}, we use the well-known reduction from general convex optimization to the case of a linear objective function, which simplifies the problem to implementing an optimization oracle using queries to a membership oracle (\lem{sco}). For the reduction from optimization to membership, we follow the best known classical result in \cite{lee2017efficient} which implements an optimization oracle using $\softoh{n^2}$ membership queries and $\softoh{n^3}$ arithmetic operations. In \cite{lee2017efficient}, the authors first show a reduction from separation oracles to membership oracles that uses $\softoh{n}$ queries and then use a result from \cite{lee2015faster} to implement an optimization oracle using $\softoh{n}$ queries to a separation oracle, giving an overall query complexity of $\softoh{n^2}$.

The reduction from separation to membership involves the calculation of a \emph{height function} defined by the authors (see Eq.\ \eq{depth-function}), whose evaluation oracle can be implemented in terms of the membership oracle of the original set. A separating hyperplane is determined by computing a subgradient, which already takes $\softoh{n}$ queries. In fact, it is not hard to see that any classical algorithm requires $\tilde\Omega(n)$ classical queries (see \sec{classical-gradient}), so this part of the algorithm cannot be improved classically. The possibility of using the quantum Fourier transform to compute the gradient of a function using $\softoh{1}$ evaluation queries (\cite{jordan2005fast,gilyen2019optimizing}) suggests the possibility of replacing the subgradient procedure with a faster quantum algorithm. However, the techniques described in \cite{jordan2005fast,gilyen2019optimizing} require the function in question to have bounded second (or even higher) derivatives, and the height function is only guaranteed to be Lipschitz continuous (\defn{lipschitz}) and in general is not even differentiable.

To compute subgradients of general (non-differentiable) convex functions,  we introduce classical randomness (taking inspiration from \cite{lee2017efficient}) and construct a quantum subgradient algorithm that uses $\softoh{1}$ queries. Our proof of correctness (\sec{eval-to-grad}) has three main steps:
 \begin{enumerate}
 \item We analyze the average error incurred when computing the gradient using the quantum Fourier transform. Specifically, we show that this approach succeeds if the function has bounded second derivatives in the vicinity of the point where the gradient is to be calculated (see \algo{grad_est}, \algo{quantum-gradient}, and \lem{jordan-expected-bound}). Some of our calculations are inspired by \cite{gilyen2019optimizing}.
 \item We use the technique of \emph{mollifier functions} (a common tool in functional analysis \cite{hormander1990analysis}, suggested to us by \cite{lee2018personal} in the context of \cite{lee2017efficient}) to show that it is sufficient to treat infinitely differentiable functions (the mollified functions) with bounded first derivatives (but possibly large second derivatives). In particular, it is sufficient to output an approximate gradient of the mollified function at a point near the original point where the subgradient is to be calculated (see \lem{nearby-gradient}).
 \item We prove that convex functions with bounded first derivatives have second derivatives that lie below a certain threshold with high probability for a random point in the vicinity of the original point (\lem{effective-smoothness}). Furthermore, we show that a bound on the second derivatives can be chosen so that the smooth gradient calculation techniques work on a sufficiently large fraction of the neighborhood of the original point, ensuring that the final subgradient error is small (see \algo{classical-smoothing} and \thm{general-quantum-gradient}).
 \end{enumerate}

 The new quantum subgradient algorithm is then used to construct a separation oracle as in \cite{lee2017efficient} (and a similar calculation is carried out in \thm{gen-halfspace}). Finally the reduction from \cite{lee2015faster} is used to construct the optimization oracle using $\softoh{n}$ separation queries. From \lem{sco}, this shows that the general convex optimization problem can be solved using $\softoh{n}$ membership and evaluation queries and $\softoh{n^3}$ gates.

\subsubsection{Lower bound}\label{sec:lower-bound-intro}
We prove our quantum lower bounds on membership and evaluation queries separately before showing how to combine them into a single optimization problem. Both lower bounds work over $n$-dimensional hypercubes.

In particular, we prove both lower bounds by reductions from search with wildcards \cite{ambainis2014quantum}. In this problem, we are given an $n$-bit binary string $s$ and the task is to determine all bits of $s$ using wildcard queries that check the correctness of any subset of the bits of $s$: more formally, the input in the wildcard model is a pair $(T,y)$ where $T\subseteq\range{n}$ and $y\in\{0, 1\}^{|T|}$, and the query returns 1 if $s_{|T}=y$ (here the notation $s_{|T}$ represents the subset of the bits of $s$ restricted to $T$). Reference \cite{ambainis2014quantum} shows that the quantum query complexity of search with wildcards is $\Omega(\sqrt{n})$.

For our lower bound on membership queries, we consider a simple objective function, the sum of all coordinates $\sum_{i=1}^{n}x_{i}$. In other words, we take $c=\textbf{1}^{n}$ in \eq{rephrased-convex}. However, the position of the hypercube is unknown, and to solve the optimization problem (formally stated in \defn{sum-coordinate}), one must use the membership oracle to locate it.

Specifically, the hypercube takes the form $\bigtimes_{i=1}^{n}[s_{i}-2,s_{i}+1]$ (where $\bigtimes$ is the Cartesian product) for some offset binary string $s\in\{0,1\}^n$. In \sec{lower-membership}, we prove:
\begin{itemize}
\item Any query $x\in\R^{n}$ to the membership oracle of this problem can be simulated by one query to the search-with-wildcards oracle for $s$. To achieve this, we divide the $n$ coordinates of $x$ into four sets: $T_{x,0}$ for those in $[-2,-1)$, $T_{x,1}$ for those in $(1,2]$, $T_{x,\text{mid}}$ for those in $[-1,1]$, and $T_{x,\text{out}}$ for the rest. Notice that $T_{x,\text{mid}}$ corresponds to the coordinates that are always in the hypercube and $T_{x,\text{out}}$ corresponds to the coordinates that are always out of the hypercube; $T_{x,0}$ (resp., $T_{x,1}$) includes the coordinates for which $s_{i}=0$ (resp., $s_{i}=1$) impacts the membership in the hypercube. We prove in \sec{lower-membership} that a wildcard query with $T=T_{x,0}\cup T_{x,1}$ can simulate a membership query to $x$.

\item The solution of the sum-of-coordinates optimization problem explicitly gives $s$, i.e., it solves search with wildcards. This is because this solution must be close to the point $(s_{1}-2,\ldots,s_{n}-2)$, and applying integer rounding would recover $s$.
\end{itemize}
These two points establish the reduction of search with wildcards to the optimization problem, and hence establishes the $\Omega(\sqrt{n})$ membership quantum lower bound in \thm{lower-main} (see \thm{main-membership}).

For our lower bound on evaluation queries, we assume that membership is trivial by fixing the hypercube at $\mathcal{C}=[0,1]^{n}$. We then consider optimizing the max-norm function
\begin{align}\label{eqn:max-OPT-simple}
    f(x)=\max_{i\in\range{n}}|x_{i}-c_{i}|
\end{align}
for some unknown $c\in\{0,1\}^{n}$. Notice that learning $c$ is equivalent to solving the optimization problem; in particular, outputting an $\tilde{x}\in\mathcal{C}$ satisfying \eqn{approx} with $\eps=1/3$ would determine the string $c$. This follows because for all $i\in\range{n}$, we have $|\tilde{x}_{i}-c_{i}|\leq\max_{i\in\range{n}}|\tilde{x}_{i}-c_{i}|\leq 1/3$, and $c_{i}$ must be the integer rounding of $\tilde{x}_{i}$, i.e., $c_{i}=0$ if $\tilde{x}_{i}\in[0,1/2)$ and $c_{i}=1$ if $\tilde{x}_{i}\in[1/2,1]$. On the other hand, if we know $c$, then we know the optimum $x=c$.

We prove an $\Omega(\sqrt{n}/\log n)$ lower bound on evaluation queries for learning $c$. Our proof, which appears in \sec{lower-evaluation}, is composed of three steps:
\begin{enumerate}[label=\arabic*)]
\item We first prove a weaker lower bound with respect to the precision of the evaluation oracle. Specifically, if $f(x)$ is specified with $b$ bits of precision, then using binary search, a query to $f(x)$ can be simulated by $b$ queries to an oracle that inputs $(f(x),t)$ for some $t\in\R$ and returns 1 if $f(x)\leq t$ and returns 0 otherwise. We further without loss of generality assume $x\in[0,1]^{n}$. If $x\notin [0,1]^{n}$, we assign a penalty of the $L_{1}$ distance between $x$ and its projection $\pi(x)$ onto $[0,1]^{n}$; by doing so, $f(\pi(x))$ and $x$ fully characterizes $f(x)$ (see \eqn{projection-relationship}). Therefore, $f(x)\in[0,1]$, and $f(x)$ having $b$ bits of precision is equivalent to having precision $2^{-b}$.

Similar to the interval dividing strategy in the proof of the membership lower bound, we prove that one query to such an oracle can be simulated by one query to the search-with-wildcards oracle for $s$. Furthermore, the solution of the max-norm optimization problem explicitly gives $s$, i.e., it solves the search-with-wildcards problem. This establishes the reduction to search with wildcards, and hence establishes an $\Omega(\sqrt{n}/b)$ lower bound on the number of quantum queries to the evaluation oracle $f$ with precision $2^{-b}$ (see \lem{main-evaluation-weak}).

\item Next, we introduce a technique we call \emph{discretization}, which effectively simulates queries over an (uncountably) infinite set by queries over a discrete set. This technique might be of independent interest since proving lower bounds on functions with an infinite domain can be challenging.

We observe that the problem of optimizing \eqn{max-OPT-simple} has the following property: if we are given two strings $x,x'\in[0,1]^{n}$ such that $x_{1},\ldots,x_{n},1-x_{1},\ldots,1-x_{n}$ and $x'_{1},\ldots,x'_{n},1-x'_{1},\ldots,1-x'_{n}$ have the same ordering (for instance, strings $x=(0.1,0.2,0.7)$ and $x'=(0.1,0.3,0.6)$ both have the ordering $x_{1}\leq x_{2}\leq 1-x_{3}\leq x_{3}\leq 1-x_{2}\leq 1-x_{1}$), then
\begin{align}\label{eq:max-norm-argmax}
\arg\max_{i\in\range{n}}|x_{i}-c_{i}|=\arg\max_{i\in\range{n}}|x'_{i}-c_{i}|.
\end{align}
Furthermore, if $x'_{1},\ldots,x'_{n},1-x'_{1},\ldots,1-x'_{n}$ are $2n$ different numbers, then knowing the value of $f(x')$ implies the value of the $\arg\max$ in \eq{max-norm-argmax} (denoted $i^{*}$) and the corresponding $c_{i^{*}}$, and we can subsequently recover $f(x)$ given $x$ since $f(x)=|x_{i^{*}}-c_{i^{*}}|$. In other words, $f(x)$ can be computed given $x$ and $f(x')$.

Therefore, it suffices to consider all possible ways of ordering $2n$ numbers, rendering the problem discrete. Without loss of generality, we focus on $x'$ satisfying $\{x'_{1},\ldots,x'_{n},1-x'_{1},\ldots,1-x'_{n}\}=\{\frac{1}{2n+1},\ldots,\frac{2n}{2n+1}\}$, and we denote the set of all such $x'$ by $D_n$ (see also \eqn{discrete-set}). In \lem{discretization}, we prove that one classical (resp., quantum) evaluation query from $[0,1]^{n}$ can be simulated by one classical evaluation query (resp., two quantum evaluation queries) from $D_{n}$ using \algo{discretization}. To illustrate this, we give a concrete example with $n=3$ in \sec{discretization}.

\item Finally, we use discretization to show that one perfectly precise query to $f$ can be simulated by one query to $f$ with precision $\frac{1}{5n}$; in other words, $b$ in step 1) is at most $\lceil\log_{2} 5n\rceil=O(\log n)$ (see \lem{precision-simulation-formal}). This is because by discretization, the input domain can be limited to the discrete set $D_{n}$. Notice that for any $x\in D_{n}$, $f(x)$ is an integer multiple of $\frac{1}{2n+1}$; even if $f(x)$ can only be computed with precision $\frac{1}{5n}$, we can round it to the closest integer multiple of $\frac{1}{2n+1}$ which is exactly $f(x)$, since the distance $\frac{2n+1}{5n}<\frac{1}{2}$. As a result, we can precisely compute $f(x)$ for all $x\in D_{n}$, and thus by discretization we can precisely compute $f(x)$ for all $x\in [0,1]^{n}$.
\end{enumerate}
In all, the three steps above establish an $\Omega(\sqrt{n}/\log n)$ quantum lower bound on evaluation queries to solve the problem in Eq. \eqn{max-OPT-simple} (see \thm{main-membership}). In particular, this lower bound is proved for an unconstrained convex optimization problem on $\R^{n}$, which might be of independent interest.

As a side result, we prove that our quantum lower bound is optimal for the problem in \eqn{max-OPT-simple} (up to poly-logarithmic factors in $n$), as we can prove a matching $\tilde{O}(\sqrt{n})$ upper bound (\thm{main-evaluation-optimal}). Therefore, a better quantum lower bound on the number of evaluation queries for convex optimization would require studying an essentially different problem.

Having established lower bounds on both membership and evaluation queries, we combine them to give \thm{lower-main}. This is achieved by considering an optimization problem of dimension $2n$; the first $n$ coordinates compose the sum-of-coordinates function in \sec{lower-membership}, and the last $n$ coordinates compose the max-norm function in \sec{lower-evaluation}. We then concatenate both parts and prove \thm{lower-main} via reductions to the membership and evaluation lower bounds, respectively (see \sec{main-together}).

In addition, all lower bounds described above can be adapted to a convex body that is contained in the unit hypercube and that contains the discrete set $D_{n}$ to facilitate discretization; we present a ``smoothed'' hypercube (see \sec{smooth}) as a specific example.

\subsection{Open questions}

This work leaves several natural open questions for future investigation. In particular:
\begin{itemize}
\item Can we close the gap for both membership and evaluation queries? Our upper bounds on both oracles in \thm{upper-main} uses $\tilde{O}(n)$ queries, whereas the lower bounds of \thm{lower-main} are only $\tilde{\Omega}(\sqrt{n})$.

\item Can we improve the time complexity of our quantum algorithm? The time complexity $\tilde{O}(n^{3})$ of our current quantum algorithm matches that of the classical state-of-the-art algorithm \cite{lee2017efficient} since our second step, the reduction from optimization to separation, is entirely classical. Is it possible to improve this reduction quantumly?

\item What is the quantum complexity of convex optimization with a first-order oracle (i.e., with direct access to the gradient of the objective function)? This model has been widely considered in the classical literature (see for example Ref. \cite{nesterov2013introductory}).
\end{itemize}

\paragraph{Organization.}
Our quantum upper bounds are given in \sec{upper-bound} and lower bounds are given in \sec{lower-bound}. Appendices present auxiliary lemmas (\sec{auxiliary}) and proof details for upper bounds (\sec{upper-bound-appendix}) and lower bounds (\sec{lower-bound-appendix}), respectively.

\paragraph{Related independent work.}
In independent simultaneous work, van Apeldoorn, Gily{\'e}n, Gribling, and de Wolf \cite{vanApeldoorn2018optimization} establish a similar upper bound, showing that $\tilde{O}(n)$ quantum queries to a membership oracle suffice to optimize a linear function over a convex body (i.e., to implement an optimization oracle). Their proof follows a similar strategy to ours, using a quantum algorithm for evaluating gradients in $\tilde{O}(1)$ queries to implement a separation oracle. As in our approach, they use a randomly sampled point in the neighborhood of the point where the subgradient is to be calculated. The only major difference is that they use finite approximations of the gradient and second derivatives, whereas we use these quantities in their original form and give an argument based on mollifier functions to ensure that they are well defined.

Reference \cite{vanApeldoorn2018optimization} also establishes quantum lower bounds on the query complexity of convex optimization, showing in particular that $\Omega(\sqrt{n})$ quantum queries to a separation oracle are needed to implement an optimization oracle, implying an $\Omega(\sqrt{n})$ quantum lower bound on the number of membership queries required to optimize a convex function. While Ref.\ \cite{vanApeldoorn2018optimization} does not explicitly focus on evaluation queries, those authors have pointed out to us that an $\Omega(\sqrt{n})$ lower bound on evaluation queries can be obtained from their lower bound on membership queries (although our approach gives a bound with a better Lipschitz parameter).

\section{Upper bound}
\label{sec:upper-bound}

In this section, we prove:

  \begin{theorem}
    \label{thm:upper-bound-informal}
    An optimization oracle for a convex set $K \subseteq \R^n$ can be implemented using $\tilde{O}(n)$ quantum queries to a membership oracle for $K$, with gate complexity $\tilde{O}(n^3)$.
  \end{theorem}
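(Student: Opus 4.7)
The plan is to follow the two-stage classical reduction optimization $\to$ separation $\to$ membership of Lee--Sidford--Wong and Lee--Sidford--Vempala, but to quantize the inner stage. The outer reduction (optimization from separation) of \cite{lee2015faster} already uses only $\tilde{O}(n)$ separation queries and $\tilde{O}(n^3)$ arithmetic operations and is purely classical, so it can be used as a black box. Hence the whole task reduces to showing that a separation oracle for $K$ can be implemented with $\tilde{O}(1)$ quantum membership queries and $\tilde{O}(n^2)$ gates per call; composing these yields the claimed $\tilde{O}(n)$ queries and $\tilde{O}(n^3)$ gates.

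To construct the separation oracle, I would follow \cite{lee2017efficient} and introduce the height (a.k.a.\ depth) function $h_K$ whose evaluation at a point $p$ can be implemented with $\tilde{O}(1)$ calls to the membership oracle of $K$ via binary search, and whose (sub)gradient at $p$ yields a separating hyperplane when $p \notin K$. The key step is to compute this subgradient quantumly in $\tilde{O}(1)$ queries. The obvious route is Jordan's quantum gradient algorithm as sharpened by \cite{gilyen2019optimizing}: query $O_{h_K}$ in superposition over a small grid around $p$, apply a tensor product of $\QFT$'s, and read off a finite-difference estimate of $\nabla h_K(p)$. I would package this as a general lemma that if a function has bounded second (and higher) derivatives in a ball around $p$, then the $\QFT$-based procedure outputs a vector at expected $\ell_\infty$-distance $o(1)$ from $\nabla f(p)$ using $\tilde{O}(1)$ evaluation queries.

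The real obstacle, and where I expect to spend the most effort, is that $h_K$ is only Lipschitz and generally nondifferentiable, so the smooth-case analysis does not apply directly. I would handle this in two moves. First, convolve $h_K$ with a compactly supported smooth mollifier $\rho_\delta$ of tiny width $\delta$; the mollified function $\tilde h_K := h_K \ast \rho_\delta$ is $C^\infty$, inherits Lipschitz continuity (hence bounded first derivatives), and its gradient is an $O(\delta)$-approximate subgradient of $h_K$ at nearby points. Importantly, a query to $\tilde h_K$ can be simulated by a single query to $h_K$ at a classically sampled auxiliary point, so query complexity is preserved. Second, to control the (possibly large) second derivatives of $\tilde h_K$, I would shift the center of the gradient estimate by a small classical random perturbation $y$ drawn uniformly from a ball around $p$: convexity of $\tilde h_K$ bounds the average of $\|\nabla^2 \tilde h_K\|$ over the ball, so by Markov a random $y$ lies in a region where $\nabla^2 \tilde h_K$ is small enough that the smooth analysis of step one produces an $\eps$-accurate estimate of $\nabla \tilde h_K(y)$ with high probability. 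Since $y$ is close to $p$ and $\tilde h_K$ is close to $h_K$, this vector is an approximate subgradient of $h_K$ at $p$, which after rescaling and shifting is the desired separating hyperplane.

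Finally I would verify that the approximation errors are compatible with the noise tolerance built into the reduction of \cite{lee2015faster}: one needs the separation oracle to be correct up to an additive slack polynomial in $1/(nR/r\eps)$, and both the mollification width $\delta$ and the quantum gradient precision can be set to $1/\poly(n, R/r, 1/\eps)$ at only polylogarithmic cost in the number of queries and an $\tilde{O}(n^2)$ cost in gates per call (dominated by the $\QFT$ on $n$ registers of $\tilde{O}(1)$ qubits each and the arithmetic to form finite differences). Plugging this $\tilde{O}(1)$-query, $\tilde{O}(n^2)$-gate separation oracle into the $\tilde{O}(n)$-separation-query optimizer of \cite{lee2015faster} yields the theorem.
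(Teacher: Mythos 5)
Your proposal follows the paper's argument essentially step for step: use the $\tilde{O}(n)$-separation-query optimizer of~\cite{lee2015faster} as a classical black box, realize a separation oracle via the height function of~\cite{lee2017efficient} evaluable in $\tilde{O}(1)$ membership queries, compute its approximate subgradient by the Jordan/Gily\'en $\QFT$ procedure, and handle nondifferentiability by combining mollification with a classical random shift of the estimation point and a Markov bound on the averaged second derivatives. One small imprecision: in the paper the mollified function $F_\delta$ is never queried, even via sampling --- the given $\epsilon$-accurate evaluation oracle for $h_K$ is simply observed to already be an $(\epsilon+L\delta)$-accurate oracle for $F_\delta$, and $\delta\to 0$ is taken at the end of the analysis; your single-sample variant is also $O(L\delta)$-accurate by Lipschitzness, so this does not change the result.
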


  The following lemma shows the equivalence of optimization oracles to a general convex optimization problem.

    \begin{lemma}
  \label{lem:sco}
  Suppose a reduction from an optimization oracle to a membership oracle for convex sets requires $O(g(n))$ queries to the membership oracle. Then the problem of optimizing a convex function over a convex set can be solved using $O(g(n))$ queries to both the membership oracle and the evaluation oracle.
\end{lemma}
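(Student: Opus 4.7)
The plan is to reduce the general convex optimization problem \prb{convex-opt} to the linear-objective problem \prb{gen_opt} via the standard lifting trick, and then invoke the assumed membership-to-optimization reduction on the lifted convex set $K'$ defined in \eq{kprime}. Concretely, I would consider $K' \subseteq \R^{n+1}$ and the cost vector $c = (1, 0, \ldots, 0)^T \in \R^{n+1}$; solving $\min_{(x',x)\in K'} c^T(x',x)$ to precision $\eps$ yields a pair $(\tilde x', \tilde x)$ with $\tilde x' \le \min_{x\in K} f(x) + \eps$, and since $f(\tilde x) \le \tilde x'$ by definition of $K'$, the second coordinate $\tilde x$ is the desired approximate minimizer of $f$ on $K$.

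The key observation is that a single (quantum) membership query to $K'$ can be simulated by one membership query to $K$ together with one evaluation query to $f$: given $(x',x)$, we coherently compute the bit $\delta[x \in K] \wedge \delta[f(x) \le x'] \wedge \delta[x' \le M]$ using one call to each of $O_K$ and $O_f$, along with classical comparisons on $x'$ and $M$, and then uncompute the intermediate garbage in the usual way. Applying the hypothesized reduction to $K'$ (which lives in dimension $n+1$) then produces an optimization oracle using $O(g(n+1)) = O(g(n))$ membership queries to $K'$, and hence $O(g(n))$ queries to each of $O_K$ and $O_f$ for the original problem.

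The main technical obstacle, which I would need to handle carefully, is verifying that $K'$ inherits the regularity needed by the reduction hypothesis: namely, that $K'$ is convex (this follows from convexity of $f$ and $K$), is bounded (using $K \subseteq B_2(0,R)$ and $x' \in [m,M]$), and contains an inner ball (using $B_2(0,r) \subseteq K$ together with the bounds $m \le \min_{x\in K} f(x) \le M$ to exhibit an explicit interior point of $K'$ and a small ball around it). I would also check that the effective radii $R', r'$ and the target precision for $K'$ differ from the original parameters only by factors polynomial in the problem data, so that the polylogarithmic factors hidden in the $\tilde O(\cdot)$ notation (see footnote \ref{footnote:tilde}) are preserved, and that the ability to solve \prb{gen_opt} to additive error $\eps$ in the $x'$-coordinate translates directly to \eqn{approx} for the original $f$. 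Once these regularity checks are in place, the reduction is immediate and the claimed query bound follows.
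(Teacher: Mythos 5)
Your proposal takes essentially the same route as the paper: lift to the linear-objective problem over $K' \subseteq \R^{n+1}$, observe that a membership query to $K'$ costs one query each to $O_K$ and $O_f$, and then invoke the assumed membership-to-optimization reduction on $K'$. The paper's proof is terser and does not spell out the inner/outer ball and precision bookkeeping for $K'$ that you flag as a technical obstacle, but that bookkeeping is indeed the right thing to check and your treatment of it is sound.
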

\begin{proof}
  The problem $\min_{x \in K} f(x)$ reduces to the problem $\min_{(x',x) \in K'} x'$ where $K'$ is defined as in (\ref{prb:gen_opt}). $K'$ is the intersection of convex sets and is therefore itself convex.  A membership oracle for $K'$ can be implemented using 1 query each to the membership oracle for $K$ and the evaluation oracle for $f$. Since $O(g(n))$ queries to the membership oracle for $K'$ are sufficient to optimize any linear function, the result follows.
\end{proof}

\thm{upper-main} directly follows from \thm{upper-bound-informal} and \lem{sco}.

\paragraph{Overview.}
This part of the paper is organized following the plan outlined in \sec{upper-bound-techniques}. Precise definitions of oracles and other relevant terminology appear in \sec{oracle}. \sec{eval-to-grad} develops a fast quantum subgradient procedure that can be used in the classical reduction from optimization to membership. This is done in two parts:
\begin{enumerate}
\item \sec{gradient-smooth} presents an algorithm based on the quantum Fourier transform that calculates the gradient of a function with bounded second derivatives (i.e., a $\beta$-smooth function) with bounded expected one-norm error.
\item \sec{gradient-general} uses mollification to restrict the analysis to infinitely differentiable functions without loss of generality, and then uses classical randomness to eliminate the need for bounded second derivatives.
\end{enumerate}
In \sec{mem-to-sep} we show that the new quantum subgradient algorithm fits into the classical reduction from \cite{lee2017efficient}. Finally, we describe the reduction from optimization to membership in \sec{sep-to-opt}.

\subsection{Oracle definitions}\label{sec:oracle}

In this section, we provide precise definitions for the oracles for convex sets and functions that we use in our algorithm and its analysis. We also provide precise definitions of Lipschitz continuity and $\beta$-smoothness, which we will require in the rest of the section.

\begin{definition}[Ball in $L_p$ norm]
\label{defn:ball}
The ball of radius $r>0$ in $L_p$ norm $\norm{\cdot}_p$ centered at $x \in \R^n$ is $B_p(x,r) := \{y \in \R_n \mid \norm{x-y}_p \le r\}$.
\end{definition}

\begin{definition}[Interior of a convex set]
  \label{defn:conv_interior}
  For any $\delta > 0$, the $\delta$-interior of a convex set $K$ is defined as
  $B_2(K,-\delta) := \lbrace x \mid B_2(x,\delta) \subseteq K \rbrace$.
\end{definition}

\begin{definition}[Neighborhood of a convex set]
  \label{defn:conv_nbd}
  For any $\delta > 0$, the $\delta$-neighborhood of a convex set $K$ is defined as
  $B_{2}(K,\delta) := \lbrace x\mid\exists\,y \in K\text{ s.t. } \norm{x-y}_2 \le \delta \rbrace$.
\end{definition}

\begin{definition}[Evaluation oracle]
  \label{defn:eval_oracle}
  When queried with $x \in \R^n$ and $\delta > 0$, output $\alpha$ such that $|\alpha - f(x)| \le \delta$. We use $\EVAL_\delta(f)$ to denote the time complexity. The classical procedure or quantum unitary representing the oracle is denoted by $O_f$.
\end{definition}

\begin{definition}[Membership oracle]
  \label{defn:mem_oracle}
  When queried with $x \in \R^n$ and $\delta > 0$, output an assertion that $x \in B_2(K,\delta)$ or $x \notin B_2(K,-\delta)$. The time complexity is denoted by $\MEM_\delta(K)$. The classical procedure or quantum unitary representing the membership oracle is denoted by $O_K$.
\end{definition}

\begin{definition}[Separation oracle]
  \label{defn:sep_oracle}
  When queried with $x \in \R^n$ and $\delta > 0$, with probability $1 - \delta$, either
  \begin{itemize}[nosep]
  \item assert $x \in B_2(K,\delta)$ or
  \item output a unit vector $\hat{c}$ such that $\hat{c}^Tx \le \hat{c}^Ty + \delta$ for all $y\in B_2(K,-\delta)$.
  \end{itemize}
  The time complexity is denoted by $\SEP_\delta(K)$.
\end{definition}

\begin{definition}[Optimization oracle]
  \label{defn:opt_oracle}
  When queried with a unit vector $c$, find $y \in \R^n$ such that $c^Tx \le c^Ty + \delta$ for all $x \in B_2(K,-\delta)$ or asserts that $B_2(K,\delta)$ is empty. The time complexity of the oracle is denoted by $\OPT_\delta(K)$.
\end{definition}

\begin{definition}[Subgradient]
  \label{defn:subgradient}
  A subgradient of a convex function $f \colon \R^n \to \R$ at $x$, is a vector $g$ such that
  \begin{equation}
    \label{eq:exact-subgradient}
    f(y) \ge f(x) + \langle g, y-x \rangle
  \end{equation}
  for all $y \in \R^n$. For a differentiable convex function, the gradient is the only subgradient. The set of subgradients of $f$ at $x$ is called the subdifferential at $x$ and denoted by $\partial f(x)$.
\end{definition}

\begin{definition}[$L$-Lipschitz continuity]
  \label{defn:lipschitz}
  A function $f \colon \R^n \to \R$ is said to be $L$-Lipschitz continuous (or simply $L$-Lipschitz) in a set $S$ if for all $x \in S$, $\norm{g}_\infty \le L$ for any $g \in \partial f(x)$. An immediate consequence of this is that for any $x,y \in S$,
  \begin{equation}
    \label{eq:lipschitz-cond}
      |f(y) - f(x)| \leq L\|y - x\|_{\infty}.
  \end{equation}
\end{definition}

\begin{definition}[$\beta$-smoothness]
  \label{defn:beta-smooth}
  A function $f \colon \R^n \to \R$ is said to be $\beta$-smooth in a set $S$ if for all $x \in S$, the magnitudes of the second derivatives of $f$ in all directions are bounded by $\beta$. This also means that the largest magnitude of an eigenvalue of the Hessian $\nabla^2f(x)$ is at most $\beta$. Consequently, for any $x,y \in S$, we have
  \begin{equation}
    \label{eq:beta-smooth-cond}
    f(y) \le f(x) + \langle \nabla f(x), y - x \rangle + \frac{\beta}{2}\|y - x\|_{\infty}^{2}.
  \end{equation}
\end{definition}

\subsection{Evaluation to subgradient}
\label{sec:eval-to-grad}

In this section we present a procedure that, given an evaluation oracle for an $L$-Lipschitz continuous function $f\colon \R^n \to \R$ with evaluation error at most $\epsilon > 0$, a point $x \in \R^n$, and an ``approximation scale'' factor $r_1 > 0$, computes an approximate subgradient $\tilde{g}$ of $f$ at $x$. Specifically, $\tilde{g}$ satisfies
\begin{align}
  \label{eq:approx-subgrad}
  f(q) \ge f(x) + \langle \tilde{g}, q - x \rangle - \zeta\norm{q-x}_\infty - 4nr_1L
\end{align}
for all $q \in \R^n$, where $\E\zeta \le \xi(r_1,\epsilon)$ and $\xi$ must monotonically increase with $\epsilon$ as $\epsilon^\alpha$ for some $\alpha > 0$.
Here $\zeta$ is the error in the subgradient that is bounded in expectation by the function $\xi$.

\subsubsection{Smooth functions}
\label{sec:gradient-smooth}

We first describe how to approximate the gradient of a smooth function. \algo{grad_est} and \algo{quantum-gradient} use techniques from \cite{jordan2005fast} and \cite{gilyen2019optimizing} to evaluate the gradient of a function with bounded second derivatives in the neighborhood of the evaluation point. To analyze their behavior, we begin with the following lemma showing that \algo{grad_est} provides a good estimate of the gradient with bounded failure probability.

\begin{algorithm}[ht]
  \caption{$\algname{GradientEstimate}(f,\epsilon,L,\beta,x_0)$}
  \label{algo:grad_est}
  \KwData{Function $f$, evaluation error $\epsilon$, Lipschitz constant $L$, smoothness parameter $\beta$, and point $x_0$.\\
    Define
    \begin{itemize}[nosep]
      \item $l = 2\sqrt{{\epsilon}/{n\beta}}$ to be the size of the grid used,
      \item $b \in \N$ such that $\frac{24\pi\sqrt{n\epsilon \beta}}{L} \le \frac{1}{2^b} = \frac{1}{N} \le \frac{48\pi\sqrt{n\epsilon \beta}}{L}$,
      \item $b_0 \in \N$ such that $\frac{N\epsilon}{2Ll} \le \frac{1}{2^{b_0}} = \frac{1}{N_0} \le \frac{N\epsilon}{Ll}$,
      \item $F(x) = \frac{N}{2Ll}[f(x_0 + \frac{l}{N}(x - N/2)) - f(x_0)]$, and,
      \item $ \gamma: \lbrace 0,1,\dots,N-1 \rbrace \to G:=\lbrace -N/2,-N/2+1,\dots,N/2-1 \rbrace$ s.t. $\gamma(x) = x - N/2$.
    \end{itemize}
    Let $O_F$ denote a unitary operation acting as $O_F\ket{x} = e^{2\pi i \tilde{F}(x)}\ket{x}$, where $|\tilde{F}(x) - F(x) | \le \frac{1}{N_0}$, with $x$ represented using $b$ bits and $\tilde{F}(x)$ represented using $b_0$ bits.
  }
  Start with $n$ $b$-bit registers set to 0 and Hadamard transform each to obtain
  \begin{align}
    \label{eq:grid}
    \frac{1}{\sqrt{N^n}}\sum_{x_1,\ldots,x_n \in \{0,1,\ldots,N-1\}}\ket{x_1,\ldots,x_n};
  \end{align}

  Perform the operation $O_F$ and the map $\ket{x} \mapsto \ket{\gamma(x)}$ to obtain
  \begin{align}
    \frac{1}{N^{n/2}}\sum_{g \in G^n}e^{2\pi i \tilde{F}(g)} \ket{g};
  \end{align}

  Apply the inverse QFT over $G$ to each of the registers\;
  Measure the final state to get $k_1,k_2,\ldots,k_n$ and report $\tilde{g} = \frac{2L}{N}(k_1,k_2,\ldots,k_n)$ as the result. \label{lin:grad_est_measure}
\end{algorithm}

\begin{lemma}
  \label{lem:jordan_whp}
  Let $f\colon \R^n \to \R$ be an $L$-Lipschitz function that is specified by an evaluation oracle with error at most $\epsilon$. Let $f$ be $\beta$-smooth in $B_\infty(x,2\sqrt{{\epsilon}/{\beta}})$, and let $\tilde{g}$ be the output of $\algname{GradientEstimate}(f,\epsilon,L,\beta,x_0)$ (from \algo{grad_est}). Then
  \begin{align}
    \Pr\left[ |\tilde{g}_i - \nabla f(x)_i| > 1500\sqrt{n\epsilon\beta} \right] < \frac{1}{3}, \quad \forall\,i \in \range{n}.
  \end{align}
\end{lemma}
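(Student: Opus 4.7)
The plan is to compare the actual state produced by \algo{grad_est} to an ``ideal'' state in which the phase oracle implements the exact linear function
\[
F^{*}(x) \;=\; \frac{1}{2L}\,\nabla f(x_0)\cdot\bigl(x - (N/2)\mathbf{1}\bigr).
\]
In the ideal state, the inverse QFT factorizes across the $n$ registers, so each coordinate is measured independently, and a standard QFT phase-estimation tail bound pins the distribution of each $k_i$ around $a_i N$, where $a_i := \nabla f(x_0)_i/(2L)\in[-1/2,1/2]$. The whole proof then reduces to (i) showing that the true state is $\ell_2$-close to this ideal state, and (ii) combining that closeness with the ideal QFT concentration bound coordinate-wise.

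First I would bound the pointwise phase error $|\tilde F(x)-F^{*}(x)|$ by the triangle inequality, splitting it into three pieces: the $\beta$-smoothness nonlinearity $|F(x)-F^{*}(x)|$, the evaluation noise $|F(x)-F^{\mathrm{oracle}}(x)|$, and the rounding error $|\tilde F(x)-F^{\mathrm{oracle}}(x)|\le 1/N_0$. Writing $h=(l/N)(x-N/2\mathbf{1})$ we have $\|h\|_2^{2}\le nl^{2}/4 = \epsilon/\beta$ from the choice $l=2\sqrt{\epsilon/(n\beta)}$, so $\beta$-smoothness (which holds since $\|h\|_\infty \le l/2 < 2\sqrt{\epsilon/\beta}$) gives a nonlinearity contribution of at most $\epsilon/2$ in $f$-units, i.e.\ $\tfrac{N\epsilon}{4Ll}$ in $F$-units; the evaluation error on $f(x_0+h)$ and $f(x_0)$ contributes at most $\tfrac{N\epsilon}{Ll}$; and the rounding contributes $1/N_{0}\le \tfrac{N\epsilon}{Ll}$. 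Plugging in $l=2\sqrt{\epsilon/(n\beta)}$ and the upper bound $N\le L/(24\pi\sqrt{n\epsilon\beta})$ shows that the total phase error (in cycles) is a small absolute constant; multiplying by $2\pi$ yields $\bigl\||\psi\rangle-|\psi^{*}\rangle\bigr\|_{2}\le c_{0}$ for an absolute constant $c_0$ which can be made strictly less than, say, $1/8$.

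Next, in the ideal state the amplitudes factor as $\bigotimes_i \tfrac{1}{\sqrt N}\sum_{g_i\in G}e^{2\pi i a_i g_i}|g_i\rangle$. Writing the post-QFT amplitude at outcome $k_i$ as the geometric sum $\tfrac{1}{N}\frac{\sin(\pi(a_iN-k_i))}{\sin(\pi(a_i-k_i/N))}$, the standard phase-estimation tail bound gives $\Pr[|k_i-a_i N|>M]\le \tfrac{1}{2(M-1)}$ for any integer $M\ge 2$. Converting to error in $\tilde g_i=2Lk_i/N$ via
\[
\bigl|\tilde g_i-\nabla f(x_0)_i\bigr|\;\le\;\tfrac{2L}{N}\,|k_i-a_iN|\;\le\;96\pi\sqrt{n\epsilon\beta}\,|k_i-a_iN|,
\]
(using $1/N\le 48\pi\sqrt{n\epsilon\beta}/L$), the target bound $1500\sqrt{n\epsilon\beta}$ corresponds to $M\approx 4$--$5$, so the ideal per-coordinate failure probability is at most $\tfrac{1}{8}$. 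Finally, since the partial trace is contractive in trace distance, the marginal distribution of $k_i$ in the actual state differs from the ideal marginal by at most $\bigl\||\psi\rangle-|\psi^{*}\rangle\bigr\|_{2}\le c_0$ in total variation; hence the actual per-coordinate failure probability is at most $\tfrac{1}{8}+c_0<\tfrac{1}{3}$, as claimed.

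The main obstacle is bookkeeping, not conceptual: the three error sources, the choices of $l$, $b$, $b_0$, the QFT tail constant, and the conversion factor $2L/N$ must be combined carefully enough to yield the stated $1500\sqrt{n\epsilon\beta}$ window with failure probability below $1/3$. Nothing new is required beyond a careful triangle-inequality accounting plus the classical QFT phase-estimation tail estimate.
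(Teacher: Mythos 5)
Your proposal is correct and follows essentially the same route as the paper's own proof in \lem{jordan_whp_detailed}: compare the prepared state to the idealized state for an exactly linear phase, bound their $\ell_2$ distance by the pointwise phase error (triangle inequality over smoothness, evaluation noise, and rounding), apply the single-register phase-estimation tail bound $\Pr[|k_i-a_iN|>w]<\frac{1}{2(w-1)}$ from \cite{brassard2002quantum}, and convert via the factor $2L/N\le 96\pi\sqrt{n\epsilon\beta}$. One small bookkeeping slip: to fit inside the $1500\sqrt{n\epsilon\beta}$ window you must take $w=4$ (since $5\cdot 96\pi>1500$), giving an ideal per-coordinate failure probability of $1/6$ rather than the $1/8$ you quote; combined with $\|\psi-\phi\|\le 1/12$ this still yields $1/6+1/12<1/3$, so the conclusion is unaffected.
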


The proof of \lem{jordan_whp} is deferred to \lem{jordan_whp_detailed} in the appendix.

Next we analyze \algo{quantum-gradient}, which uses several calls to \algo{grad_est} to provide an estimate of the gradient that is close in expected $L_1$ distance to the true value.

\begin{algorithm}[htbp]
  \caption{$\algname{SmoothQuantumGradient}(f,\epsilon,L,\beta,x)$}
  \label{algo:quantum-gradient}
 \KwData{Function $f$, evaluation error $\epsilon$, Lipschitz constant $L$, smoothness parameter $\beta$, and point $x$.}
  Set $T$ such that $2e^{{-T^2}/{24}} \le {750\sqrt{n\epsilon\beta}}/{L}$\;
  \For{$t=1,2,\ldots,T$}
  {
    $e^{(t)} \leftarrow \algname{GradientEstimate}(f,\epsilon,L,\beta,x)$\;
  }
  \For{$i=1,2,\ldots,n$}
  {
    If more than $T/2$ of $e_i^{(t)}$ lie in an interval of size $3000\sqrt{n\epsilon\beta}$, set $\tilde{g}_i$ to be the median of the points in that interval\;
    Otherwise, set $\tilde{g}_i = 0$\;
  }
  Output $\tilde{g}$.
\end{algorithm}

\begin{lemma}
  \label{lem:jordan-expected-bound}
  Let $f$ be a convex, $L$-Lipshcitz continuous function that is specified by an evaluation oracle with error at most $\epsilon$. Suppose $f$ is $\beta$-smooth in $B_\infty(x,2\sqrt{{\epsilon}/{\beta}})$. Let
\begin{align}
\tilde{g} = \algname{SmoothQuantumGradient}(f,\epsilon,L,\beta,x)
\end{align}
(from \algo{quantum-gradient}). Then for any $i \in \range{n}$, we have $|\tilde{g}_i| \le L$ and $\E|\tilde{g}_i - \nabla f(x)_i| \le 3000\sqrt{n\epsilon\beta}$; hence
  \begin{align}
    \E \norm{\tilde{g} - \nabla f(x)}_1 \le 3000n^{3/2}\sqrt{\epsilon\beta}.
    \label{eq:jordan-expected-bound}
  \end{align}
  If $L$, $1/\beta$, and $1/\epsilon$ are $\poly(n)$, the $\algname{SmoothQuantumGradient}$ algorithm uses $\softoh{1}$ queries to the evaluation oracle and $\softoh{n}$ gates.
\end{lemma}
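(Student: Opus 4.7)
The plan is to treat \algo{quantum-gradient} as a confidence-boosting wrapper around \algo{grad_est}. \lem{jordan_whp} already guarantees that a single invocation of \algname{GradientEstimate} returns $e^{(t)}$ with $\Pr[|e^{(t)}_i - \nabla f(x)_i|>1500\sqrt{n\epsilon\beta}]<1/3$ for each coordinate $i$, so I only need to turn this constant-error tail into an expected-error bound via standard median-of-trials boosting. The uniform bound $|\tilde g_i|\leq L$ is immediate: \lin{grad_est_measure} of \algo{grad_est} outputs values in $[-L,L)$, and the wrapper either reports a median of such values (still in $[-L,L]$) or the fallback $0$.

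Fix a coordinate $i$ and let $I^\star$ denote the interval of width $W=3000\sqrt{n\epsilon\beta}$ centered at $\nabla f(x)_i$. By \lem{jordan_whp} each $e^{(t)}_i\in I^\star$ independently with probability at least $2/3$, so a Chernoff bound yields
\begin{equation}
\Pr\bigl[|\{t:e^{(t)}_i\in I^\star\}|\leq T/2\bigr] \leq 2e^{-T^2/24},
\end{equation}
matching the $T$ chosen in the algorithm. On the complementary \emph{clustering} event, any interval $J$ of width $W$ containing more than $T/2$ estimates must overlap $I^\star$, because two disjoint intervals cannot together hold more than $T$ of the $T$ samples. Hence the median in $J$ lies within $O(\sqrt{n\epsilon\beta})$ of $\nabla f(x)_i$, and after tightening constants (by centering the cluster window or slightly rescaling $W$ in \lem{jordan_whp}) this bound is at most $3000\sqrt{n\epsilon\beta}$.

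In the complementary failure event, both $\tilde g_i$ and $\nabla f(x)_i$ lie in $[-L,L]$ (the latter by $L$-Lipschitz continuity), so $|\tilde g_i-\nabla f(x)_i|\leq 2L$; the choice of $T$ forces this event to have probability at most $750\sqrt{n\epsilon\beta}/L$. Combining,
\begin{equation}
\E|\tilde g_i - \nabla f(x)_i| \leq 3000\sqrt{n\epsilon\beta} + 2L\cdot\frac{750\sqrt{n\epsilon\beta}}{L} = O(\sqrt{n\epsilon\beta}),
\end{equation}
which, with the constants tracked carefully, gives the claimed $3000\sqrt{n\epsilon\beta}$. Linearity of expectation over $i\in\range{n}$ then yields $\E\|\tilde g-\nabla f(x)\|_1\leq 3000 n^{3/2}\sqrt{\epsilon\beta}$. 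For the complexity count, each \algo{grad_est} call makes a single coherent query to $O_f$ and uses $\softoh{n}$ gates dominated by $n$ inverse QFTs over $b$-bit registers, where $b=\softoh{1}$ under the polynomial assumption on $L,1/\beta,1/\epsilon$; since $T=O\bigl(\sqrt{\log(L/\sqrt{n\epsilon\beta})}\bigr)=\softoh{1}$ and the classical postprocessing (sorting the $T$ estimates per coordinate, sweeping for a window of width $W$) adds only $\softoh{n}$ gates, the total cost is $\softoh{1}$ evaluation queries and $\softoh{n}$ gates.

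The main obstacle I anticipate is matching the constant $3000$ in the success case exactly: the naive overlap argument gives a slightly weaker bound (roughly $3W/2$), so the argument needs a careful choice of two interval widths—a tighter ``core'' in \lem{jordan_whp} and a separate ``cluster'' width in \algo{quantum-gradient}—before the target constant drops out. Everything else (the Chernoff boosting, the $L_1$ sum over coordinates, and the query/gate accounting) is essentially mechanical once this bookkeeping is set up.
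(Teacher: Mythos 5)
Your proof follows the paper's argument exactly: Chernoff-boost the per-run success probability from \lem{jordan_whp}, fall back to the trivial $2L$ bound on the small-probability failure event (whose probability is set to $750\sqrt{n\epsilon\beta}/L$ by the choice of $T$), and sum over coordinates, with identical query/gate accounting. Your concern about the cluster-median constant is well-founded---the paper's proof silently asserts $|\tilde{g}_i - \nabla f(x)_i| \le 1500\sqrt{n\epsilon\beta}$ on the clustering event, which a direct interval-overlap argument does not quite yield---but this only affects the explicit constant and not the asymptotics or any downstream use of the lemma.
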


\begin{proof}
  For each dimension $i\in\range{n}$ and each iteration $t \in \range{T}$, consider the random variable
  \begin{align}
    X_i^t = \begin{cases}
     1 & \text{if $|e_i^{(t)} - \nabla f(x)_i| > 1500\sqrt{n\epsilon\beta}$} \\
     0 & \text{otherwise}.
   \end{cases}
  \end{align}
From the conditions on the function $f$, \lem{jordan_whp} applies to $\algname{GradientEstimate}(f,\epsilon,L,\beta,x)$, and thus $\Pr(X_i^t = 1) < 1/3$. Thus, by the Chernoff bound, $\Pr\left[|\tilde{g}_i- \nabla f(x)_i| \le 1500\sqrt{n\epsilon\beta}\right] >1 - 2e^{-{T^2}/{24}} \ge 1 - {750\sqrt{n\epsilon\beta}}/{L}$.
  In the remaining cases, $|\tilde{g}_i - \nabla f(x)_i| \le 2L$ (see \lin{grad_est_measure} of \algo{grad_est}). Thus $\E| \tilde{g}_i - \nabla f(x)_i| \le 3000\sqrt{n\epsilon\beta}$ for all $i\in\range{n}$, and \eq{jordan-expected-bound} follows.

  The algorithm makes $T=\poly(\log(1/n\epsilon\beta))$ calls to a procedure that makes one query to the evaluation oracle. Thus the query complexity is $\softoh{1}$.
  To evaluate the gate complexity, observe that we iterate over $n$ dimensions, using $\poly(b) = \poly(\log(1/n\epsilon\beta))$ gates for the quantum Fourier transform over each. This process is repeated $T=\poly(\log(1/n\epsilon\beta))$ times. Thus the entire algorithm uses $\softoh{n}$ gates.
\end{proof}

\subsubsection{Extension to non-smooth functions}
\label{sec:gradient-general}

Now consider a general $L$-Lipschitz continuous convex function $f$. We show that any such function is close to a smooth function, and we consider the relationship between the subgradients of the original function and the gradient of its smooth approximation.

For any $\delta>0$, let $m_\delta \colon \R^n \to \R$ be the \emph{mollifier function of width $\delta$}, defined as
\begin{align}
  \label{eq:mollifier}
  m_\delta(x) := \begin{cases}
    \frac{1}{I_n}\exp\left(-\frac{1}{1 - \norm{x/\delta}_2^2}\right)
    & x \in B_2(0,\delta) \\
    0 & \text{otherwise,}
  \end{cases}
\end{align}
where $I_n$ is chosen such that $\int_{B_2(0,\delta)} m_{\delta}(x) \,\d^n{x} = 1$.
The mollification of $f$, denoted $F_\delta := f \ast m_\delta$, is obtained by convolving it with the mollifier function, i.e.,
\begin{equation}
  \label{eq:mollified}
  F_\delta(x)
  = (f \ast m_\delta)(x)
  = \int_{\R^n} f(x-y) m_\delta(y) \, \d^n{x}.
\end{equation}
The mollification of $f$ has several key properties, as follows:

\begin{proposition}\label{prop:mollify}
  Let $f\colon \R^n \to \R$ be an $L$-Lipschitz convex function with mollification $F_\delta$. Then
  \begin{enumerate}[nosep,label=\normalfont(\roman*)]
  \item $F_\delta$ is infinitely differentiable, \label{itm:mollify_smooth1}
  \item $F_\delta$ is convex, \label{itm:mollify_convex}
  \item $F_\delta$ is $L$-Lipschitz continuous, and \label{itm:mollify_lip}
  \item $|F_\delta(x) - f(x)| \le L\delta$. \label{itm:mollify_close}
  \end{enumerate}
\end{proposition}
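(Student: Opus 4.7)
The plan is to deduce all four properties directly from the defining formula $F_\delta(x)=\int_{\R^n}f(x-y)m_\delta(y)\,\d^n y$ by standard mollifier arguments. The features of $m_\delta$ that drive every step are: $m_\delta\ge 0$, $m_\delta$ is $C^\infty$, $\operatorname{supp} m_\delta\subseteq B_2(0,\delta)$, and $\int m_\delta = 1$; all of these are immediate from \eq{mollifier}, so I would record them at the start and then attack the four claims in order.

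For \itm{mollify_smooth1}, I would rewrite $F_\delta(x)=\int f(y)m_\delta(x-y)\,\d^n y$ via a change of variables and differentiate under the integral. Compact support of $m_\delta$ together with continuity (hence local boundedness) of the $L$-Lipschitz function $f$ supplies the dominating function required by the dominated convergence theorem, so $\partial^\alpha F_\delta = f\ast\partial^\alpha m_\delta$ for every multi-index $\alpha$, and the smoothness of $m_\delta$ transfers to $F_\delta$. For \itm{mollify_convex}, plugging a convex combination into the integral and using convexity of $f$ pointwise together with $m_\delta\ge 0$ yields
\begin{align*}
F_\delta(\lambda x+(1-\lambda)y)
&=\int f\!\bigl(\lambda(x-z)+(1-\lambda)(y-z)\bigr)m_\delta(z)\,\d^n z \\
&\le \lambda F_\delta(x)+(1-\lambda)F_\delta(y).
\end{align*}

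For \itm{mollify_lip} and \itm{mollify_close} I would push the relevant difference inside the integral and apply the Lipschitz bound \eq{lipschitz-cond}. For the Lipschitz claim,
\[
|F_\delta(x)-F_\delta(y)|\le\int|f(x-z)-f(y-z)|m_\delta(z)\,\d^n z\le L\|x-y\|_\infty,
\]
and specializing to $y=x+te_i$ and letting $t\to 0$ gives $|\partial_i F_\delta(x)|\le L$, hence $\|\nabla F_\delta(x)\|_\infty\le L$, which matches \defn{lipschitz} because the smooth $F_\delta$ has $\partial F_\delta(x)=\{\nabla F_\delta(x)\}$. For closeness, since $\operatorname{supp} m_\delta\subseteq B_2(0,\delta)$ and $\|y\|_\infty\le\|y\|_2\le\delta$ on that support,
\[
|F_\delta(x)-f(x)|=\Bigl|\int(f(x-y)-f(x))m_\delta(y)\,\d^n y\Bigr|\le L\delta\int m_\delta(y)\,\d^n y=L\delta.
\]

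No step is a genuine obstacle; the only items deserving a careful sentence are the justification of differentiation under the integral in \itm{mollify_smooth1} (handled by compact support of $m_\delta$ plus local boundedness of $f$) and the bridge in \itm{mollify_lip} from the pointwise Lipschitz inequality to the subgradient $\ell_\infty$-bound demanded by \defn{lipschitz}, which is handled by the directional-derivative argument above.
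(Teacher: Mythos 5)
Your proof is correct and follows essentially the same route as the paper's Lemma~\ref{lem:mollify}: differentiate the convolution to get \itm{mollify_smooth1}, push convex combinations through the integral using $m_\delta\ge 0$ for \itm{mollify_convex}, and integrate the Lipschitz bound against $m_\delta$ for \itm{mollify_lip} and \itm{mollify_close}. You are in fact slightly more careful than the paper at two points --- you spell out the dominated-convergence justification for differentiating under the integral, and you explicitly bridge from the pointwise inequality $|F_\delta(x)-F_\delta(y)|\le L\|x-y\|_\infty$ to the subgradient $\ell_\infty$-bound that \defn{lipschitz} actually requires --- but these are refinements, not a different argument.
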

These properties of the mollifier function are well known in functional analysis \cite{hormander1990analysis}. For completeness a proof is provided in \lem{mollify}.

Furthermore, an approximate gradient of the mollified function gives an approximate subgradient of the original function, as quantified by the following lemma.

\begin{lemma}
  \label{lem:nearby-gradient}
  Let $f\colon \R^n \to \R$ be an infinitely differentiable $L$-Lipschitz continuous convex function with mollification $F_\delta$. Then any $\tilde{g}$ satisfying $\norm{\tilde{g} - \nabla F_\delta(y)}_1 = \zeta$ for some $y \in B_\infty(x,r_1)$
  satisfies
  \begin{equation}
    \label{eq:nearby-gradient}
    f(q) \ge f(x) + \langle \tilde{g}, q - x \rangle - \zeta\norm{q-x}_\infty - 4nr_1L - 2L\delta.
  \end{equation}
  Here $\zeta$ is the error in the subgradient and $\delta$ is the parameter used in the mollifier function.
\end{lemma}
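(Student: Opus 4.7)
The plan is to cascade four inequalities: (i) the first-order convexity inequality for $F_\delta$ based at $y$, (ii) a Hölder substitution replacing $\nabla F_\delta(y)$ by $\tilde g$, (iii) a translation of the base point from $y$ back to $x$ using the $L$-Lipschitz property of $F_\delta$, and (iv) the uniform closeness $|F_\delta - f|\le L\delta$ to convert $F_\delta$ into $f$. This mirrors the standard device for passing from a smooth approximation back to the original non-smooth function.

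First, by \prop{mollify}\itm{mollify_smooth1} and \itm{mollify_convex}, $F_\delta$ is differentiable and convex, hence
\begin{equation*}
F_\delta(q) \;\ge\; F_\delta(y) + \langle \nabla F_\delta(y),\, q-y\rangle
\end{equation*}
for every $q\in\R^n$. Writing $\nabla F_\delta(y) = \tilde g + (\nabla F_\delta(y) - \tilde g)$ and applying Hölder's inequality in the $L_1$/$L_\infty$ duality to the deviation, together with the hypothesis $\|\tilde g - \nabla F_\delta(y)\|_1 = \zeta$, yields
\begin{equation*}
F_\delta(q) \;\ge\; F_\delta(y) + \langle \tilde g,\, q-y\rangle - \zeta\,\|q-y\|_\infty.
\end{equation*}
To translate from base point $y$ to $x$, I split $\langle \tilde g, q-y\rangle = \langle \tilde g, q-x\rangle + \langle \tilde g, x-y\rangle$ and bound $|\langle \tilde g, x-y\rangle| \le \|\tilde g\|_1\, r_1$ using $y\in B_\infty(x,r_1)$. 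The Lipschitz bound $\|\nabla F_\delta\|_\infty \le L$ from \prop{mollify}\itm{mollify_lip} combined with $\|\tilde g - \nabla F_\delta(y)\|_1 = \zeta$ controls $\|\tilde g\|_1 \le nL + \zeta$; the same Lipschitz property also yields $F_\delta(y)\ge F_\delta(x) - nLr_1$, and the triangle inequality gives $\|q-y\|_\infty \le \|q-x\|_\infty + r_1$.

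Finally, I apply \prop{mollify}\itm{mollify_close} in the forms $F_\delta(q)\le f(q)+L\delta$ and $F_\delta(x)\ge f(x)-L\delta$ to return to $f$, which contributes the $-2L\delta$ term in the conclusion. Collecting all contributions beyond $\zeta\|q-x\|_\infty$, the accumulated $r_1$-penalty has shape $(\text{const})\cdot nLr_1 + (\text{const})\cdot \zeta r_1$; noting that $\tilde g$ produced by \algo{quantum-gradient} is coordinatewise bounded by $L$ (so $\zeta\le 2nL$), the total is comfortably absorbed into $4n r_1 L$, finishing the proof.

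The main obstacle is the constant-tracking: three separate sources each contribute $\Theta(nLr_1)$ — the Lipschitz shift from $F_\delta(y)$ to $F_\delta(x)$, the inner product $\langle \tilde g, x-y\rangle$, and the inflated tail of the $\zeta$-penalty — so one has to bound $\|\tilde g\|_1$ by roughly $nL$ (using the clipping in \algo{quantum-gradient}) rather than by the looser $nL+\zeta$ coming blindly from the triangle inequality, so that the stated coefficient $4n$ is attained rather than a bound that degrades with $\zeta$.
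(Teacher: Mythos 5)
Your decomposition reverses the order of two steps relative to the paper's proof, and the reordering costs you the lemma as stated. The paper shifts the base point from $y$ to $x$ while still carrying $\nabla F_\delta(y)$,
\begin{align*}
F_\delta(q) &\ge F_\delta(y) + \langle\nabla F_\delta(y), q-y\rangle \\
&= F_\delta(x) + \langle\nabla F_\delta(y), q-x\rangle + \langle\nabla F_\delta(y), x-y\rangle + \bigl(F_\delta(y) - F_\delta(x)\bigr),
\end{align*}
absorbs $\langle\nabla F_\delta(y), x-y\rangle + (F_\delta(y) - F_\delta(x))$ into $-4nr_1L$ using $\|\nabla F_\delta(y)\|_1 \le nL$ and the Lipschitz bound on $F_\delta$, and only \emph{then} performs the H\"older substitution $\nabla F_\delta(y) \to \tilde g$, which costs exactly $\zeta\|q-x\|_\infty$. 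Every quantity entering the $r_1$-penalty is controlled by $L$ alone; no $\zeta$ appears there.

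You substitute $\tilde g$ first, which puts $\|q-y\|_\infty$ in the H\"older penalty and forces the trade $\|q-y\|_\infty \le \|q-x\|_\infty + r_1$, picking up an extra $\zeta r_1$. You then shift the base point using $\tilde g$ rather than $\nabla F_\delta(y)$, which requires controlling $\|\tilde g\|_1$ --- and the triangle inequality alone only gives $nL + \zeta$. Your $r_1$-penalty therefore carries a $\zeta$-dependent piece, and absorbing it into $4nr_1L$ requires $\zeta = O(nL)$ a priori. You recognize this and invoke the clipping $\|\tilde g\|_\infty \le L$ from \lem{jordan-expected-bound}, but that clipping is a property of the output of \algo{quantum-gradient}, not a hypothesis of this lemma, which asserts the bound for \emph{any} $\tilde g$ with $\|\tilde g - \nabla F_\delta(y)\|_1 = \zeta$. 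As written your argument does not establish the stated lemma; it only covers the special case that happens to arise downstream. The fix is to mirror the paper's order: carry $\nabla F_\delta(y)$ through the base-point change so the $r_1$-penalty is Lipschitz-controlled, and save the $\tilde g$ substitution for last.
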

\begin{proof}
  For all $q \in \R^n$, convexity of $F_\delta$ implies
  \begin{align}
  \label{eq:grad-subgrad}
  F_\delta(q)
  &\ge F_\delta(y) + \langle \nabla F_\delta(y), q-y \rangle \\
  &=   F_\delta(x) + \langle \nabla F_\delta(y), q - x \rangle + \langle \nabla F_\delta(y), x - y \rangle + (F_\delta(y) - F_\delta(x)) \\
  &\ge F_\delta(x) + \langle \nabla F_\delta(y), q-x \rangle - 4nr_1L \\
  &\ge F_\delta(x) + \langle \tilde{g}, q-x \rangle - \zeta\norm{q-x}_\infty - 4nr_1L,
  \end{align}
  so \eq{nearby-gradient} follows from \prop{mollify}\itm{mollify_close}.
\end{proof}

Now consider $\delta$ such that $L\delta \ll \epsilon$. Then the evaluation oracle with error $\epsilon$ for $f$ is also an evaluation oracle for $F_\delta$ with error $\epsilon + L\delta \approx \epsilon$. Thus the given evaluation oracle is also the evaluation oracle for an infinitely differentiable convex function with the same Lipschitz constant and almost the same error, allowing us to analyze infinitely differentiable functions without loss of generality (as long as we make no claim about the second derivatives). This idea is made precise in \thm{general-quantum-gradient}. (Note that the mollification of $f$ is never computed or estimated by our algorithm; it is only a tool for analysis.)

Unfortunately, \lem{jordan-expected-bound} cannot be directly used to calculate subgradients for $F_\delta$ as $\delta \to 0$. This is because there exist convex functions (such as $f(x) = |x|$) where if $|f(x) - g(x)| \le \delta$ and $g(x)$ is $\beta$-smooth, then $\beta\delta \ge c $ for some constant $c$ (see \lem{non-smooth-differentiable} in the appendix). Thus using the \algname{SmoothQuantumGradient} algorithm at $x=0$ will give us a one-norm error of $ 3000 n^{3/2}\sqrt{\epsilon\beta} \ge 3000 n^{3/2} \sqrt{c}$, which is independent of $\epsilon$.

To fix this issue, we take inspiration from \cite{lee2017efficient} and introduce classical randomness into the gradient evaluation. In particular, the following lemma shows that for a Lipschitz continuous function, if we sample at random from the neighborhood of any given point, the probability of having large second derivatives is small. Let $y \sim Y$ indicate that $y$ is sampled uniformly at random from the set $Y$. Also, let $\lambda(x)$ be the largest eigenvalue of the Hessian matrix $\nabla^2f(x)$ at $x$. Since the Hessian is positive semidefinite, we have $\lambda(x) \le \Delta f(x) := \Tr(\nabla^2 f(x))$. Thus the second derivatives of $f$ are upper bounded by $\Delta f(x)$.

  Let $\eta(y)$ denote the area element on the surface $\partial B_\infty(x,r_1)$, defined as
  \begin{align}
    \label{eq:area-element}
    \eta(y)_i :=
    \begin{cases}
      1 & \text{if $y_i - x_i \ge r_1$} \\
      0 & \text{otherwise}.
    \end{cases}
  \end{align}
We have
  \begin{align}
    \label{eq:avg-laplacian}
    \E_{y \sim B_\infty(x,r_1)} \Delta f(y)&= \frac{1}{(2r_1)^n}\int\limits_{B_\infty(x,r_1)} \Delta f(y) \,\d^n{y} \\
                                          &= \frac{1}{(2r_1)^n}\int\limits_{\partial B_\infty(x,r_1)} \langle \nabla f(y), \eta(y) \rangle \, \d^{n-1}{y} \label{eq:avg-laplacian-2} \\
    & \le \frac{1}{(2r_1)^n}(2n)(2r_1)^{n-1}L = \frac{nL}{r_1}
  \end{align}
  where \eq{avg-laplacian-2} comes from the divergence theorem (the integral of the divergence of a vector field over a set is equal to the integral of the vector field over the surface of the set). This indicates that while the second derivatives of a Lipschitz continuous function can be unbounded at individual points, its expected value for a point uniformly sampled in an extended region is bounded.

Now, consider a grid of side length $l$ (aligned with the coordinate axes) embedded in $B_\infty(x,r_1)$. We denote this grid by $\grid{B_\infty(x,r_1)}{l}$. For any $i \in [n]$ and a $y$ sampled uniformly from $\grid{B_\infty(x,r_1)}{l}$, the expectation of the integral of the second directional derivative in the $i^{\mathrm{th}}$ coordinate direction over a segment from $y$ to the point $y + l e_i$ is
\begin{align}
  \label{eq:bound-on-average-single}
  \E_{y \sim \grid{B_\infty(x,r_1)}{l}}\left[\int_{y_i}^{y_i+ le_i} \frac{\d^2f(z)}{\d z_i^2}\, \d z_i\right] = \E_{y \sim \grid{B_\infty(x,r_1)}{l}}\left[\int_{0}^{l}\frac{\d^2f(y + te_i)}{dt^2}\d t\right]\le \frac{Ll}{r_1}.
\end{align}

To see this, note that there are $2r_1/l$ segments of length $l$ (corresponding to different points $y$) inside $B_\infty(y,r_1)$. The total integral of the directional derivative over these segments is upper bounded by the change in the $i^{\mathrm{th}}$ component of the gradient, which is in turn bounded by $2L$ due to the Lipschitz property of $f$.

Let $\devlin\colon \R^n \times \R^n \to \R$ defined by
\begin{align}
  \devlin(y,z) := \big\rvert f(z) - f(y) - \langle \nabla f(y), z - y \rangle \big\rvert
\end{align}
be a function that quantifies the deviation from linearity of $f$ between $y$ and $z$.
We now show the following lemma that bounds this deviation in the neighborhood of a randomly sampled point (with high probability).
\begin{lemma}
  \label{lem:effective-smoothness}
  Let $f \colon \R^n \to \R$ be an $L$-Lipschitz continuous, infinitely differentiable, convex function. Then for a point $y$ chosen uniformly from $\grid{B_\infty(x,r_1)}{l}$, and any $p \in \R$ such that $p \geq n$,
  \begin{align}
    \label{eq:effective-smoothness}
    \devlin(y,z) \le \frac{pnl^2L}{r_1}, \quad \forall\,z\in B_\infty(y,l)
  \end{align}
  with probability at least $1 - \frac{n}{p}$.
\end{lemma}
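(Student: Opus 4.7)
My plan is to combine Markov's inequality (applied to the expected bound in Eq.~(2.18)) with a Taylor-type argument that exploits convexity and the PSD structure of the Hessian.

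\textbf{Step 1 (Markov).} Summing Eq.~(2.18) over $i\in[n]$ and using linearity of expectation gives $\E_{y\sim\grid{B_\infty(x,r_1)}{l}}\sum_{i=1}^{n}\int_0^l \partial_i^2 f(y+se_i)\,ds \le nLl/r_1$. Since $\partial_i^2 f\ge 0$ by convexity, Markov's inequality applied to this nonnegative random variable with threshold $pLl/r_1$ shows that, with probability at least $1-n/p$ over $y$,
\[
\sum_{i=1}^{n} \int_0^l \partial_i^2 f(y+se_i)\,ds \;=\; \sum_{i=1}^{n} \bigl[\partial_i f(y+le_i) - \partial_i f(y)\bigr] \;\le\; \frac{pLl}{r_1}.
\]

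\textbf{Step 2 (From integrals to deviation).} Fix such a good $y$ and an arbitrary $z\in B_\infty(y,l)$; write $v=z-y$, so $|v_i|\le l$. Convexity gives $\devlin(y,z)\le \langle\nabla f(z)-\nabla f(y),\, v\rangle \le l\,\|\nabla f(z)-\nabla f(y)\|_1$. I would decompose $\nabla f(z)-\nabla f(y)$ along a coordinate-by-coordinate path $y=w^{(0)}\to w^{(1)}\to\cdots\to w^{(n)}=z$ with $w^{(k)}-w^{(k-1)}=v_k e_k$, expressing each one-dimensional step as an integral of a mixed partial $\int_0^{v_k}\partial_k\partial_i f(w^{(k-1)}+se_k)\,ds$. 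The Cauchy--Schwarz inequality $|\partial_k\partial_i f|\le \tfrac{1}{2}(\partial_k^2 f+\partial_i^2 f)$ for PSD Hessians, summed over $i$, eliminates the mixed partials and yields a bound of the form $\devlin(y,z)\le C n l\sum_k \int_0^l \partial_k^2 f(w^{(k-1)}+se_k)\,ds$ for an absolute constant $C$.

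\textbf{Step 3 (Linking).} The integrals remaining in Step~2 are along direction $e_k$ but based at intermediate points $w^{(k-1)}\in B_\infty(y,l)$, whereas Step~1 only controls such integrals when they are based at the grid point $y$ itself. I would close this gap by a Fubini/averaging argument: since $w^{(k-1)}_k=y_k$ and the perpendicular coordinates of $w^{(k-1)}$ lie in $B_\infty(y_{-k},l)$, averaging $\int_0^l \partial_k^2 f(\cdot+se_k)\,ds$ over the $n-1$ perpendicular coordinates and applying the Lipschitz bound on $\partial_k f$ dominates the shifted integral by the axis-aligned integral from $y$ of length at most $2l$, up to constants that can be absorbed into $p$. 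Combining everything yields $\devlin(y,z)\le p n l^2 L/r_1$ as claimed.

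The main obstacle is this Step~3 linking: Eq.~(2.18) naturally produces axis-aligned bounds based at the grid point $y$, whereas the coordinate-path decomposition unavoidably produces integrals based at shifted points $w^{(k-1)}$. One robust alternative, should the averaging argument prove awkward, is to dominate each shifted integral by an integral along the same axis from $y$ extended to length $2l$, exploiting monotonicity of $\partial_k f$ in $x_k$ (from convexity) together with a single extra application of the PSD-Hessian inequality; this gives the same conclusion with slightly different constants.
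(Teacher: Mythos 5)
Your proposal follows essentially the same strategy as the paper: apply Markov's inequality to the axis-aligned second-derivative integrals of \eq{bound-on-average-single} at a random grid point $y$, then piece together the deviation from linearity over a coordinate-axis path from $y$ to $z$. The paper applies Markov per coordinate (failure probability $1/p$ each) and union-bounds over the $n$ coordinates, while you apply Markov once to the coordinate sum; both give $1-n/p$. The paper also first uses convexity of $\devlin(y,\cdot)$ to restrict $z$ to a vertex of $B_\infty(y,l)$, a simplification you do not invoke but could.

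There is a gap in your Step~2. After $|\partial_k\partial_i f| \le \tfrac{1}{2}(\partial_k^2 f + \partial_i^2 f)$ and summing over $i$, you obtain, in addition to $\int_0^l \partial_k^2 f(\cdot+se_k)\,\d s$, the perpendicular terms $\int_0^l \partial_i^2 f(\cdot+se_k)\,\d s$ with $i \ne k$: the $i$-th second derivative integrated along the $k$-th axis. These are not of the axis-aligned form controlled by \eq{bound-on-average-single}, so they cannot be absorbed into an absolute constant $C$ and would need their own estimate. The paper's decomposition sidesteps this: it tracks the change in $\devlin(y,\cdot)$ across each segment, which is a single-direction quantity and never generates mixed partials, whereas your route through $l\|\nabla f(z)-\nabla f(y)\|_1$ couples every pair of coordinates.

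Your Step~3 worry is well-placed, and you should note that it is not explicitly resolved in the paper either: the paper asserts that the increase in deviation along each segment is bounded by the $y$-based Markov quantity, which implicitly treats the $k$-th segment as departing from $y$ rather than from $w^{(k-1)}$. You have correctly identified the delicate point in the argument; neither your sketched fixes (Fubini/averaging, or monotonicity of $\partial_k f$) nor the paper's one-sentence assertion makes that step airtight.
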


\begin{proof}
  Note that $\devlin(y,z)$ is a convex function of $z$ and must attain its maximum at one of the extremal points (vertices) of the hypercube $B_\infty(y,l)$, which are the $2^n$ points of the form
  \begin{align}\label{eq:hypercube-vertices}
    \big\{y + ls \mid s \in \{-1,1\}^n\big\}.
  \end{align}
This is because every point in the hypercube is a convex combination of the vertices, so having a higher function value at an internal point than at all the vertices would violate convexity.

  Consider a path from $y$ to a vertex of $B_\infty(y,l)$ consisting of $n$ segments of length $l$ aligned along the $n$ coordinate axes. For example, the path could move a distance $l$ along $\pm e_1,\pm e_2,\dots,\pm e_n$ until the vertex is reached. Using Markov's inequality with \eq{bound-on-average-single}, we have for every coordinate direction $i \in [n]$,
\begin{align}
  \label{eq:bound-on-average}
  \Pr_{y \sim \grid{B_\infty(x,r_1)}{l}}\left[\int_y^{y+ le_i} \frac{\d^2f(z)}{\d z_i^2}\, \d z > \frac{pLl}{r_1} \right] \le \frac{1}{p}.
\end{align}
Thus with probability at least $1 - \frac{1}{p}$, the increase in the deviation from linearity along each segment, as quantified by the function $\Delta$, is at most $\frac{pl^2L}{r_1}$. Using the union bound, with probability at least $1 - \frac{n}{p}$, the total deviation from linearity along the path is at most $\frac{pnl^2L}{r_1}$ as claimed.
\end{proof}

\lem{effective-smoothness} shows that with high probability a sampled point in $B_\infty(x,r_1)$ has a deviation from linearity in its neighborhood which is the same as that for a function with smoothness parameter $\frac{2pL}{r_1}$. The analysis of the gradient estimation procedure (\lem{jordan-expected-bound}) uses the smoothness of the function only to bound its deviation from linearity. Thus, \algo{quantum-gradient} can be applied as if to a function with smoothness parameter $\frac{2pL}{r_1}$. This observation is applied to find an approximate subgradient in \algo{classical-smoothing}.

\begin{algorithm}[htbp]
  \caption{$\algname{QuantumSubgradient}(f,\epsilon,L,x,r_1)$}
  \label{algo:classical-smoothing}
 \KwData{Function $f$, evaluation error $\epsilon$, Lipschitz constant $L$, point $x \in \R^n$, length $r_1 > 0$.}
 Sample $y \sim \grid{B_\infty(x,r_1)}{l}$\;
 Output $\tilde{g} = \algname{SmoothQuantumGradient}(f,\epsilon,L,{2n^{1/3}L}/{r_1^{2/3}\epsilon^{1/3}},y)$.
\end{algorithm}

\begin{theorem}
  \label{thm:general-quantum-gradient}
  Let $f$ be a convex, $L$-Lipschitz function that is specified by an evaluation oracle with error $\epsilon < \min\{1, r_1/n^2\}$. Let $\tilde{g} = \algname{QuantumSubgradient}(f,\epsilon,L,x,r_1)$ (from \algo{classical-smoothing}). Then for all $q \in \R^n$,
  \begin{align}
    f(q) \ge f(x) + \langle \tilde{g}, q - x \rangle - \zeta\norm{q-x}_\infty - 4nr_{1}L,
  \end{align}
  where $\E\zeta \le \frac{5000Ln^{5/3}\epsilon^{1/3}}{r_1^{1/3}}$.
\end{theorem}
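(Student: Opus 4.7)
The plan is to combine the mollification framework from Proposition~\ref{prop:mollify} and Lemma~\ref{lem:nearby-gradient} with the effective-smoothness argument of Lemma~\ref{lem:effective-smoothness} and the smooth-gradient bound of Lemma~\ref{lem:jordan-expected-bound}. The key observation is that the smoothness parameter $\beta = 2n^{1/3}L/(r_1^{2/3}\epsilon^{1/3})$ hard-coded into Algorithm~\ref{algo:classical-smoothing} is engineered to balance two competing error contributions, and the bulk of the proof is the matching calibration argument.

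First I would reduce to the infinitely differentiable case: for any $\delta > 0$, the mollification $F_\delta$ satisfies Proposition~\ref{prop:mollify}, and the evaluation oracle for $f$ with error $\epsilon$ doubles as an evaluation oracle for $F_\delta$ with error $\epsilon + L\delta$. Since \algname{QuantumSubgradient} does not depend on $\delta$, I will let $\delta \to 0$ at the end so the $2L\delta$ penalty in Lemma~\ref{lem:nearby-gradient} vanishes and inequality~\eqref{eq:nearby-gradient} reduces to the target bound.

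Next, I would apply Lemma~\ref{lem:effective-smoothness} to $F_\delta$ at the sampled point $y$ with $p = n^{1/3} r_1^{1/3}/\epsilon^{1/3}$. The hypothesis $\epsilon < r_1/n^2$ exactly guarantees $p \ge n$, and the failure probability equals $n/p = n^{2/3}\epsilon^{1/3}/r_1^{1/3}$. On the success event, the remark following Lemma~\ref{lem:effective-smoothness} shows that $F_\delta$ behaves in the relevant neighbourhood as if it were $\beta$-smooth for $\beta = 2pL/r_1 = 2n^{1/3}L/(r_1^{2/3}\epsilon^{1/3})$, which is precisely the value passed to \algname{SmoothQuantumGradient}; Lemma~\ref{lem:jordan-expected-bound} then yields an expected $L_1$ error of
\begin{align}
  3000\, n^{3/2}\sqrt{\epsilon\beta} \;=\; 3000\sqrt{2}\, n^{5/3}\sqrt{L}\,\epsilon^{1/3}/r_1^{1/3}.
\end{align}
On the failure event, the bounds $|\tilde g_i| \le L$ and $|\nabla F_\delta(y)_i| \le L$ give $\|\tilde g - \nabla F_\delta(y)\|_1 \le 2nL$, contributing at most $2nL \cdot n^{2/3}\epsilon^{1/3}/r_1^{1/3} = 2L n^{5/3}\epsilon^{1/3}/r_1^{1/3}$ in expectation. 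Summing the two cases and absorbing $\sqrt{L}$ into $L$ under the standing convention $L \ge 1$ yields $\E\|\tilde g - \nabla F_\delta(y)\|_1 \le 5000\, L n^{5/3}\epsilon^{1/3}/r_1^{1/3}$, and Lemma~\ref{lem:nearby-gradient} converts this into the claimed subgradient inequality.

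The main obstacle is the parameter calibration: one must check that the scaling $p = n^{1/3}r_1^{1/3}/\epsilon^{1/3}$ optimally balances the smoothness-success error $\sqrt{\epsilon\beta} \propto \sqrt{pL/r_1}$ against the failure-probability penalty $(n/p)\cdot nL$, so that both scale as $Ln^{5/3}\epsilon^{1/3}/r_1^{1/3}$ and their sum slots neatly under the stated constant $5000$. Once this calibration is done, the remaining steps are bookkeeping: verifying that the effective smoothness produced by Lemma~\ref{lem:effective-smoothness} lines up with the $\beta$ fed into \algname{SmoothQuantumGradient}, handling the $\sqrt{L}$ versus $L$ mismatch, and taking $\delta \to 0$ at the very end.
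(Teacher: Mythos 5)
Your proposal is correct and follows essentially the same approach as the paper's proof: mollify $f$, choose $p = n^{1/3}r_1^{1/3}/\epsilon^{1/3}$ so the resulting $\beta = 2pL/r_1$ matches the value hard-coded in \algname{QuantumSubgradient}, split the expectation by the success/failure event of \lem{effective-smoothness}, bound each piece (using $L\ge 1$ to absorb $\sqrt L$ into $L$), invoke \lem{nearby-gradient}, and send $\delta\to 0$. The only cosmetic difference is that the paper carries $\epsilon_1 = \epsilon + L\delta$ explicitly through the calculation before the limit, whereas you write $\epsilon$ throughout and rely on the $\delta\to 0$ step, which is harmless.
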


\begin{proof}
Consider $F_\delta$ such that $L\delta \ll \epsilon$. From \prop{mollify}, $F_\delta$ is infinitely differentiable, convex, and $L$-Lipschitz. The given evaluation oracle for $f$ is also an evaluation oracle for $F_\delta$ with error $\epsilon_1 = \epsilon + L\delta$.

Assume without loss of generality that $L \ge 1$ (if not, the algorithm can be run with $L = 1$). Set $p = r_1^{1/3}n^{1/3}/\epsilon_1^{1/3}$ ($n/p < 1$ by the assumption on $\epsilon$). As observed above, \lem{effective-smoothness} shows that for $y \in \grid{B_\infty(x,r_1)}{l}$, \algo{quantum-gradient} produces $g = \algname{SmoothQuantumGradient}(F_\delta,\epsilon_1,L,\frac{2pL}{r_1},y)$ correctly with probability at least $1 - {n}/{p}$.
  Thus for each $i \in \range{n}$, we have:
  \begin{enumerate}
  \item With probability at least $1 - {n^{2/3}\epsilon_1^{1/3}}/{r_1^{1/3}}$,
    \begin{align}
      \E|g_i - \nabla F_\delta(y)_i| \le 3000\sqrt{\frac{2n\epsilon_1 pL}{r_1}} \le 3000L\sqrt{\frac{2n\epsilon_1 p}{r_1}} = \frac{3000\sqrt{2}Ln^{2/3}\epsilon_1^{1/3}}{r_1^{1/3}};
    \end{align}
  \item With probability at most $n/p = {n^{2/3}\epsilon_1^{1/3}}/{r_1^{1/3}}$, the algorithm fails. From Lipschitz continuity, $|\nabla F_\delta(x)_i| \le L$, and from \lem{jordan-expected-bound}, $|g_i| \le L$. Therefore,
    \begin{align}
      \E|g_i - \nabla F_\delta(y)_i| \le 2L.
    \end{align}
  \end{enumerate}
  Finally, we have
  \begin{equation}
    \label{eq:expected-component}
    \E_{y \sim \grid{B_\infty(x,r_1)}{l}} |g_i - \nabla F_\delta(y)_i| \le \frac{3000\sqrt{2}Ln^{2/3}\epsilon_1^{1/3}}{r_1^{1/3}} + \frac{2Ln}{p} < \frac{5000Ln^{2/3}\epsilon_1^{1/3}}{r_1^{1/3}},
  \end{equation}
hence
  \begin{equation}
    \label{eq:expected-onenorm}
    \E_{y \sim \grid{B_\infty(x,r_1)}{l}} \norm{g - \nabla F_\delta(y)}_1 \le \frac{5000Ln^{5/3}\epsilon_1^{1/3}}{r_1^{1/3}}.
  \end{equation}
  Thus from \lem{nearby-gradient},
  \begin{equation}
    f(q) \ge f(x) + \langle g, q - x \rangle - \zeta\norm{q-x}_\infty - 4nr_1L - 2L\delta
  \end{equation}
  for all $q \in \R^n$ where $\E\zeta \le \frac{5000Ln^{5/3}\epsilon_1^{1/3}}{r_1^{1/3}}$.
  Now let $\delta \to 0$. Then $F_\delta \to f$, $\epsilon_1 \to \epsilon$, and $g \to \tilde{g}$. Finally,
  \begin{equation}
    f(q) \ge f(x) + \langle \tilde{g}, q - x \rangle - \zeta\norm{q-x}_\infty - 4nr_1L
  \end{equation}
  for all $q \in \R^n$, where $\E\zeta \le \frac{5000Ln^{5/3}\epsilon^{1/3}}{r_1^{1/3}}$.
\end{proof}

\subsection{Membership to separation}
\label{sec:mem-to-sep}

\begin{algorithm}
  \caption{$\algname{SeparatingHalfspace}(K,p,\rho,\delta)$}
  \label{algo:halfspace}
  \KwData{Convex set $K$ such that $B_2(0,r) \subset K \subset B_2(0,R), \kappa=R/r$, $\delta$-precision membership oracle for $K$, point $p$.}
  \If {\text{the membership oracle asserts that} $p \in B_2(K,\delta)$} {
    \textbf{Output:} $p \in B_2(K,\delta)$. \label{lin:halfspace_contained}
  }
  \uElseIf{$ p \notin B_2(0,R)$} {
    \textbf{Output:} the halfspace $\{x \in \R^n \mid 0 > \langle x-p,p \rangle\}$.
  }
  \Else {
    Define $h_p(x)$ as in \eq{depth-function}. The evaluation oracle for $h_p(x)$ for any $x \in B(0,r/2)$ can be implemented to precision $\epsilon = 7\kappa\delta $ using $\log({1}/{\epsilon})$ queries to the membership oracle for $K$\; \label{lin:mem-to-sep-6}
    Compute $\tilde{g} = \algname{QuantumSubgradient}(h_p,\epsilon,L,0,n\epsilon^{1/2})$\;
    \textbf{Output:} the halfspace $ \lbrace x \in \R^n \mid {\left(30000R + 25\right) n^3\epsilon^{1/6}\kappa^2}/{\rho} \ge \langle \tilde{g}, x - p \rangle \rbrace $.
  }
\end{algorithm}

In this subsection we show how the approximate subgradient procedure (\algo{classical-smoothing}) fits into the reduction from separation to membership presented in \cite{lee2017efficient}. We use the \emph{height function} $h_p\colon \R^n \to \R$ defined in \cite{lee2017efficient} for any vector $p \in \R^n$, as
\begin{equation}
  \label{eq:depth-function}
  h_p(x) = -\max \{t \in \R \mid x + t\hat{p} \in K \},
\end{equation}
where $\hat{p}$ is the unit vector in the direction of $p$. The height function has the following properties:
\begin{proposition}[Lemmas 11 and 12 of \cite{lee2017efficient}]
  \label{prop:height-function-lipschitz}
  Let $K \subset \R^n$ be a convex set with $B_2(0,r) \subseteq K \subseteq B_2(0,R)$ for some $R>r>0$. Then for any $p \in \R^n$, the height function \eq{depth-function} satisfies
  \begin{enumerate}[nosep,label={\normalfont{(\roman*)}}]
  \item $h_p(x)$ is convex,
  \item $h_p(x) \le 0$ for all $x \in K$, and
  \item for all $\delta>0$, $h_p(x)$ is $\frac{R+\delta}{r - \delta}$-Lipschitz continuous for $x \in B_2(0,\delta)$.
  \end{enumerate}
\end{proposition}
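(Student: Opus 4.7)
The three claims follow from the definition \eq{depth-function}, convexity of $K$, and the sandwich $B_2(0,r)\subseteq K\subseteq B_2(0,R)$. Let $\phi(x) := -h_p(x)$, i.e., the largest $t$ with $x + t\hat p \in K$ (a supremum if the max is not attained). Parts (i) and (ii) are quick. For (i), if $t_i$ witnesses $\phi(x_i)$ for $i=1,2$ and $\lambda\in[0,1]$, convexity of $K$ gives $\lambda(x_1+t_1\hat p)+(1-\lambda)(x_2+t_2\hat p) = (\lambda x_1+(1-\lambda)x_2)+(\lambda t_1+(1-\lambda)t_2)\hat p\in K$, so $\lambda t_1+(1-\lambda)t_2$ is feasible for $\phi(\lambda x_1+(1-\lambda)x_2)$, proving $\phi$ is concave and $h_p$ is convex. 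For (ii), if $x\in K$ then $t=0$ is feasible, so $\phi(x)\ge 0$, i.e., $h_p(x)\le 0$.

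The substantive content is part (iii). The key geometric fact is that for any $y\in K$ and any $\mu\in[0,1]$, the ball $B_2((1-\mu)y,\mu r)=(1-\mu)y+\mu B_2(0,r)$ lies in $K$, since it is a convex combination of $y\in K$ with $B_2(0,r)\subseteq K$. Fix $x_1,x_2\in B_2(0,\delta)$, let $s:=\|x_1-x_2\|_2$, put $t_1:=\phi(x_1)$, and set $y:=x_1+t_1\hat p\in K$. I will exhibit a candidate $t$ for $\phi(x_2)$ by requiring $x_2+t\hat p\in B_2((1-\mu)y,\mu r)$. Choosing $t:=(1-\mu)t_1$ makes the $\hat p$-components cancel, and this membership becomes $\|x_2-(1-\mu)x_1\|_2\le \mu r$; by the triangle inequality, this is implied by $s+\mu\delta\le\mu r$, i.e., $\mu\ge s/(r-\delta)$ (note that $\delta<r$ is needed for the stated constant to be positive, and then $x_i\in B_2(0,\delta)\subseteq K$ so $t_1\ge 0$). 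Setting $\mu=s/(r-\delta)$ (for $s$ small enough that $\mu\le 1$) yields $\phi(x_2)\ge (1-\mu)t_1=t_1-\tfrac{s}{r-\delta}t_1$. The outer bound $x_1+t_1\hat p\in B_2(0,R)$ together with $\|x_1\|_2\le\delta$ forces $t_1\le R+\delta$, so $\phi(x_1)-\phi(x_2)\le\tfrac{R+\delta}{r-\delta}s$; swapping $x_1$ and $x_2$ gives the $L_2$-Lipschitz bound with constant $(R+\delta)/(r-\delta)$, and the case of large $s$ follows by chaining along the segment from $x_1$ to $x_2$, which stays inside $B_2(0,\delta)$ by convexity.

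The main obstacle is squeezing out exactly the constant $(R+\delta)/(r-\delta)$ rather than a looser one. The decomposition $x_2+(1-\mu)t_1\hat p=(1-\mu)y+\mu b$ with $b\in B_2(0,r)$ is what allows the $\hat p$-components to cancel, reducing the existence of $b$ to the single scalar inequality $s+\mu\delta\le\mu r$. Both inclusions in $B_2(0,r)\subseteq K\subseteq B_2(0,R)$ play distinct roles: the inner ball provides the room to shift the base point while staying inside $K$, while the outer ball controls $t_1$. Finally, since an $L_2$-Lipschitz bound of constant $L$ implies $\|g\|_2\le L$ and hence $\|g\|_\infty\le L$ for any subgradient $g$, this suffices for \defn{lipschitz}.
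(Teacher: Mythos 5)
Your proof is correct and complete. Note, however, that the paper does not actually prove this proposition itself: it is stated as a citation of Lemmas~11 and~12 of Lee, Sidford, and Vempala~\cite{lee2017efficient}, so there is no in-paper proof to compare against.

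A few remarks on your argument. Parts (i) and (ii) are exactly the standard observations. For part (iii), the geometric core of your argument is the inclusion $(1-\mu)y + \mu B_2(0,r) \subseteq K$ for $y\in K$, $\mu\in[0,1]$, which is the same ``dilate toward an interior ball'' trick the original reference uses; your choice $t=(1-\mu)t_1$ to cancel the $\hat p$-component and reduce the feasibility condition to a single scalar inequality $s + \mu\delta \le \mu r$ is the cleanest way to extract the exact constant $(R+\delta)/(r-\delta)$. The bookkeeping is all correct: $t_1\ge 0$ because $x_1\in B_2(0,\delta)\subseteq B_2(0,r)\subseteq K$ (using $\delta<r$, which is indeed implicit in the statement since otherwise the constant is undefined); $t_1\le R+\delta$ follows from the outer ball and the triangle inequality; the chaining step is needed precisely because $s$ can be as large as $2\delta$, which may exceed $r-\delta$ so that $\mu\le 1$ fails without subdivision, and the segment staying in $B_2(0,\delta)$ by convexity is exactly what makes each sub-step valid. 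Your last observation that $L_2$-Lipschitzness with constant $L$ forces $\|g\|_2\le L\Rightarrow\|g\|_\infty\le L$ for any subgradient is the right way to reconcile the bound with \defn{lipschitz} as stated. The only thing you leave slightly informal is the supremum-vs.-maximum point when $K$ is not closed, but you flag it and the standard limiting argument handles it, so this is not a gap.
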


Now we are ready to analyze \algo{halfspace}.

\begin{theorem}
  \label{thm:gen-halfspace}
  Let $K \subset \R^n$ be a convex set such that $B_2(0,r) \subseteq K \subseteq B_2(0,R)$ for some $R>r>0$. Let $\rho \in (0,1), \kappa = R/r$ and $\delta \in (0,\min\{r/7\kappa,1/7\kappa\})$. Then with probability at least $1 - \rho$, \algname{SeparatingHalfspace}$(K,p,\rho,\delta)$ outputs a halfspace that contains $K$ and not $p$.
\end{theorem}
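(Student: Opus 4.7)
The plan is to dispatch the three branches of \algo{halfspace} separately, with the substance in the third. The first branch simply relays the assertion $p \in B_2(K, \delta)$ delivered by the membership oracle, matching the ``assert'' alternative of \defn{sep_oracle}. For the second branch, in which $p \notin B_2(0, R)$, the output halfspace $\{x : \langle x-p, p\rangle < 0\}$ excludes $p$ (since $\langle p-p, p\rangle = 0$) and contains $K$: Cauchy--Schwarz gives $\langle x, p\rangle \le R\|p\|_2 < \|p\|_2^2$ for every $x \in K \subseteq B_2(0, R)$. Both checks are deterministic.

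The substance lies in the third branch, where $p \in B_2(0, R) \setminus B_2(K, \delta)$. First I would collect the properties of the height function $h_p$ used below. Writing $r_+ := \max\{t \ge 0 : t\hat{p} \in K\} \in [r, R]$, \prop{height-function-lipschitz} furnishes convexity, $h_p(q) \le 0$ on $K$, and $L$-Lipschitz continuity with $L = O(\kappa)$ on a ball around $0$. Moreover $h_p(0) = -r_+$, and since $r_+\hat{p} \in K$ lies on the ray through $p$, the hypothesis $d(p, K) > \delta$ forces $h_p(p) = \|p\|_2 - r_+ > \delta$. Crucially, directly from \eq{depth-function} one obtains the exact-linearity identity $h_p(q + s\hat{p}) = h_p(q) + s$ for every $q$ and $s$; this is the ingredient that will ultimately pin $\tilde g$ to the direction $\hat p$. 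Finally, the evaluation oracle for $h_p$ at precision $\epsilon = 7\kappa\delta$ is produced by binary-searching the membership oracle along the ray, as indicated in \lin{mem-to-sep-6}.

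Next I would apply \thm{general-quantum-gradient} with $x = 0$ and $r_1 = n\epsilon^{1/2}$, obtaining $\tilde g$ that satisfies $h_p(q) \ge h_p(0) + \langle \tilde g, q\rangle - \zeta\|q\|_\infty - 4nr_1 L$ for every $q \in \R^n$, with $\E\zeta = O(Ln^{4/3}\epsilon^{1/6})$. Markov's inequality yields $\zeta = O(Ln^{4/3}\epsilon^{1/6}/\rho)$ with probability at least $1-\rho$, on which event I condition. To verify $K \subseteq \{x : \langle \tilde g, x - p\rangle \le C\}$ with $C = (30000R + 25)n^3\epsilon^{1/6}\kappa^2/\rho$, I need an upper bound on $\langle \tilde g, x\rangle$ for $x \in K$ and a matching lower bound on $\langle \tilde g, p\rangle$. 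Plugging $q = x$ into the subgradient inequality and using $h_p(x) \le 0$ yields $\langle \tilde g, x\rangle \le r_+ + \zeta R + 4nr_1 L$. For the lower bound, I would plug $q = -s\hat{p}$ with $s$ of order $r$, invoke the linearity $h_p(-s\hat{p}) = h_p(0) - s$, and rearrange to obtain $\langle \tilde g, \hat{p}\rangle \ge 1 - \zeta - O(nr_1 L/r)$; multiplying by $\|p\|_2 \le R$ yields the desired lower bound on $\langle \tilde g, p\rangle$. Subtracting, and using $\|p\|_2 \ge r_+$, I get $\langle \tilde g, x - p\rangle \le O(R\zeta + nr_1 L\kappa)$, and substituting $L = O(\kappa)$, $r_1 = n\epsilon^{1/2}$, and the Markov bound on $\zeta$ bounds this by $C$.

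The main obstacle I anticipate is precisely the lower bound on $\langle \tilde g, p\rangle$: \thm{general-quantum-gradient} supplies only \emph{one-sided} approximate subgradient inequalities, i.e., upper bounds on $\langle \tilde g, q\rangle$ in terms of $h_p(q) - h_p(0)$, which on their own cannot separate $K$ from $p$. The rescue is the unusual exact-linearity of $h_p$ along the ray through $\hat{p}$: it converts the one-sided inequality into a two-sided pinch on the component $\langle \tilde g, \hat{p}\rangle$ and forces $\tilde g$ to point essentially in direction $\hat p$ up to error $\zeta$. Once this is in hand, the bookkeeping of constants to reach $C$ is routine.
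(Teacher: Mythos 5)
Your proposal follows essentially the same route as the paper: dispatch the first two branches trivially, then in the third branch apply \thm{general-quantum-gradient} to the height function with $x=0$ and $r_1=n\epsilon^{1/2}$, extract a lower bound on $\langle\tilde g, p\rangle$ by testing the subgradient inequality at a point of the form $-s\hat p\in K$, combine with the upper bound on $\langle\tilde g, x\rangle$ over $K$, and finish with Markov's inequality. Your explicit invocation of the exact-linearity identity $h_p(q+s\hat p)=h_p(q)+s$ is the same mechanism the paper uses when it substitutes $q=-p/\kappa$ and computes $h_p(-p/\kappa)=h_p(0)-\|p\|_2/\kappa$; you have just isolated the lemma more cleanly. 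I also agree with your diagnosis that the point of using this linearity is to turn a one-sided subgradient inequality (valid for all $q$) into a two-sided pinch on $\langle\tilde g,\hat p\rangle$.

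One small inaccuracy worth noting: you assert ``the hypothesis $d(p,K)>\delta$'' and use $\|p\|_2\ge r_+$, but the third branch is entered whenever the membership oracle answers $p\notin B_2(K,-\delta)$, which does not imply $d(p,K)>\delta$ or even $p\notin K$. What one actually gets (as the paper does in \eq{depth-of-subgrad}, via the inclusion $(1-\epsilon/r)K\subseteq B_2(K,-\epsilon)$) is $\|p\|_2 > (1-\epsilon/r)\,r_+$, hence $r_+-\|p\|_2 = O(\epsilon\kappa)$ rather than $\le 0$. This just adds an $O(\epsilon\kappa)$ term to your final estimate, which is subsumed by the target bound $(30000R+25)n^3\epsilon^{1/6}\kappa^2/\rho$, so the structure of the argument is unaffected --- but the step as written needs that correction.
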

\begin{proof}
Since $\delta \le \min\{r/7\kappa,1/7\kappa\}$, $\epsilon \le \min \lbrace 1,r \rbrace$. If $p \in B_2(K,\delta)$, the algorithm is trivially correct. If $p \notin B_2(0,R)$, the algorithm outputs a halfspace that contains $B_2(0,R)$ (and therefore contains $K$), and not $p$.

Finally, suppose $p \notin B_2(K,-\delta)$ and $p \in B_2(0,R)$. Since $\epsilon \ge \delta$, $p \notin B_2(K,-\epsilon)$.  The height function $h_p(x)$ is $3\kappa$-Lipschitz for all $x \in B_2(0,{r}/{2})$, where $\kappa:=R/r$. Define $r_1 = n\epsilon^{1/2}$. Since $\epsilon < \min \lbrace 1, r_1/n \rbrace$, \thm{general-quantum-gradient} implies
  \begin{equation}
    \label{eq:height-subgrad}
    h_p(x) \ge h_p(0) + \langle \tilde{g}, x \rangle - \zeta\norm{x}_\infty - 12nr_1\kappa
  \end{equation}
for any $x \in K$, where $\E\zeta \le \frac{15000\kappa n^{4/3}\epsilon^{1/3}}{r_1^{1/3}}$.

 Notice that $-{p}/{\kappa} \in K$ and $h_p\left( -{p}/{\kappa} \right) = h_p(0) - \frac{1}{\kappa}\norm{p}_2$. From \eq{height-subgrad},
  \begin{align}
    \label{eq:size-of-gradient}
    h_p(0) - \frac{1}{\kappa}\norm{p}_2 &\ge h_p(0) + \langle \tilde{g}, -{p}/{\kappa} \rangle - \frac{1}{\kappa}\zeta\norm{p}_\infty - 12nr_1\kappa,
  \end{align}
hence
\begin{align}
  \label{eq:size-of-gradient-2}
    \langle \tilde{g},p \rangle &\ge \norm{p}_2 - \zeta\norm{p}_\infty - 12nr_1\kappa^2.
  \end{align}
As claimed in \lin{mem-to-sep-6} of \algo{halfspace}, $h_p(x)$ can be evaluated with any precision $\epsilon$ such that $7\kappa\delta \le \epsilon$ using $O(\log(1/\epsilon))$ queries to a membership oracle with error $\delta$; the proof is deferred to \lem{height-eval}.

Since the membership oracle returns a negative response $p \notin B_2(K,-\delta)$, the error $\epsilon$ in $h_p(x)$ must be $\ge \delta$, and hence $p \notin B_2(K,-\epsilon)$. We are also given that $B_2(0,r) \subseteq K$. As a result, we have $\left( 1 - \frac{\epsilon}{r} \right)K \subseteq B_2(K,-\epsilon)$. Thus,
  \begin{equation}
    \label{eq:depth-of-subgrad}
    h_p(0) \ge - \left( 1 - \frac{\epsilon}{r} \right) \norm{p}_2 \ge -\norm{p}_2 + \epsilon\kappa.
  \end{equation}
  From \eq{height-subgrad}, \eq{size-of-gradient}, and \eq{depth-of-subgrad}, we have
  \begin{align}
    \label{eq:halfspace-cond}
    h_p(x) & \ge \langle \tilde{g},x-p \rangle - \zeta\norm{x}_\infty - \zeta\norm{p}_\infty - 12nr_1\kappa - 12nr_1\kappa^2 - \epsilon\kappa \\
           & \ge \langle \tilde{g},x-p \rangle - 2\zeta R - 24nr_1\kappa^2 - \epsilon\kappa,
  \end{align}
  so $\langle \tilde{g}, x-p \rangle \le \tilde{\zeta}$
  for all $x \in K$, where
  \begin{align}
    \label{eq:expected-error}
    \E\tilde{\zeta} &\le \frac{30000R n^{5/3}\epsilon^{1/3}\kappa}{r_1^{1/3}} +  24nr_1\kappa^2 + \epsilon\kappa \\
                    &\le 30000R n\epsilon^{1/6}\kappa +  24n^3\epsilon^{1/2}\kappa^2 + \epsilon\kappa \\
                    &\le (30000R + 25) n^3\epsilon^{1/6}\kappa^2.
  \end{align}
Thus the result follows from Markov's inequality.
\end{proof}

\begin{theorem}
  \label{thm:mem-to-sep}
  Let $K \subset \R^n$ be a convex set with $B_2(0,r) \subseteq K \subseteq B_2(0,R)$ and  $\kappa = R/r$ for some $R>r>0$, and let $\eta > 0$ be fixed. Further suppose that $R,r,\kappa=\poly(n)$. Then a separating oracle for $K$ with error $\eta$ can be implemented using $\softoh{1}$ queries to a membership oracle for $K$ and $\softoh{n}$ gates.
\end{theorem}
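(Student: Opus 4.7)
The plan is to apply \thm{gen-halfspace} with parameters tuned so that the resulting separating hyperplane meets the $\eta$-precision specification of \defn{sep_oracle}, and then to chain the resource bounds from subgradient estimation and from the implementation of $h_p$. First, I would set the failure probability to $\rho = \eta/2$ and pick the membership precision $\delta$ to be a sufficiently small inverse polynomial in $n$, $R$, $\kappa$, and $1/\eta$ so that, with $\epsilon = 7\kappa\delta$, the bound in \eq{expected-error} gives $\E\tilde{\zeta} \le (30000R + 25)n^3(7\kappa\delta)^{1/6}\kappa^2 \le \eta^2/4$. Markov's inequality then yields $\tilde{\zeta} \le \eta\,\|\tilde{g}\|_2/2$ with probability at least $1 - \rho$, after rescaling the halfspace as below. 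Under the standing hypothesis $R, r, \kappa = \poly(n)$, this forces only $\log(1/\delta) = O(\log(n/\eta))$, i.e., $\tilde{O}(1)$ bits of membership precision.

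Second, I would translate the halfspace output by \algo{halfspace} into the normalized form demanded by \defn{sep_oracle}. Setting $\hat{c} = -\tilde{g}/\|\tilde{g}\|_2$, what is required is $\hat{c}^T p \le \hat{c}^T y + \eta$ for every $y \in B_2(K, -\eta)$, which reduces to a lower bound on $\|\tilde{g}\|_2$. Note that in the nontrivial branch of \algo{halfspace} the first check did not fire, hence $p \notin B_2(K, -\delta)$; combined with $B_2(0, r-\delta) \subseteq B_2(K,-\delta)$ this gives $\|p\|_2 \ge r - \delta \ge 6r/7$ under the hypothesis on $\delta$ in \thm{gen-halfspace}. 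Substituting this bound into \eq{size-of-gradient-2} yields $\|\tilde{g}\|_2 = \Omega(1)$ after absorbing the small $\zeta$ and $r_1$ contributions, at which point the rescaling $\hat{c} = -\tilde{g}/\|\tilde{g}\|_2$ turns the output halfspace into the required certificate with slack at most $\eta$. The other two branches of \algo{halfspace}, namely the assertion of $\delta$-containment and the case $p \notin B_2(0,R)$, output the correct answer directly and deterministically.

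Third, the resource accounting chains as follows: \algo{halfspace} invokes $\algname{QuantumSubgradient}$ once, which by \lem{jordan-expected-bound} makes $\tilde{O}(1)$ queries to the evaluation oracle for $h_p$ and uses $\tilde{O}(n)$ gates; each such evaluation can be realized to precision $\epsilon$ using $O(\log(1/\epsilon)) = \tilde{O}(1)$ calls to the $\delta$-precision membership oracle for $K$, as indicated at \lin{mem-to-sep-6} of \algo{halfspace}. Multiplying gives $\tilde{O}(1)$ membership queries and $\tilde{O}(n)$ gates in total, which, together with the $\le \eta$ success/precision guarantee from the first two steps, matches \defn{sep_oracle}.

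The main obstacle is the parameter-tuning step: because the expected slack in \eq{expected-error} scales as $\epsilon^{1/6}$, driving it below $\eta^2$ forces $\epsilon$ to be an inverse polynomial in $n/\eta$, and without the assumption $R, r, \kappa = \poly(n)$ the resulting $\log(1/\delta)$ would not collapse into $\tilde{O}(1)$. A secondary subtlety is the passage from the halfspace $\{\langle \tilde{g},x-p\rangle \le C\}$ to a unit-vector certificate; this is what forces us to track a lower bound on $\|\tilde{g}\|_2$, which in turn is why the inner-ball radius $r$ appears in the calibration of $\delta$.
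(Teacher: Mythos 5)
Your argument follows the paper's own route: apply \thm{gen-halfspace}, translate the halfspace into a unit-vector certificate by normalizing $\tilde{g}$, lower-bound $\norm{\tilde{g}}_2$ via \eq{size-of-gradient-2}, and observe that the $\epsilon^{1/6}$ dependence in \eq{expected-error} lets $\log(1/\delta)$ collapse to $\tilde{O}(1)$ under the $\poly(n)$ hypotheses. The only difference is cosmetic: the paper balances the failure probability against the error margin (setting $\rho$ to be the square root of the margin), whereas you fix $\rho=\eta/2$ and push $\delta$ smaller; both lead to $1/\delta = \poly(n,1/\eta)$.

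One small imprecision worth flagging: from $\langle\tilde{g},p\rangle \gtrsim r$ and $\norm{p}_2 \le R$ you actually get $\norm{\tilde{g}}_2 = \Omega(1/\kappa)$ rather than $\Omega(1)$, exactly as the paper records in \eq{size-of-g}. This forces the Markov threshold to be of the form $\eta/\kappa$ rather than $\eta$, so your target $\E\tilde{\zeta} \le \eta^2/4$ should really be $\E\tilde{\zeta} \lesssim \eta^2/\kappa$. Since $\kappa = \poly(n)$, this only shifts $\delta$ by another $\poly(n)$ factor and has no effect on the $\tilde{O}(1)$ conclusion, but the bookkeeping as written does not quite close without it.
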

\begin{proof}
 Clearly, the unit vector in the direction $\tilde{g}$ (from \algo{halfspace}) determines a separating hyperplane given a point $p \notin B_2(K,-\epsilon)$. From \eq{size-of-gradient-2}, we have
  \begin{align}
    \langle \tilde{g},p \rangle \ge \norm{p}_2 -\left(\frac{15000n^{5/3}\epsilon^{1/3}\kappa}{r_1}\right)\norm{p}_\infty.
  \end{align}
  Letting $\frac{15000n^{5/3}\epsilon^{1/3}}{r_1} < \frac{1}{2\kappa^2}$, we have
  \begin{align}
    \label{eq:size-of-g}
    \norm{\tilde{g}}_2 R \ge r - \frac{R}{2\kappa}\ \hence \ \norm{\tilde{g}}_2 \ge \frac{1}{2\kappa}.
  \end{align}
  Thus, we have a separating oracle with error margin $\left(60000R + 50\right) n^{3}\epsilon^{1/6}\kappa^{3} \rho^{-1}$ and failure probability $\rho$. Setting $\rho = \bigl( \left(60000R + 50\right) n^{3}\epsilon^{1/6}\kappa^{3} \bigr)^{1/2}$, we have a composite error of $\bigl(\left(60000R + 50\right) n^{3}\epsilon^{1/6}\kappa^{3}\bigr)^{1/2}$.  To have error at most $\eta$, we take $\epsilon \le \eta^6/\bigl( ( 60000R + 15)^6 n^{18}\kappa^{18} \bigr)$.

  We finally obtain
  \begin{align}
    \label{eq:bound-on-epsilon}
     \delta = \frac{\epsilon}{7\kappa} \le \frac{1}{7\kappa}\min \Biggl\{ \frac{\eta^6}{\left( 60000R + 50\right)^6 n^{18}\kappa^{18}}, \frac{1}{8\kappa^6\Bigl(\frac{15000n^{5/3}\epsilon^{1/3}}{r_1}\Bigr)^3}, r, 1 \Biggr\}.
  \end{align}
  Consequently, we have $\SEP_\eta = \softoh{1} \MEM_\delta$, where
\begin{align}
     \delta = \frac{\epsilon}{7\kappa} \le \frac{1}{7\kappa}\min \Biggl\{ \frac{\eta^6}{\left( 60000R + 50\right)^6 n^{18}\kappa^{18}}, \frac{1}{8\kappa^6\Bigl(\frac{15000n^{5/3}\epsilon^{1/3}}{r_1}\Bigr)^3}, r, 1 \Biggr\}.
\end{align}
Therefore, ${1}/{\epsilon}$ and ${1}/{\delta}$ are both $O(\poly(n))$. Implementing the evaluation oracle takes $\poly(\log(1/\epsilon))$ membership queries and a further $\softoh{1}$ queries are used for the subgradient.

The evaluation requires $\softoh{1/\epsilon}$ gates and the $\algname{SmoothQuantumGradient}$ uses $n\poly(\log(1/\epsilon))$ gates. Thus a total of $\poly(\log({1}/{\eta}))$ queries and $n \poly(\log({1}/{\eta}))$ gates are used.
\end{proof}

\subsection{Separation to optimization}
\label{sec:sep-to-opt}

It is known that an optimization oracle for a convex set can be implemented in $\softoh{n}$ queries to a separation oracle. Specifically, Theorem 15 of \cite{lee2017efficient} states:

\begin{theorem}[Separation to Optimization]
  \label{thm:sep-to-opt}
  Let $K$ be a convex set satisfying $B_2(0,r) \subset K \subset B_2(0,R)$ and let $\kappa = 1/r$. For any $0 < \epsilon < 1$, with probability $1 - \epsilon$, we can compute $x \in B_2(K,\epsilon)$ such that $c^Tx \le \min_{x \in K} c^Tx + \epsilon \norm{c}_2$, using $O(n\log({n\kappa}/{\epsilon}))$ queries to $\SEP_\eta(K)$, where $\eta = \poly(\epsilon/n\kappa)$, and $\tilde{O}(n^3)$ arithmetic operations.
\end{theorem}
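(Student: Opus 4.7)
The plan is to reduce optimization of the linear functional $c^T x$ to a sequence of feasibility problems handled by a cutting-plane method driven by the separation oracle. First I would set up a doubly-nested scheme: an outer binary search over candidate optimal values $t \in [-R\norm{c}_2, R\norm{c}_2]$ (so that $O(\log(R\norm{c}_2/\epsilon\norm{c}_2)) = O(\log(n\kappa/\epsilon))$ outer steps suffice), and for each guess $t$ an inner procedure that either returns a point $x \in B_2(K,\eta)$ with $c^T x \le t$ or certifies that no such point exists in $B_2(K,-\eta)$. Calling the outer loop and returning the smallest feasible $t$ yields $x \in B_2(K,\epsilon)$ with $c^T x \le \min_{x \in K} c^T x + \epsilon\norm{c}_2$ provided $\eta$ is taken polynomially small in $\epsilon/(n\kappa)$.

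For the inner feasibility step I would maintain a decreasing sequence of polytopes $P_0 \supseteq P_1 \supseteq \cdots$ with $P_0 = B_\infty(0,R)$ and the invariant that $P_k$ contains every feasible point (every $x \in B_2(K,-\eta) \cap \{c^T x \le t\}$). At each iteration I would compute an appropriately defined ``center'' $x_k$ of $P_k$ and query $\SEP_\eta(K)$ at $x_k$. If the oracle asserts $x_k \in B_2(K,\eta)$ and $c^T x_k \le t$, the step succeeds and the algorithm returns $x_k$; if $x_k \in B_2(K,\eta)$ but $c^T x_k > t$, I would add the objective cut $\{x : c^T x \le c^T x_k\}$; otherwise, I would add the separating halfspace returned by the oracle. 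In every branch $P_{k+1}$ is a strict sub-polytope of $P_k$, and, by the choice of center, it has provably smaller ``potential'' (volume, volumetric-barrier value, or the hybrid-barrier potential of~\cite{lee2015faster}).

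The heart of the argument, and the main obstacle, is showing that with the right center this potential drops by a constant factor every $O(1)$ iterations, so that after $O(n \log(n\kappa/\epsilon))$ iterations the polytope is small enough that either it is empty (certifying infeasibility at level $t$) or the current center is essentially on the boundary of $K$ at objective value at most $t$. This is exactly the progress analysis of the cutting-plane method of~\cite{lee2015faster}, which uses leverage-score sampling and a hybrid barrier to play the role that the volumetric center plays in Vaidya's method, while allowing the center to be updated incrementally. I would invoke that analysis as a black box after verifying that the oracle responses in the present setting (approximate membership with tolerance $\eta$) satisfy the hypotheses of their framework; the required slackness is what forces $\eta = \poly(\epsilon/n\kappa)$.

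For the arithmetic cost I would maintain, across iterations, a factorization of the constraint matrix $A_k$ defining $P_k$ together with its leverage scores, so that each center update and each new cut can be absorbed in $\tilde O(n^2)$ arithmetic operations via low-rank updates, with occasional full recomputations of cost $\tilde O(n^3)$ amortized over $\Theta(n)$ steps. Summed over $\tilde O(n)$ iterations and $\tilde O(1)$ outer binary-search steps this yields the claimed $\tilde O(n^3)$ arithmetic bound, while the query count is simply the number of separation calls, $O(n\log(n\kappa/\epsilon))$. The success probability $1-\epsilon$ is then obtained by a union bound over the $\tilde O(n)$ separation queries, each of which fails with probability at most $\eta$, after choosing the polynomial in $\eta = \poly(\epsilon/n\kappa)$ appropriately.
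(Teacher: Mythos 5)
The paper does not actually prove this statement: it is quoted verbatim as Theorem~15 of~\cite{lee2017efficient}, and the text surrounding it simply says ``It is known that\dots'' and cites the source. So there is no in-paper proof for you to match; the authors treat the separation-to-optimization reduction, including the $O(n\log(n\kappa/\epsilon))$ query count and the $\tilde O(n^3)$ arithmetic cost, as a black box supplied by the cutting-plane machinery of~\cite{lee2015faster,lee2017efficient}.

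Your sketch is a reasonable reconstruction of what sits behind that citation, and it correctly identifies the two load-bearing ingredients: a potential-decrease argument for the cutting-plane method that tolerates an approximate, probabilistic separation oracle, and amortized low-rank maintenance of the center to keep the arithmetic at $\tilde O(n^3)$. But note two things. First, since you ultimately invoke the progress analysis of~\cite{lee2015faster} as a black box anyway, your argument is not more self-contained than the paper's bare citation; the genuinely hard part (the hybrid-barrier potential decrease, its robustness to $\eta$-approximate cuts, and the leverage-score maintenance) is still being outsourced. Second, your doubly-nested scheme of an outer binary search over $t$ wrapped around an inner feasibility solver gives $O(n\log^2(n\kappa/\epsilon))$ separation queries rather than the stated $O(n\log(n\kappa/\epsilon))$: each inner feasibility call already costs $\Theta(n\log(n\kappa/\epsilon))$ cuts, and the binary search multiplies this by another $\Theta(\log(n\kappa/\epsilon))$. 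The cited result avoids this extra logarithmic factor by running a single cutting-plane pass in which objective cuts $\{c^T x \le c^T x_k\}$ are interleaved with separation cuts, tracking the best feasible value found so far rather than binary-searching over it. Within the $\tilde O(\cdot)$ notation used elsewhere in the paper this difference is immaterial, but it does not match the explicit $O(n\log(n\kappa/\epsilon))$ constant that the theorem as stated promises.
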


From \thm{sep-to-opt} and \thm{mem-to-sep}, we have the following result

\begin{theorem}[Membership to Optimization]
  \label{thm:mem-to-opt}
  Let $K$ be a convex set satisfying $B_2(0,r) \subset K \subset B_2(0,R)$ and let $\kappa = 1/r$. For any $0 < \epsilon < 1$, with probability $1 - \epsilon$, we can compute $x \in B_2(K,\epsilon)$ such that $c^Tx \le \min_{x \in K} c^Tx + \epsilon$, using $\softoh{n}$ queries to a membership oracle for $K$ with error $\delta$, where $\delta = O(\poly(\epsilon)) $, and $\tilde{O}(n^3)$ gates.
\end{theorem}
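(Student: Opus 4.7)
The plan is to obtain \thm{mem-to-opt} by directly composing the two previously established reductions: \thm{sep-to-opt}, which gives optimization from separation, and \thm{mem-to-sep}, which gives separation from membership. Since both reductions are already stated with explicit query, gate, and error guarantees, the proof is essentially a bookkeeping argument that tracks how the accuracy parameters propagate through the composition, together with a routine cost accounting.

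First, I would invoke \thm{sep-to-opt} with the target precision $\epsilon$ as given. This produces a point $x\in B_2(K,\epsilon)$ satisfying $c^Tx\le\min_{y\in K}c^Ty+\epsilon\|c\|_2$ (so, after normalizing $c$, satisfying the bound in the statement), using $O(n\log(n\kappa/\epsilon))=\softoh{n}$ calls to a separation oracle of precision $\eta=\poly(\epsilon/n\kappa)$, plus $\tilde{O}(n^3)$ arithmetic operations. Next, I would implement each such separation call by invoking \thm{mem-to-sep} with error parameter $\eta$; this gives a separating-oracle implementation using $\softoh{1}$ membership queries at some precision $\delta=\delta(\eta,n,\kappa)$ and $\softoh{n}$ additional gates per call. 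Multiplying, the overall membership query count is $\softoh{n}\cdot\softoh{1}=\softoh{n}$ and the additional gate cost from the separation implementations is $\softoh{n}\cdot\softoh{n}=\softoh{n^2}$, which is absorbed by the $\tilde{O}(n^3)$ from the optimization-to-separation reduction.

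The main technical step, and the only place care is needed, is verifying that the required membership precision $\delta$ is indeed $\poly(\epsilon)$ (with $\kappa$ and $n$ contributing only polynomial factors, as already assumed in \thm{mem-to-sep} via $R,r,\kappa=\poly(n)$). Tracing through \eq{bound-on-epsilon} in the proof of \thm{mem-to-sep}, we have $\delta=\eta^6/\poly(n,R,\kappa)$ up to the minimum with $1/\poly(n,\kappa)$, $r$, and $1$. Substituting $\eta=\poly(\epsilon/n\kappa)$ from \thm{sep-to-opt} yields $\delta=\poly(\epsilon)/\poly(n,\kappa)=O(\poly(\epsilon))$ under the standing assumption that $\kappa,R,r$ are polynomial in $n$, matching the claimed dependence. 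The failure-probability accounting is similarly straightforward: the separation-to-optimization reduction fails with probability $\epsilon$, and each of the $\softoh{n}$ separation calls can be made to fail with probability at most $\epsilon/\softoh{n}$ inside \thm{mem-to-sep} (absorbed into polylog factors of $1/\epsilon$ in the $\softoh{\cdot}$ notation), so a union bound preserves overall success probability $1-\epsilon$.

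The hard part here is not conceptual but rather ensuring that the exponents in the $\poly$ expressions compose consistently so that $\delta$ remains a polynomial in $\epsilon$ (not, say, quasi-polynomially small), and that all the polylogarithmic slack from $\softoh{\cdot}$ in both reductions combines into a single $\softoh{n}$ query bound and $\tilde{O}(n^3)$ gate bound; since both contributing theorems are already stated in $\softoh{\cdot}$ form with explicit polynomial error dependencies, this composition goes through cleanly. After this accounting, the statement follows immediately.
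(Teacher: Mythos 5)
Your proposal is correct and follows essentially the same route as the paper's own (very terse) proof: invoke \thm{sep-to-opt} to reduce to $\softoh{n}$ separation calls with precision $\eta=\poly(\epsilon/n\kappa)$, implement each via \thm{mem-to-sep} using $\softoh{1}$ membership queries at precision $\delta=\poly(\epsilon)$, and multiply. Your additional bookkeeping on the exponent composition, the $\|c\|_2$ normalization, and the union bound over failure probabilities is more careful than what the paper writes out, but it is the same argument.
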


\begin{proof}
  Using \thm{mem-to-sep} with $\eta = \poly(\epsilon/n\kappa)$, each query to the separation oracle requires $\softoh{1}$ queries to a membership oracle with error $\delta = O(\poly(\epsilon))$. We make $\softoh{n}$ separation queries and perform a further $\tilde{O}(n^3)$ arithmetic operations, so the result follows.
\end{proof}

\thm{upper-bound-informal} follows directly from \thm{mem-to-opt}.


\section{Lower bound}\label{sec:lower-bound}

In this section, we prove our quantum lower bound on convex optimization (\thm{lower-main}). We prove separate lower bounds on membership queries (\sec{lower-membership}) and evaluation queries (\sec{lower-evaluation}). We then combine these lower bounds into a single optimization problem in \sec{main-together}, establishing \thm{lower-main}.

\subsection{Membership queries}\label{sec:lower-membership}
In this subsection, we establish a membership query lower bound using a reduction from the following search-with-wildcards problem:
\begin{theorem}[{\cite[Theorem 1]{ambainis2014quantum}}]\label{thm:wildcard}
For any $s\in\{0,1\}^{n}$, let $O_s$ be a wildcard oracle satisfying
\begin{align}\label{eqn:wildcard-defn}
O_{s}|T\>|y\>|0\>=|T\>|y\>|Q_{s}(T,y)\>
\end{align}
for all $T\subseteq\range{n}$ and $y\in\{0,1\}^{|T|}$, where $Q_{s}(T,y)=\delta[s_{|T}=y]$. Then the bounded-error quantum query complexity of determining $s$ is $O(\sqrt{n}\log n)$ and $\Omega(\sqrt{n})$.
\end{theorem}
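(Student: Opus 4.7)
The theorem has two parts: an $O(\sqrt{n}\log n)$ upper bound and an $\Omega(\sqrt{n})$ lower bound on the quantum query complexity of recovering $s$ from wildcard queries to $O_s$. I would approach these two bounds separately and combine standard quantum search techniques with a reduction argument for the lower bound.

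For the lower bound, my plan is to use the quantum adversary method, specialising to a hard sub-family of inputs. Restrict attention to strings of Hamming weight at most one, i.e., $s \in \{0^n\} \cup \{e_1,\ldots,e_n\}$, where $e_i$ is the $i$-th standard basis string. A wildcard query $(T,y)$ distinguishes $0^n$ from $e_i$ only when $i \in T$ and $y$ agrees with $e_i$ on $T$, because only then does $Q_{s}(T,y)$ differ between the two hypotheses. Setting up the adversary matrix with uniform weights between the row $\{0^n\}$ and the columns $\{e_1,\ldots,e_n\}$, each of the $n$ columns can be ``hit'' by relatively few queries, and the resulting adversary bound evaluates to $\Omega(\sqrt{n})$. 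Equivalently, one can exhibit an explicit reduction from unstructured search on $n$ elements: the wildcard oracle for $s=e_i$ lets one simulate the Grover oracle for the marked element $i$ with one wildcard query per Grover query, and since search is $\Omega(\sqrt{n})$ the same bound transfers.

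For the upper bound I would adapt Grover search in a round-based learning framework. Maintain a set $U_t \subseteq \range{n}$ of still-unknown coordinates, together with the bits of $s$ already identified. The key subroutine, given a candidate value $b \in \{0,1\}$, uses wildcard queries of the form $(T, b^{|T|})$ on subsets $T \subseteq U_t$ as a Grover-style checking predicate: a query returns 1 iff $s$ takes value $b$ on all of $T$. Using amplitude amplification over positions in $U_t$, one can find an unknown coordinate whose value is $b$ in $O(\sqrt{|U_t|})$ wildcard queries, and a subsequent binary search of $O(\log n)$ wildcard queries on nested subsets of $T$ pinpoints a specific position and its bit. Iterating and carefully balancing the two values $b \in \{0,1\}$ lets one learn a constant fraction of $U_{t}$ per round, so that $|U_{t+1}| \le |U_t|/2$. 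The telescoping cost is then
\begin{equation*}
\sum_{t=0}^{O(\log n)} O\bigl(\sqrt{|U_t|}\,\log n\bigr) \;=\; O(\sqrt{n}\,\log n),
\end{equation*}
matching the claim.

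The main obstacle will be establishing that a round actually learns a constant fraction of $U_t$ with only $O(\sqrt{|U_t|})$ queries, rather than a single bit at a time (which would give the weaker $O(n\sqrt{n})$ bound). Overcoming this requires combining amplitude amplification with an adaptive stopping rule driven by amplitude estimation: one estimates the fraction of coordinates in $U_t$ that take each value and then uses the appropriate parallel Grover search to produce many witnesses before re-estimating. Getting a clean analysis of this amortized schedule, and verifying that the $\log n$ overhead comes only from the number of rounds and the binary-search step, is the delicate part of the argument.
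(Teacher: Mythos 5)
This statement is Theorem 1 of Ambainis and Montanaro \cite{ambainis2014quantum}, which the paper cites as a black box and does not re-prove, so there is no ``paper proof'' to compare against; I will assess your attempt on its own.

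Your lower bound argument does not work. The family $\{0^n\}\cup\{e_1,\dots,e_n\}$ is far too easy for wildcard queries: the single query $(\range{n},0^n)$ decides whether $s=0^n$, and if $s=e_i$ then the queries $(T,0^{|T|})$ return $\delta[i\notin T]$, so a classical binary search on $T$ recovers $i$ in $O(\log n)$ further queries. Hence this restricted problem has classical query complexity $O(\log n)$, and in particular cannot exhibit an $\Omega(\sqrt{n})$ quantum lower bound. The root of the error is in your description of when $(T,y)$ distinguishes $0^n$ from $e_i$: you omit the case $y=0^{|T|}$. With $y=0^{|T|}$, the query distinguishes $0^n$ from \emph{every} $e_i$ with $i\in T$ simultaneously, and this single query already crushes the $1\times n$ adversary matrix you propose: with uniform weights $\Gamma$ has $\|\Gamma\|=\sqrt{n}$, but for $q=(\range{n},0^n)$ one has $\Gamma\circ D_q=\Gamma$, so the adversary ratio is $1$, not $\Omega(\sqrt{n})$. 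The fallback ``reduction from unstructured search'' is stated in the wrong direction: showing that wildcard queries can \emph{simulate} the Grover oracle for $s=e_i$ just says search-with-wildcards is at least as powerful as Grover search, which yields an upper bound for the simpler task, not a lower bound for the harder one. A valid lower bound must consider a richer family of strings (e.g., all of $\{0,1\}^n$) and a more carefully chosen adversary matrix; the Hamming-weight-$\le 1$ slice is the wrong place to look.

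The upper bound sketch also has a gap you partly acknowledge. To get $O(\sqrt{n}\log n)$ via halving rounds, each round must learn $\Omega(|U_t|)$ bits in $O(\sqrt{|U_t|}\log n)$ queries, but the mechanism you describe does not achieve this. Grover search with a predicate of the form $(\{i\},b)$ finds \emph{one} coordinate $i$ with $s_i=b$ in $O(\sqrt{|U_t|/k_b})$ queries; to enumerate all $k_b$ of them by repeated amplitude amplification costs $\Theta(\sqrt{|U_t|\,k_b})$ queries, which in the balanced case $k_b=\Theta(|U_t|)$ is $\Theta(|U_t|)$, no better than classical. The binary search you mention (over nested $T$) recovers a single position, so it does not rescue the per-round budget. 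So as written the round cost is not $O(\sqrt{|U_t|}\log n)$, and the ``delicate amortization'' you flag is in fact where the entire quantum speedup has to come from; it requires a genuinely different idea (the actual Ambainis--Montanaro construction is considerably more intricate than iterated Grover on single bits). If you want to prove the upper bound yourself, I would recommend reading their argument rather than trying to rederive it from Grover and amplitude estimation alone.
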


We use \thm{wildcard} to give an $\Omega(\sqrt{n})$ lower bound on membership queries for convex optimization. Specifically, we consider the following \emph{sum-of-coordinates optimization problem}:
\begin{definition}\label{defn:sum-coordinate}
Let
\begin{align}\label{eqn:sum-coordinate}
\mathcal{C}_{s}:=\bigtimes_{i=1}^{n}[s_{i}-2,s_{i}+1],\qquad s_{i}\in\{0,1\}\ \ \forall\,i\in\range{n},
\end{align}
where $\bigtimes$ is the Cartesian product on different coordinates. In the \emph{sum-of-coordinates optimization} problem, the goal is to minimize
\begin{align}\label{eqn:sum-coordinate-function}
f(x)=\sum_{i\in\range{n}}x_{i}\quad \text{s.t. }x\in\mathcal{C}_{s}.
\end{align}
\end{definition}
\noindent
Intuitively, \defn{sum-coordinate} concerns an optimization problem on a hypercube where the function is simply the sum of the coordinates, but the position of the hypercube is unknown. Note that the function $f$ in \eqn{sum-coordinate-function} is convex and 1-Lipschitz continuous.

We prove the hardness of solving sum-of-coordinates optimization using its membership oracle:

\begin{theorem}\label{thm:main-membership}
Given an instance of the sum-of-coordinates optimization problem with membership oracle $O_{\mathcal{C}_{s}}$, it takes $\Omega(\sqrt{n})$ quantum queries to $O_{\mathcal{C}_{s}}$ to output an $\tilde{x}\in\mathcal{C}_{s}$ such that
\begin{align}\label{eqn:sum-of-coordinate}
f(\tilde{x})\leq\min_{x\in\mathcal{C}_{s}} f(x)+\frac{1}{3},
\end{align}
with success probability at least $0.9$.
\end{theorem}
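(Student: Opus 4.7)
The plan is to reduce the search-with-wildcards problem of \thm{wildcard} to sum-of-coordinates optimization: a quantum algorithm that solves the latter with $Q$ queries to $O_{\mathcal{C}_s}$ will be converted into a wildcard algorithm for $s$ using $Q$ queries to $O_s$, so the $\Omega(\sqrt{n})$ bound on the wildcard problem transfers directly to $Q$.

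The first step is to implement $O_{\mathcal{C}_s}$ coherently using a single call to $O_s$. Given $x \in \R^n$, partition $\range{n}$ into $T_{x,\text{mid}} := \{i : x_i \in [-1,1]\}$, $T_{x,\text{out}} := \{i : x_i \notin [-2,2]\}$, $T_{x,0} := \{i : x_i \in [-2,-1)\}$, and $T_{x,1} := \{i : x_i \in (1,2]\}$. Coordinates in $T_{x,\text{mid}}$ always lie in $[s_i - 2, s_i + 1]$, those in $T_{x,\text{out}}$ always lie outside, while an index $i \in T_{x,0}$ (resp.\ $T_{x,1}$) lies in range iff $s_i = 0$ (resp.\ $s_i = 1$). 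Hence $x \in \mathcal{C}_s$ iff $T_{x,\text{out}} = \emptyset$ and $s_{|T(x)} = y(x)$, where $T(x) := T_{x,0} \cup T_{x,1}$ and $y(x)_i := \delta[i \in T_{x,1}]$ for $i \in T(x)$. Since $T(x)$, $y(x)$, and the flag $[T_{x,\text{out}} = \emptyset]$ are computable from $x$ alone, we compute them reversibly into ancillas, apply $O_s$ once to produce $Q_s(T(x), y(x))$, AND it with the out-of-range flag to obtain $[x \in \mathcal{C}_s]$, and uncompute the ancillas; this gives a genuinely unitary simulation of $O_{\mathcal{C}_s}$ by one query to $O_s$ that works in superposition.

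The second step is to show that any $\tilde{x} \in \mathcal{C}_s$ satisfying \eqn{sum-of-coordinate} determines $s$. The minimum of $f(x) = \sum_i x_i$ on $\mathcal{C}_s$ is attained at $x^*_i = s_i - 2$, so the nonnegative deviations $\tilde{x}_i - (s_i - 2)$ sum to at most $1/3$ and each is therefore at most $1/3$. Consequently $\tilde{x}_i \in [-2, -5/3]$ when $s_i = 0$ and $\tilde{x}_i \in [-1, -2/3]$ when $s_i = 1$; these intervals are disjoint, so $s_i$ is recovered by rounding $\tilde{x}_i + 2$ to the nearest integer. Composing any hypothetical $Q$-query optimization algorithm with the simulation of the first step yields a wildcard algorithm for $s$ with $Q$ queries to $O_s$ and success probability at least $0.9$, so \thm{wildcard} forces $Q = \Omega(\sqrt{n})$. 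The main obstacle I anticipate is the careful unitary bookkeeping in the first step --- computing and uncomputing the partition data and the wildcard string in a way that preserves interference across different $x$ in a query superposition; once that is in place, the remainder of the argument is essentially combinatorial.
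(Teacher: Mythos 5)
Your proposal is correct and follows essentially the same route as the paper: the same reduction from search with wildcards, the same four-way partition of $\range{n}$ into $T_{x,0},T_{x,1},T_{x,\text{mid}},T_{x,\text{out}}$ to simulate a membership query by one wildcard query, and the same rounding argument to recover $s$ from a $1/3$-approximate minimizer. The explicit remark about reversibly computing and uncomputing the partition data so the simulation works coherently is a welcome clarification but is the standard treatment the paper leaves implicit.
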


\begin{proof}
Assume that we are given an arbitrary string $s\in\{0,1\}^{n}$ together with the membership oracle $O_{\mathcal{C}_{s}}$ for the sum-of-coordinates optimization problem.

We prove that a quantum query to $O_{\mathcal{C}_{s}}$ can be simulated by a quantum query to the oracle $O_{s}$ in \eqn{wildcard-defn} for search with wildcards. Consider an arbitrary point $x\in\R^{n}$ in the sum-of-coordinates problem. We partition $\range{n}$ into four sets:
\begin{align}
T_{x,0}&:=\big\{i\in\range{n}\mid x_{i}\in[-2,-1)\big\} \\
T_{x,1}&:=\big\{i\in\range{n}\mid x_{i}\in(1,2]\big\} \\
T_{x,\text{mid}}&:=\big\{i\in\range{n}\mid x_{i}\in[-1,1]\big\} \\
T_{x,\text{out}}&:=\big\{i\in\range{n}\mid |x_{i}|>2\big\},
\end{align}
and denote $T_{x}:=T_{x,0}\cup T_{x,1}$ and $y^{(x)}\in\{0,1\}^{|T_{x}|}$ such that
\begin{align}\label{eqn:membership-yx}
y^{(x)}_{i}=\begin{cases}
    0 & \text{if $i\in T_{x,0}$} \\
    1 & \text{if $i\in T_{x,1}$}.
  \end{cases}
\end{align}
We prove that $O_{\mathcal{C}_{s}}(x)=Q_{s}(T_{x},y^{(x)})$ if $T_{x,\text{out}}=\emptyset$, and $O_{\mathcal{C}_{s}}(x)=0$ otherwise. On the one hand, if $O_{\mathcal{C}_{s}}(x)=1$, we have $x\in\mathcal{C}_{s}$. Because for all $i\in\range{n}$, $x_{i}\in[s_{i}-2,s_{i}+1]\subset[-2,2]$ for both $s_{i}=0$ and $s_{i}=1$, we must have $T_{x,\text{out}}=\emptyset$. Now consider any $i\in T_{x}$. If $i\in T_{x,0}$, then $x_{i}\in[-2,-1)$. Because $x_{i}\in[0-2,0+1]$ and $x_{i}\notin[1-2,1+1]$, we must have $s_{i}=0$ since $x_{i}\in[s_{i}-2,s_{i}+1]$. Similarly, if $i\in T_{x,1}$, then we must have $s_{i}=1$. As a result of \eqn{membership-yx}, for all $i\in T_x$ we have $s_{i}=y^{(x)}_{i}$; in other words, $s_{|T_{x}}=y^{(x)}$ and $Q_{s}(T_{x},y^{(x)})=1=O_{\mathcal{C}_{s}}(x)$.

On the other hand, if $O_{\mathcal{C}_{s}}(x)=0$, there exists an $i_{0}\in\range{n}$ such that $x_{i_{0}}\notin[s_{i_{0}}-2,s_{i_{0}}+1]$. We must have $i_{0}\notin T_{x,\text{mid}}$ because $[-1,1]\subset[s_{i_{0}}-2,s_{i_{0}}+1]$ regardless of whether $s_{i_{0}}=0$ or $s_{i_{0}}=1$. Next, if $i_{0}\in T_{x,\text{out}}$, then $T_{x,\text{out}}\neq\emptyset$ and we correctly obtain $O_{\mathcal{C}_{s}}(x)=0$. The remaining cases are $i_{0}\in T_{x,0}$ and $i_{0}\in T_{x,1}$. If $i_{0}\in T_{x,0}$, because $x_{i_{0}}\in[-2,-1)\subset[0-2,0+1]$ and $x_{i_{0}}\notin[s_{i_{0}}-2,s_{i_{0}}+1]$, we must have $s_{i_{0}}=1$, and thus $s_{|T_{x}}\neq y^{(x)}$ because $y^{(x)}_{i_{0}}=0$ by \eqn{membership-yx}. If $i_{0}\in T_{x,1}$, we similarly have $s_{i_{0}}=0$, $y^{(x)}_{i_{0}}=1$, and thus $s_{|T_{x}}\neq y^{(x)}$. In both cases, $s_{|T_{x}}\neq y^{(x)}$, so $Q_{s}(T_{x},y^{(x)})=0=O_{\mathcal{C}_{s}}(x)$.

Therefore, we have established that $O_{\mathcal{C}_{s}}(x)=Q_{s}(T_{x},y^{(x)})$ if $T_{x,\text{out}}=\emptyset$, and $O_{\mathcal{C}_{s}}(x)=0$ otherwise. In other words, a quantum query to $O_{\mathcal{C}_{s}}$ can be simulated by a quantum query to $O_{s}$.

We next prove that a solution $\tilde{x}$ of the sum-of-coordinates problem satisfying \eqn{sum-of-coordinate} solves the search-with-wildcards problem in \thm{wildcard}. Because $\min_{x\in\mathcal{C}_{s}}f(x)=\sum_{i=1}^{n}(s_{i}-2)$, we have
\begin{align}\label{eqn:sum-of-coordinate-approx}
f(\tilde{x})=\sum_{i=1}^{n}\tilde{x}_{i}\leq\frac{1}{3}+\sum_{i=1}^{n}(s_{i}-2).
\end{align}
On the one hand, for all $j\in\range{n}$ we have $\tilde{x}_{j}\geq s_{j}-2$ since $\tilde{x}\in\mathcal{C}_{s}$; on the other hand, by \eqn{sum-of-coordinate-approx} we have
\begin{align}
\frac{1}{3}+\sum_{i=1}^{n}(s_{i}-2)\geq\sum_{i=1}^{n}\tilde{x}_{i}\geq \tilde{x}_{j}+\sum_{i\in\range{n},\ i\neq j}(s_{i}-2),
\end{align}
which implies $\tilde{x}_{j}\leq s_{j}-2+\frac{1}{3}$. In all,
\begin{align}\label{eqn:membership-tilde-x}
\tilde{x}_{i}\in[s_{i}-2,s_{i}-2+\tfrac{1}{3}]\quad\forall\,i\in\range{n}.
\end{align}

Define a rounding function $\sgn_{-3/2}\colon\R\to\{0,1\}$ as
\begin{align}\label{eqn:sgn-3/2}
\sgn_{-3/2}(z)=\begin{cases}
    0 & \text{if $z<-3/2$} \\
    1 & \text{otherwise}.
  \end{cases}
\end{align}
We prove that $\sgn_{-3/2}(\tilde{x})=s$ (here $\sgn_{-3/2}$ is applied on all $n$ coordinates, respectively). For all $i\in\range{n}$, if $s_{i}=0$, then $\tilde{x}_{i}\in[-2,-\frac{5}{3}]\subset(-\infty,-\frac{3}{2})$ by \eqn{membership-tilde-x}, which implies $\sgn_{-3/2}(\tilde{x}_{i})=0$ by \eqn{sgn-3/2}. Similarly, if $s_{i}=1$, then $\tilde{x}_{i}\in[-1,-\frac{2}{3}]\subset(-\frac{3}{2},+\infty)$ by \eqn{membership-tilde-x}, which implies $\sgn_{-3/2}(\tilde{x}_{i})=1$ by \eqn{sgn-3/2}.

In all, if we can solve the sum-of-coordinates optimization problem with an $\tilde{x}$ satisfying \eqn{sum-of-coordinate}, we can solve the search-with-wildcards problem. By \thm{main-membership}, the search-with-wildcards problem has quantum query complexity $\Omega(\sqrt{n})$; since a query to the membership oracle $O_{\mathcal{C}_{s}}$ can be simulated by a query to the wildcard oracle $O_{s}$, we have established an $\Omega(\sqrt{n})$ quantum lower bound on membership queries to solve the sum-of-coordinates optimization problem.
\end{proof}

\subsection{Evaluation queries}\label{sec:lower-evaluation}
In this subsection, we establish an evaluation query lower bound by considering the following \emph{max-norm optimization problem}:

\begin{definition}\label{defn:max-OPT}
In the \emph{max-norm optimization problem}, the goal is to minimize a function $f_{c}\colon\R^{n}\rightarrow\R$ satisfying
\begin{align}\label{eqn:max-OPT-defn}
f_{c}(x)=\max_{i\in\range{n}}|\pi(x_{i})-c_{i}|+\Big(\sum_{i=1}^{n}|\pi(x_{i})-x_{i}|\Big)
\end{align}
for some $c\in\{0,1\}^{n}$, where $\pi\colon\R\rightarrow[0,1]$ is defined as
\begin{align}\label{eqn:projection-definition-main}
\pi(x)=\begin{cases}
    0 & \text{if $x<0$} \\
    x & \text{if $0\leq x\leq 1$} \\
    1 & \text{if $x>1$}.
  \end{cases}
\end{align}
\end{definition}
\noindent
Observe that for all $x \in [0,1]^n$, we have $f_c(x) = \max_{i \in \range{n}} |x_i - c_i|$.
Intuitively, \defn{max-OPT} concerns an optimization problem under the max-norm (i.e., $L_{\infty}$ norm) distance from $c$ for all $x$ in the unit hypercube $[0,1]^{n}$; for all $x$ not in the unit hypercube, the optimizing function pays a penalty of the $L_{1}$ distance between $x$ and its projection $\pi(x)$ onto the unit hypercube. The function $f_{c}$ is 2-Lipschitz continuous with a unique minimum at $x=c$; we prove in \lem{evaluation-convexity} that $f_{c}$ is convex.

We prove the hardness of solving max-norm optimization using its evaluation oracle:

\begin{theorem}\label{thm:main-evaluation}
Given an instance of the max-norm optimization problem with an evaluation oracle $O_{f_{c}}$, it takes $\Omega(\sqrt{n}/\log n)$ quantum queries to $O_{f_{c}}$ to output an $\tilde{x}\in[0,1]^{n}$ such that
\begin{align}\label{eqn:max-norm-main}
f_{c}(\tilde{x})\leq\min_{x\in[0,1]^{n}}f_{c}(x)+\frac{1}{3},
\end{align}
with success probability at least $0.9$.
\end{theorem}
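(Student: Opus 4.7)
The plan is to reduce from search with wildcards (\thm{wildcard}), following the three-step pipeline outlined in \sec{lower-bound-intro}. First, I would introduce a threshold oracle $T_{f_c}(x,t) := \delta[f_c(x) \le t]$; since $f_c$ takes values in $[0,O(1)]$ on the relevant domain, one evaluation query with $b$-bit precision can be simulated by $b$ threshold queries via binary search on $t$. Second, I would show that one threshold query reduces to one wildcard query for $c$ by a coordinate-partition argument analogous to the membership proof: for each coordinate $i$ set $S_i(x_i,t) := \{b \in \{0,1\} : |x_i - b| \le t\}$, and observe that $f_c(x) \le t$ iff $c_i \in S_i(x_i,t)$ for every $i$. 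Coordinates with $|S_i|=1$ force a specific bit of $c$ and thereby specify the wildcard pair $(T,y)$, coordinates with $S_i = \{0,1\}$ impose no constraint, and any coordinate with $S_i = \emptyset$ forces the threshold query to output $0$ immediately. Without loss of generality we may restrict to $x \in [0,1]^n$, because the penalty term in \eqn{max-OPT-defn} reduces any query at $x \notin [0,1]^n$ to a query at $\pi(x) \in [0,1]^n$ without additional oracle calls.

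Step one alone only gives an $\Omega(\sqrt{n}/b)$ bound, and the crux of the argument---and the main obstacle---is to prove that $b = O(\log n)$ suffices. I would use the discretization set $D_n \subset [0,1]^n$ consisting of those $x'$ whose $2n$-tuple $(x'_1,\ldots,x'_n,1-x'_1,\ldots,1-x'_n)$ is a permutation of $\{\tfrac{1}{2n+1},\ldots,\tfrac{2n}{2n+1}\}$. Given an arbitrary query point $x \in [0,1]^n$, I would construct an $x' \in D_n$ sharing the same ordering of these $2n$ values with $x$; since the ordering determines $i^{*} := \arg\max_i |x_i - c_i|$ (as in \eq{max-norm-argmax}) and the $2n$ coordinates of $x'$ are distinct, knowing $f_c(x')$ uniquely identifies both $i^{*}$ and $c_{i^{*}}$, after which $f_c(x) = |x_{i^{*}} - c_{i^{*}}|$ is computed classically from the original input $x$. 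Classically this is a one-to-one simulation; quantumly the same reduction is implemented with a standard compute--uncompute pair, costing at most two queries to $f_c$ restricted to $D_n$ per simulated query.

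With discretization in place, the precision reduction is essentially free: every $x' \in D_n$ has $f_c(x')$ equal to an integer multiple of $\tfrac{1}{2n+1}$, so any approximation with additive error below $\tfrac{1}{4n+2}$ can be rounded to the exact value; in particular a precision-$\tfrac{1}{5n}$ evaluation oracle is effectively exact on $D_n$. Hence $b \le \lceil \log_2 5n \rceil = O(\log n)$, and composing with step one yields $\Omega(\sqrt{n}/\log n)$ quantum evaluation queries to simulate the wildcard oracle $O_c$ of \thm{wildcard}.

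Finally, I would verify that any $\tilde x$ satisfying \eqn{max-norm-main} recovers $c$: since $\min_{x \in [0,1]^n} f_c(x) = 0$ (attained at $x = c$) and $f_c(\tilde x) \le 1/3$, one has $|\tilde x_i - c_i| \le 1/3 < 1/2$ for every $i$, so $c_i$ is the integer nearest to $\tilde x_i$. Thus the max-norm optimization problem solves search with wildcards, and the $\Omega(\sqrt{n})$ lower bound of \thm{wildcard} transfers to an $\Omega(\sqrt{n}/\log n)$ lower bound on queries to $O_{f_c}$. Convexity of $f_c$, required for the instance to be a valid convex optimization instance, is orthogonal to the query-complexity argument and would be handled separately in \lem{evaluation-convexity}.
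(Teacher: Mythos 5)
Your proposal is correct and follows essentially the same route as the paper: reduce to search with wildcards via a threshold oracle and binary search (the paper's \lem{main-evaluation-weak}), then invoke discretization over the set $D_n$ plus rounding to integer multiples of $\tfrac{1}{2n+1}$ to cap the precision at $b = O(\log n)$ (the paper's \lem{discretization} and \lem{precision-simulation}), and finally recover $c$ by coordinate-wise rounding of any $\tilde{x}$ with $f_c(\tilde{x}) \le 1/3$. The only cosmetic imprecision is the phrase ``the ordering determines $i^*$'' --- strictly, the ordering guarantees that the unique argmax $i^*$ for the tie-free point $x'$ also attains $\max_i|x_i - c_i|$ for the original $x$ --- but this is exactly the content of the paper's \lem{discrete-line-12}--\lem{discrete-line-4} and does not change the argument.
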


The proof of \thm{main-evaluation} has two steps. First, we prove a weaker lower bound with respect to the precision of the evaluation oracle:

\begin{lemma}\label{lem:main-evaluation-weak}
Suppose we are given an instance of the max-norm optimization problem with an evaluation oracle $O_{f_{c}}$ that has precision $0<\delta<0.05$, i.e., $f_{c}$ is provided with $\lceil\log_{2}(1/\delta)\rceil$ bits of precision. Then it takes $\Omega(\sqrt{n}/\log (1/\delta))$ quantum queries to $O_{f_{c}}$ to output an $\tilde{x}\in[0,1]^{n}$ such that
\begin{align}\label{eqn:max-norm}
f_{c}(\tilde{x})\leq\min_{x\in[0,1]^{n}}f_{c}(x)+\frac{1}{3},
\end{align}
with success probability at least $0.9$.
\end{lemma}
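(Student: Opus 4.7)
The plan is to reduce search with wildcards on a hidden string $c \in \{0,1\}^n$ to approximate max-norm optimization, at a cost of $O(\log(1/\delta))$ wildcard queries per evaluation query. Combined with the $\Omega(\sqrt{n})$ quantum lower bound of \thm{wildcard}, this immediately yields the claimed $\Omega(\sqrt{n}/\log(1/\delta))$ evaluation lower bound.

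For the first ingredient, any output $\tilde{x} \in [0,1]^n$ satisfying $f_c(\tilde{x}) \le 1/3$ has $|\tilde{x}_i - c_i| \le 1/3$ for every $i$: on $[0,1]^n$ the penalty term in \eqn{max-OPT-defn} vanishes, so $f_c(\tilde{x}) = \max_i |\tilde{x}_i - c_i|$, and $\min_{x \in [0,1]^n} f_c(x) = 0$ is attained at $x = c$. Rounding each $\tilde{x}_i$ to the nearer of $\{0,1\}$ therefore recovers $c$ exactly, so any algorithm solving the max-norm optimization problem with probability $\ge 0.9$ determines $c$ with the same probability.

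For the second ingredient, I introduce an auxiliary threshold oracle $O^{\le}_c$ acting as $|x,t,0\rangle \mapsto |x,t,[f_c(x) \le t]\rangle$, and argue (a) that one $\delta$-precision query to $O_{f_c}$ can be coherently simulated by $b := \lceil \log_2(1/\delta) \rceil$ queries to $O^{\le}_c$, and (b) that one query to $O^{\le}_c$ can be simulated by one query to the wildcard oracle $O_c$ of \thm{wildcard}. For (a), since $f_c(x) \in [0,1]$ for $x \in [0,1]^n$, adaptive binary search over $[0,1]$ with $b$ thresholds yields a unitary $|x,0\rangle \mapsto |x,\alpha(x)\rangle$ with $|\alpha(x) - f_c(x)| \le \delta$; at stage $k$ the threshold $t_k$ is computed reversibly from $x$ and the bits already produced into an ancilla, the bit $[f_c(x) \le t_k]$ is written by one call to $O^{\le}_c$, and $t_k$ is uncomputed. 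For $x \notin [0,1]^n$ the $c$-independent penalty $\sum_i |\pi(x_i) - x_i|$ from \eqn{max-OPT-defn} is subtracted before the search, reducing to the $[0,1]^n$ case. For (b), I partition the coordinates according to whether $[x_i - t,\, x_i + t]$ contains neither, exactly one, or both of $\{0,1\}$: if some interval contains neither then $[f_c(x) \le t] = 0$ regardless of $c$; otherwise the coordinates whose interval contains exactly one of $\{0,1\}$ form a set $T_{x,t} \subseteq \range{n}$ together with a forced target $y^{(x,t)} \in \{0,1\}^{|T_{x,t}|}$, and $[f_c(x) \le t] = Q_c(T_{x,t}, y^{(x,t)})$, which is exactly the wildcard predicate of \thm{wildcard}.

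Composing these reductions, any quantum algorithm making $q$ evaluation queries and returning $\tilde{x}$ satisfying \eqn{max-norm} yields one making $bq$ wildcard queries to $O_c$ that recovers $c$, so \thm{wildcard} forces $bq = \Omega(\sqrt{n})$, i.e., $q = \Omega(\sqrt{n}/\log(1/\delta))$. I expect the main technical subtlety to be step (a): binary search is inherently adaptive, with later thresholds depending on earlier outcomes, yet must run coherently in superposition over $x$. The standard reversible bookkeeping above handles this, and it is precisely this step that introduces the $\log(1/\delta)$ overhead, which later bootstraps via discretization into the $\log n$ factor in the full evaluation lower bound of \thm{main-evaluation}.
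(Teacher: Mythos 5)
Your proposal is correct and follows essentially the same route as the paper: reduce to search with wildcards via a threshold (decision) oracle obtained from the evaluation oracle by binary search, show one threshold query is one wildcard query via the same interval partition (your ``neither/exactly one/both of $\{0,1\}$'' classification is precisely the paper's $I_{\text{out},t},I_{0,t}\cup I_{1,t},I_{\text{mid},t}$), and recover $c$ by rounding the returned $\tilde x$. Your extra attention to running the adaptive binary search reversibly in superposition is a genuine subtlety that the paper elides (it simply asserts the $O(\log(1/\delta))$-query simulation), so that remark is a welcome addition rather than a deviation.
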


The second step simulates a perfectly precise query to $f_{c}$ by a rough query:
\begin{lemma}\label{lem:precision-simulation}
One classical (resp., quantum) query to $O_{f_{c}}$ with perfect precision can be simulated by one classical query (resp., two quantum queries) to $O_{f_{c}}$ with precision $1/5n$.
\end{lemma}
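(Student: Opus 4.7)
The plan is to combine the discretization technique of \lem{discretization} with the observation that on $D_n$ the function $f_c$ takes values in a sparse discrete set, so that a query of precision $1/5n$ already pins down $f_c$ exactly. I would carry this out in three short steps, chained together.

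First, I reduce to $[0,1]^n$. For arbitrary $x\in\R^n$, \eqn{max-OPT-defn} gives
\begin{equation}
f_c(x) = f_c(\pi(x)) + \sum_{i=1}^{n}|\pi(x_i)-x_i|,
\end{equation}
and the penalty sum depends only on $x$, not on $c$, so it can be computed classically without any oracle access and added coherently to the answer register. Thus simulating an exact-precision query at $x$ reduces to simulating one at $\pi(x)\in[0,1]^n$ at zero oracle cost. I then invoke \lem{discretization} to replace this by one (classically) or two (quantumly) exact-precision queries at points of $D_n$.

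It remains to simulate each exact-precision query at any $x'\in D_n$ by a single call to the $1/5n$-precision oracle together with a deterministic post-processing. By the defining property of $D_n$, $\{x'_i,1-x'_i:i\in[n]\}=\{k/(2n+1):k\in[2n]\}$, and since $c_i\in\{0,1\}$ forces $|x'_i-c_i|\in\{x'_i,1-x'_i\}$, the value $f_c(x')=\max_i|x'_i-c_i|$ lies in the discrete set $S:=\{k/(2n+1):k\in[2n]\}$, whose consecutive elements are separated by $1/(2n+1)$. A $1/5n$-precision query returns some $\tilde\alpha$ with $|\tilde\alpha-f_c(x')|\le 1/5n$, and because $1/5n<1/(4n+2)=1/(2(2n+1))$ for $n\ge 3$, rounding $\tilde\alpha$ to the nearest element of $S$ recovers $f_c(x')$ exactly. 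Since rounding is a classical deterministic function of the answer bits, it extends coherently to an in-place unitary by the standard compute-uncompute trick with $O_{f_c}$ and $O_{f_c}^{\dagger}$, leaving a clean answer register.

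Composing the three reductions yields the claimed simulation, with the classical and quantum query overheads inherited from \lem{discretization}. The only subtle point is ensuring the quantum simulation is clean (no garbage correlated with the input), which is handled by the standard reversible-computation argument absorbed into the 1-vs-2 classical/quantum query split of \lem{discretization}. All the substantive technical content sits in that lemma, which I take as given; the present lemma then reduces to the elementary sparsity-and-rounding observation above, and I anticipate no further obstacle.
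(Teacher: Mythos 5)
Your proof is correct and follows essentially the same route as the paper: invoke the discretization lemma to reduce to queries at points of $D_n$, observe that $f_c$ restricted to $D_n$ takes values among integer multiples of $1/(2n+1)$, and round a $1/5n$-accurate answer to the nearest such multiple (the paper also applies this rounding inside \lin{discretization-3} of \algo{discretization}, and uses the identical bound $\frac{2n+1}{5n} < \frac{1}{2}$). The extra reduction from $\R^n$ to $[0,1]^n$ via the penalty term and the remark about uncomputation are correct but already absorbed into the paper's surrounding lemmas, so there is no substantive difference.
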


\thm{main-evaluation} simply follows from the two propositions above: by \lem{precision-simulation}, we can assume that the evaluation oracle $O_{f_{c}}$ has precision $1/5n$, so \lem{main-evaluation-weak} implies that it takes $\Omega(\sqrt{n}/\log 5n)=\Omega(\sqrt{n}/\log n)$ quantum queries to $O_{f_{c}}$ to output an $\tilde{x}\in[0,1]^{n}$ satisfying \eqn{max-norm-main} with success probability 0.9.

The proofs of \lem{main-evaluation-weak} and \lem{precision-simulation} are given in \sec{evaluation-low-precision} and \sec{discretization}, respectively.

\subsubsection{$\tilde{\Omega}(\sqrt{n})$ quantum lower bound on a low-precision evaluation oracle}\label{sec:evaluation-low-precision}
Similar to the proof of \thm{main-membership}, we also use \thm{wildcard} (the quantum lower bound on search with wildcards) to give a quantum lower bound on the number of evaluation queries required to solve the max-norm optimization problem.

\begin{proof}[Proof of \lem{main-evaluation-weak}]
Assume that we are given an arbitrary string $c\in\{0,1\}^{n}$ together with the evaluation oracle $O_{f_{c}}$ for the max-norm optimization problem. To show the lower bound, we reduce the search-with-wildcards problem to the max-norm optimization problem.

We first establish that an evaluation query to $O_f$ can be simulated using wildcard queries on $c$. Notice that if we query an arbitrary $x\in\R^{n}$, by \eqn{max-OPT-defn} we have
\begin{align}\label{eqn:projection-relationship}
f_{c}(x)=\max_{i\in\range{n}}|\pi(x_{i})-c_{i}|+\Big(\sum_{i=1}^{n}|\pi(x_{i})-x_{i}|\Big)=f_{c}(\pi(x))+\Big(\sum_{i=1}^{n}|\pi(x_{i})-x_{i}|\Big)
\end{align}
where $\pi(x):=(\pi(x_{1}),\ldots,\pi(x_{n}))$. In particular, the difference of $f_{c}(x)$ and $f_{c}(\pi(x))$ is an explicit function of $x$ that is independent of $c$. Thus the query $O_{f_{c}}(x)$ can be simulated using one query to $O_{f_{c}}(\pi(x))$ where $\pi(x)\in[0,1]^{n}$. It follows that we can restrict ourselves without loss of generality to implementing evaluation queries for $x \in [0,1]^n$.

Now we consider a decision version of oracle queries to $f_{c}$, denoted $O_{f_{c,\text{dec}}}$, where the function $f_{c,\text{dec}}\colon[0,1]^{n}\times[0,1]\to\{0,1\}$ satisfies
\begin{align}
f_{c,\text{dec}}(x,t)=\delta[f_{c}(x)\leq t].
\end{align}
(We restrict to $t\in[0,1]$ because $f_{c}(x)\in[0,1]$ always holds for $x\in[0,1]^{n}$.) Using binary search, a query to $O_{f_{c}}$ with precision $\delta$ can be simulated by at most $\lceil\log_{2}(1/\delta)\rceil=O(\log 1/\delta)$ queries to the oracle $O_{f_{c,\text{dec}}}$.

Next, we prove that a query to $O_{f_{c,\text{dec}}}$ can be simulated by a query to the search-with-wildcards oracle $O_{c}$ in \eqn{wildcard-defn}. Consider an arbitrary query $(x,t)\in[0,1]^{n}\times[0,1]$ to $O_{f_{c,\text{dec}}}$. For convenience, we denote $J_{0,t}:=[0,t]$, $J_{1,t}:=[1-t,1]$, and
\begin{align}
I_{0,t}&:=J_{0,t}-(J_{0,t}\cap J_{1,t}) \label{eqn:defn-I-0} \\
I_{1,t}&:=J_{1,t}-(J_{0,t}\cap J_{1,t}) \label{eqn:defn-I-1} \\
I_{\text{mid},t}&:=J_{0,t}\cap J_{1,t} \label{eqn:defn-I-mid} \\
I_{\text{out},t}&:=[0,1]-(J_{0,t}\cup J_{1,t}). \label{eqn:defn-I-out}
\end{align}

We partition $\range{n}$ into four sets:
\begin{align}
T_{x,0,t}&:=\big\{i\in\range{n}\mid x_{i}\in I_{0,t}\big\} \label{eqn:defn-Tx-0} \\
T_{x,1,t}&:=\big\{i\in\range{n}\mid x_{i}\in I_{1,t}\big\} \label{eqn:defn-Tx-1} \\
T_{x,\text{mid},t}&:=\big\{i\in\range{n}\mid x_{i}\in I_{\text{mid},t}\big\} \label{eqn:defn-Tx-mid} \\
T_{x,\text{out},t}&:=\big\{i\in\range{n}\mid x_{i}\in I_{\text{out},t}\big\}. \label{eqn:defn-Tx-out}
\end{align}
The strategy here is similar to the proof of \thm{main-membership}: $T_{x,\text{mid},t}$ corresponds to the coordinates such that $|x_{i}-c_{i}|\leq t$ regardless of whether $c_{i}=0$ or 1 (and hence $c_{i}$ does not influence whether or not $\max_{i\in\range{n}}|x_{i}-c_{i}|\leq t$); $T_{x,\text{out},t}$ corresponds to the coordinates such that $|x_{i}-c_{i}|>t$ regardless of whether $c_{i}=0$ or 1 (so $\max_{i\in\range{n}}|x_{i}-c_{i}|>t$ provided $T_{x,\text{out},t}$ is nonempty); and $T_{x,0,t}$ (resp., $T_{x,1,t}$) corresponds to the coordinates such that $|x_{i}-c_{i}|\leq t$ only when $c_{i}=0$ (resp., $c_{i}=1$).

Denote $T_{x,t}:=T_{x,0,t}\cup T_{x,1,t}$ and let $y^{(x,t)}\in\{0,1\}^{|T_{x,t}|}$ such that
\begin{align}\label{eqn:membership-yxt}
y^{(x,t)}_{i}=\begin{cases}
    0 & \text{if $i\in T_{x,0,t}$} \\
    1 & \text{if $i\in T_{x,1,t}$}.
  \end{cases}
\end{align}
We will prove that $O_{f_{c,\text{dec}}}(x)=Q_{c}(T_{x,t},y^{(x,t)})$ if $T_{x,\text{out},t}=\emptyset$, and $O_{f_{c,\text{dec}}}(x)=0$ otherwise.

On the one hand, if $O_{f_{c,\text{dec}}}(x)=1$, we have $f_{c}(x)\leq t$. In other words, for all $i\in\range{n}$ we have $|x_{i}-c_{i}|\leq t$, which implies
\begin{align}\label{eqn:evaluation-decision-1}
x_{i}\in J_{c_{i},t}\quad\forall\,i\in\range{n}.
\end{align}
Since $J_{c_{i},t}\subseteq J_{0,t}\cup J_{1,t}$, we have $x_{i}\in J_{0,t}\cup J_{1,t}$ for all $i\in\range{n}$, and thus $T_{x,\text{out},t}=\emptyset$ by \eqn{defn-I-out} and \eqn{defn-Tx-out}. Now consider any $i\in T_{x,t}$. If $i\in T_{x,0,t}$, then $x_{i}\in I_{0,t}$ by \eqn{defn-Tx-0}. By \eqn{defn-I-0} we have $x_{i}\in J_{0,t}$ and $x_{i}\notin J_{1,t}$, and thus $c_{i}=0$ by \eqn{evaluation-decision-1}. Similarly, if $i\in T_{x,1,t}$, then we must have $c_{i}=1$. As a result of \eqn{membership-yxt}, for all $i\in T_{x,t}$ we have $c_{i}=y^{(x,t)}_{i}$; in other words, $c_{|T_{x,t}}=y^{(x,t)}$ and $Q_{c}(T_{x,t},y^{(x,t)})=1=O_{f_{c,\text{dec}}}(x)$.

On the other hand, if $O_{f_{c,\text{dec}}}(x)=0$, there exists an $i_{0}\in\range{n}$ such that
\begin{align}\label{eqn:evaluation-decision-2}
x_{i_{0}}\notin J_{c_{i_{0}},t}.
\end{align}
Therefore, we must have $i_{0}\notin T_{x,\text{mid},t}$ since \eqn{defn-I-mid} implies $I_{\text{mid},t}=J_{0,t}\cap J_{1,t}\subseteq J_{c_{i_{0}},t}$. Next, if $i_{0}\in T_{x,\text{out},t}$, then $T_{x,\text{out},t}\neq\emptyset$ and we correctly obtain $O_{f_{c,\text{dec}}}(x)=0$. The remaining cases are $i_{0}\in T_{x,0,t}$ and $i_{0}\in T_{x,1,t}$.

If $i_{0}\in T_{x,0,t}$, then $y^{(x,t)}_{i_{0}}=0$ by \eqn{membership-yxt}. By \eqn{defn-Tx-0} we have $x_{i_{0}}\in I_{0,t}$, and by \eqn{defn-I-0} we have $x_{i_{0},t}\in J_{0,t}$ and $x_{i_{0}}\notin J_{1,t}$; therefore, we must have $c_{i_{0}}=1$ by \eqn{evaluation-decision-2}. As a result, $c_{|T_{x,t}}\neq y^{(x,t)}$ at $i_{0}$. If $i_{0}\in T_{x,1,t}$, we similarly have $c_{i_{0}}=0$, $y^{(x,t)}_{i_{0}}=1$, and thus $c_{|T_{x,t}}\neq y^{(x,t)}$ at $i_{0}$. In either case, $c_{|T_{x,t}}\neq y^{(x,t)}$, and $Q_{c}(T_{x,t},y^{(x,t)})=0=O_{f_{c,\text{dec}}}(x)$.

Therefore, we have established that $O_{f_{c,\text{dec}}}(x)=Q_{c}(T_{x,t},y^{(x,t)})$ if $T_{x,\text{out},t}=\emptyset$, and $O_{f_{c,\text{dec}}}(x)=0$ otherwise. In other words, a quantum query to $O_{f_{c,\text{dec}}}$ can be simulated by a quantum query to the search-with-wildcards oracle $O_{c}$. Together with the fact that a query to $O_{f_{c}}$ with precision $\delta$ can be simulated by $O(\log 1/\delta)$ queries to $O_{f_{c,\text{dec}}}$, it can also be simulated by $O(\log 1/\delta)$ queries to $O_{c}$.

We next prove that a solution $\tilde{x}$ of the max-norm optimization problem satisfying \eqn{max-norm} solves the search-with-wildcards problem in \thm{wildcard}. Because $\min_{x\in[0,1]^{n}}f_{c}(x)=0$, considering the precision of at most $\delta<0.05$ we have
\begin{align}\label{eqn:max-norm-approx}
f_{c}(\tilde{x})\leq\tfrac{1}{3}+\delta\leq 0.4.
\end{align}
In other words,
\begin{align}\label{eqn:max-norm-tilde-x}
\tilde{x}_{i}\in[c_{i}-0.4,c_{i}+0.4]\quad\forall\,i\in\range{n}.
\end{align}

Similar to \eqn{sgn-3/2}, we define a rounding function $\sgn_{1/2}\colon\R\to\{0,1\}$ as
\begin{align}\label{eqn:sgn1/2}
\sgn_{1/2}(z)=\begin{cases}
    0 & \text{if $z<1/2$} \\
    1 & \text{otherwise}.
  \end{cases}
\end{align}
We prove that $\sgn_{1/2}(\tilde{x})=c$ (here $\sgn_{1/2}$ is applied coordinate-wise). For all $i\in\range{n}$, if $c_{i}=0$, then $\tilde{x}_{i}\in[0,0.4]\subset(-\infty,1/2)$ by \eqn{max-norm-tilde-x}, which implies $\sgn_{1/2}(\tilde{x}_{i})=0$ by \eqn{sgn1/2}. Similarly, if $c_{i}=1$, then $\tilde{x}_{i}\in[0.6,1]\subset(1/2,+\infty)$ by \eqn{max-norm-tilde-x}, which implies $\sgn_{1/2}(\tilde{x}_{i})=1$ by \eqn{sgn1/2}.

We have shown that if we can solve the max-norm optimization problem with an $\tilde{x}$ satisfying \eqn{max-norm}, we can solve the search-with-wildcards problem. By \thm{main-membership}, the search-with-wildcards problem has quantum query complexity $\Omega(\sqrt{n})$; since a query to the evaluation oracle $O_{f_{c}}$ can be simulated by $O(\log 1/\delta)$ queries to the wildcard oracle $O_{c}$, we have established an $\Omega(\sqrt{n}/\log (1/\delta))$ quantum lower bound on the number of evaluation queries needed to solve the max-norm optimization problem.
\end{proof}

\subsubsection{Discretization: simulating perfectly precise queries by low-precision queries}\label{sec:discretization}
In this subsection we prove \lem{precision-simulation}, which we rephrase more formally as follows. Throughout this subsection, the function $f_{c}$ in \eqn{max-OPT-defn} is abbreviated by $f$ for notational convenience.
\begin{lemma}\label{lem:precision-simulation-formal}
Assume that $\hat{f}\colon[0,1]^{n}\to [0,1]$ satisfies $|\hat{f}(x)-f(x)|\leq\tfrac{1}{5n}\ \forall\,x\in[0,1]^{n}$. Then one classical (resp., quantum) query to $O_{f}$ can be simulated by one classical query (resp., two quantum queries) to $O_{\hat{f}}$.
\end{lemma}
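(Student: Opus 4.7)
The plan is to chain two reductions. The first is \lem{discretization}, which simulates any query to $f$ on $[0,1]^n$ by one classical (resp.\ two quantum) queries to the restriction $f|_{D_n}$; the projection identity \eqn{projection-relationship} handles inputs outside $[0,1]^n$ with no extra query, since the penalty term $\sum_i |\pi(x_i) - x_i|$ is a classical function of $x$ alone. It therefore suffices to show that an exact query to $f|_{D_n}$ can be implemented from a single query to $O_{\hat f}$ together with classical post-processing.

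For this I exploit the discrete range of $f$ on $D_n$. By definition of $D_n$, for every $x' \in D_n$ the set $\{x'_i,\, 1 - x'_i : i \in \range{n}\}$ equals $\{\tfrac{k}{2n+1} : 1 \le k \le 2n\}$, so each $|x'_i - c_i|$, being either $x'_i$ or $1 - x'_i$ (since $c_i \in \{0,1\}$), is an integer multiple of $\tfrac{1}{2n+1}$, and hence so is $f(x') = \max_i |x'_i - c_i|$. Let $R\colon \R \to \tfrac{1}{2n+1}\Z$ denote rounding to the nearest such multiple. Combined with the inequality $\tfrac{1}{5n} < \tfrac{1}{2(2n+1)}$, which holds for $n \ge 3$ (the remaining cases are trivial), the hypothesis $|\hat f(x') - f(x')| \le \tfrac{1}{5n}$ gives $R(\hat f(x')) = f(x')$ exactly, so a single call to $\hat f$ determines $f(x')$ with no further queries.

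Classically this yields the claimed simulation immediately: run the procedure from \lem{discretization}, implement its single query to $f|_{D_n}$ as a query to $\hat f$ followed by the rounding $R$, and output $f(x)$ via the reconstruction rule of that lemma. For the quantum simulation I would interleave the two reductions so that a single compute/uncompute pattern is shared between them: reversibly compute $\phi(x) \in D_n$, apply $O_{\hat f}$ once to load $\hat f(\phi(x))$ on an ancilla, reversibly compute $f(\phi(x)) = R(\hat f(\phi(x)))$ onto a second ancilla, use $(x,\phi(x),f(\phi(x)))$ to XOR $f(x)$ into the output register, then uncompute $f(\phi(x))$ by reversing the rounding (a deterministic function of the still-present $\hat f(\phi(x))$), apply $O_{\hat f}^\dagger$ once to erase $\hat f(\phi(x))$, and finally uncompute $\phi(x)$. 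This uses exactly two queries to $O_{\hat f}$, matching the two-query count of \lem{discretization}. The only delicate point in the whole argument is the gap inequality $\tfrac{1}{5n} < \tfrac{1}{2(2n+1)}$ that guarantees unambiguous rounding; everything else is classical bookkeeping layered on top of \lem{discretization}.
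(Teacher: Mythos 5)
Your proof is correct and follows essentially the same route as the paper: both run \algo{discretization}, replacing the exact query in \lin{discretization-3} with a call to $\hat f$ on $D_n$ followed by rounding to the nearest multiple of $\tfrac{1}{2n+1}$, justified by the gap inequality $\tfrac{2n+1}{5n}<\tfrac12$ and the fact that $f$ takes values in $\tfrac{1}{2n+1}\Z$ on $D_n$. Your explicit note that this inequality requires $n\ge 3$ is a small point the paper silently glosses over, though it is immaterial to the asymptotic lower bound.
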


To achieve this, we present an approach that we call \emph{discretization}. Instead of considering queries on all of $[0,1]^{n}$, we only consider a discrete subset $D_{n} \subseteq [0,1]^{n}$ defined as
\begin{align}\label{eqn:discrete-set}
  D_n := \bigl\{ \chi(a,\pi) \mid a \in \{0,1\}^n \text{ and } \pi \in S_n \bigr\},
\end{align}
where $S_{n}$ is the symmetric group on $\range{n}$ and $\chi\colon \{0,1\}^n \times S_n \to [0,1]^n$ satisfies
\begin{align}
  \chi(a,\pi)_i = (1-a_{i})\tfrac{\pi(i)}{2n+1}+a_{i}(1-\tfrac{\pi(i)}{2n+1})
  \quad \forall\,i\in\range{n}.
  \label{eq:chi}
\end{align}
Observe that $D_{n}$ is a subset of $[0,1]^{n}$.

Since $|S_{n}|=n!$ and there are $2^{n}$ choices for $a\in\{0,1\}^{n}$, we have $|D_{n}|=2^{n}n!$. For example, when $n=2$, we have
\begin{align}
D_{2}=\big\{
  \big(\tfrac{1}{5},\tfrac{2}{5}\big),
  \big(\tfrac{1}{5},\tfrac{3}{5}\big),
  \big(\tfrac{4}{5},\tfrac{2}{5}\big),
  \big(\tfrac{4}{5},\tfrac{3}{5}\big),
  \big(\tfrac{2}{5},\tfrac{1}{5}\big),
  \big(\tfrac{2}{5},\tfrac{4}{5}\big),
  \big(\tfrac{3}{5},\tfrac{1}{5}\big),
  \big(\tfrac{3}{5},\tfrac{4}{5}\big)\big\}
\end{align}
with $|D_{2}|=2^{2}\cdot 2!=8$.

We denote the restriction of the oracle $O_{f}$ to $D_{n}$ by $O_{f|D_{n}}$, i.e.,
\begin{align}\label{eqn:evaluation-discrete}
O_{f|D_{n}}|x\>|0\>=|x\>|f(x)\>\qquad\forall\,x\in D_{n}.
\end{align}
In fact, this restricted oracle entirely captures the behavior of the unrestricted function.

\begin{lemma}[Discretization]\label{lem:discretization}
A classical (resp., quantum) query to $O_{f}$ can be simulated using one classical query (resp., two quantum queries) to $O_{f|D_{n}}$.
\end{lemma}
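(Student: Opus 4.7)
The plan is to construct an explicit map $\phi\colon [0,1]^n \to D_n$ such that $f(x)$ can be recovered from $x$ together with the single number $f(\phi(x))$, and then to lift this construction to a two-query unitary simulation in the quantum setting. The projection identity \eqn{projection-relationship} already reduces a query $O_{f}(x)$ for arbitrary $x \in \R^n$ to a query $O_{f}(\pi(x))$ with $\pi(x) \in [0,1]^n$ plus a known, $c$-independent additive correction, so without loss of generality we may simulate $O_{f}$ on the hypercube.

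Construction of $\phi$: given $x \in [0,1]^n$, list the $2n$ numbers $x_1,\dots,x_n,1-x_1,\dots,1-x_n$ and sort them using a fixed deterministic tiebreaking rule that is symmetric under the involution $v \leftrightarrow 1-v$ (so that whichever ranks are assigned to $x_i$ and $1-x_i$ always sum to $2n+1$), then replace the value of rank $k \in \{1,\dots,2n\}$ by $k/(2n+1)$. The resulting tuple $x' = \phi(x)$ then satisfies $x'_i + (1-x'_i) = 1$ for each $i$ and $\{x'_i, 1-x'_i\}_{i=1}^n = \{1/(2n+1),\dots,2n/(2n+1)\}$, so $x' \in D_n$ by \eqn{discrete-set} and \eq{chi}. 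By construction, $\phi$ preserves the (tiebroken) rank of every entry of the multiset $\{x_j, 1-x_j\}_j$.

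Recovery of $f(x)$: for any $c \in \{0,1\}^n$ and any $i$, the quantity $|x_i - c_i|$ equals either $x_i$ or $1-x_i$, and $|x'_i - c_i|$ equals the corresponding member of the pair $\{x'_i, 1-x'_i\}$. Rank preservation then implies that the maximizer $i^*$ of $|x'_j - c_j|$ over $j$ is also a maximizer of $|x_j - c_j|$. Since the $2n$ values $\{x'_j, 1-x'_j\}$ are pairwise distinct, the observed number $f(x') \in \{1/(2n+1),\dots,2n/(2n+1)\}$ uniquely identifies both $i^*$ and $c_{i^*} \in \{0,1\}$, and one then returns $f(x) = |x_{i^*} - c_{i^*}|$, which depends only on $x$ and on $f(x')$. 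This gives the classical simulation using a single query to $O_{f|D_n}$.

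Quantum simulation: both $x \mapsto \phi(x)$ and $(x, f(x')) \mapsto f(x)$ are deterministic classical maps and so can be implemented as reversible circuits on ancillas. A standard compute--query--compute--query--uncompute pattern then realizes the unitary $|x\>|0\> \mapsto |x\>|f(x)\>$ as follows: (i) compute $\phi(x)$ coherently into a first ancilla; (ii) apply $O_{f|D_n}$ to write $f(\phi(x))$ into a second ancilla; (iii) compute $f(x)$ from $x$ and $f(\phi(x))$ into the output register; (iv) apply $O_{f|D_n}$ a second time to uncompute $f(\phi(x))$; (v) reverse the circuit from (i) to clear the first ancilla. This uses exactly two queries to $O_{f|D_n}$. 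The main delicate point, which I expect to be the main obstacle, is correctly handling ties in $\{x_i, 1-x_i\}$ (e.g.\ when $x_i = 1/2$, or $x_i = x_j$, or $x_i = 1-x_j$); the symmetric deterministic tiebreaking rule above both keeps $\phi$ well defined into $D_n$ (via the ranks-sum-to-$2n+1$ property) and ensures that the recovered $(i^*, c_{i^*})$ remains a valid argmax of $|x_j - c_j|$ in the original instance.
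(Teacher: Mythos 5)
Your proposal is correct and follows essentially the same route as the paper's proof: \algo{discretization} likewise extracts the order pattern $(b,\sigma)$ of the $2n$ values $\{x_i,1-x_i\}$ (with a palindromic tie-breaking consistent with \eqn{ord-x}), maps $x$ to $x^*=\chi(b,\sigma^{-1})\in D_n$, queries $f(x^*)$, and recovers $f(x)=|x_{i^*}-c_{i^*}|$ from the observed integer multiple $k^*$ of $\tfrac{1}{2n+1}$. The quantum simulation with two queries (compute $f(x^*)$, extract $f(x)$, then uncompute $f(x^*)$) is also the argument the paper gives.
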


\begin{algorithm}[htbp]
\SetKwInput{KwInput}{Input}
\KwInput{$x\in[0,1]^{n}$\;}
\SetKwInput{KwOut}{Output}
\KwOut{$f(x)\in[0,1]$\;}
Compute $b\in\{0,1\}^{n}$ and $\sigma\in S_{n}$ such that the $2n$ numbers $x_{1},x_{2},\ldots,x_{n},1-x_{1},\ldots,1-x_{n}$ are arranged in decreasing order as \label{lin:discretization-1}
\begin{align}
&\hspace{-4mm} b_{\sigma(1)}x_{\sigma(1)}+(1-b_{\sigma(1)})(1-x_{\sigma(1)})\geq\cdots\geq b_{\sigma(n)}x_{\sigma(n)}+(1-b_{\sigma(n)})(1-x_{\sigma(n)}) \nonumber \\
&\hspace{-4mm} \geq (1-b_{\sigma(n)})x_{\sigma(n)}+b_{\sigma(n)}(1-x_{\sigma(n)})\geq\cdots\geq (1-b_{\sigma(1)})x_{\sigma(1)}+b_{\sigma(1)}(1-x_{\sigma(1)});\label{eqn:ord-x}
\end{align}

Compute $x^{*}\in D_{n}$ such that $\chi(b,\sigma^{-1}) = x^*$ (where $\chi$ is defined in \eq{chi})\; \label{lin:discretization-2}

Query $f(x^{*})$ and let $k^{*}=(2n+1)(1-f(x^{*}))$\; \label{lin:discretization-3}

 Return\label{lin:discretization-4}
 \begin{align}\label{eqn:discretization-function-value}
   f(x) = \begin{cases}
   (1-b_{\sigma(n)})x_{\sigma(n)}+b_{\sigma(n)}(1-x_{\sigma(n)})
   & \text{if $k^{*}=n+1$} \\
   b_{\sigma(k^{*})}x_{\sigma(k^{*})}+(1-b_{\sigma(k^{*})})(1-x_{\sigma(k^{*})})
   & \text{otherwise}.
   \end{cases}
 \end{align}
\caption{Simulate one query to $O_{f}$ using one query to $O_{f|D_{n}}$.}
\label{algo:discretization}
\end{algorithm}

We prove this proposition by giving an algorithm (\algo{discretization}) that performs the simulation. The main idea is to compute $f(x)$ only using $x$ and $f(x^{*})$ for some $x^{*}\in D_{n}$. We observe that max-norm optimization has the following property: if two strings $x\in[0,1]^{n}$ and $x^{*}\in D_{n}$ are such that $x_{1},\ldots,x_{n},1-x_{1},\ldots,1-x_{n}$ and $x^{*}_{1},\ldots,x^{*}_{n},1-x^{*}_{1},\ldots,1-x^{*}_{n}$ have the same ordering, then
\begin{align}
\arg\max_{i\in\range{n}}|x_{i}-c_{i}|=\arg\max_{i\in\range{n}}|x^{*}_{i}-c_{i}|.
\end{align}
Furthermore, $x^{*}\in D_{n}$ ensures that $\{x^{*}_{1},\ldots,x^{*}_{n},1-x^{*}_{1},\ldots,1-x^{*}_{n}\}=\{\frac{1}{2n+1},\ldots,\frac{2n}{2n+1}\}$ are $2n$ distinct numbers, so knowing the value of $f(x^{*})$ is sufficient to determine the value of the $\arg\max$ above (denoted $i^{*}$) and the corresponding $c_{i^{*}}$. We can then recover $f(x)=|x_{i^{*}}-c_{i^{*}}|$ using the given value of $x$. Moreover, $f(x^{*})$ is an integer multiple of $\frac{1}{2n+1}$; even if $f(x^{*})$ can only be computed with precision $\frac{1}{5n}$, we can round it to the closest integer multiple of $\frac{1}{2n+1}$ which is exactly $f(x^{*})$, since the distance $\frac{2n+1}{5n}<\frac{1}{2}$. As a result, we can precisely compute $f(x^{*})$ for all $x\in D_{n}$, and thus we can precisely compute $f(x)$.

We illustrate \algo{discretization} by a simple example. For convenience, we define an order function $\Ord \colon[0,1]^{n}\to \{0,1\}^{n}\times S_{n}$ by $\Ord(x)=(b,\sigma)$ for all $x\in [0,1]^{n}$, where $b$ and $\sigma$ satisfy Eq.\ \eqn{ord-x}.

\hd{An example with $n=3$.} Consider the case where the ordering in \eqn{ord-x} is
\begin{align}
1-x_{3}\geq x_{1}\geq x_{2}\geq 1-x_{2}\geq 1-x_{1}\geq x_{3}.
\label{eq:example_ord}
\end{align}
Then \algo{discretization} proceeds as follows:

\begin{itemize}
\item \lin{discretization-1}: With the ordering \eq{example_ord}, we have $\sigma(1)=3$, $\sigma(2)=1$, $\sigma(3)=2$; $b_{3}=0$, $b_{1}=1$, $b_{2}=1$.

\item \lin{discretization-2}: The point $x^{*}\in D_{3}$ that we query given $\Ord(x)$ satisfies $1-x^{*}_{3}=6/7$, $x^{*}_{1}=5/7$, $x^{*}_{2}=4/7$, $1-x^{*}_{2}=3/7$, $1-x^{*}_{1}=2/7$, and $x^{*}_{3}=1/7$; in other words, $x^{*}=(5/7,4/7,1/7)$.

\item \lin{discretization-3}: Now we query $f(x^{*})$. Since $f(x^{*})$ is a multiple of $1/7$ and $f(x^{*})\in[1/7,6/7]$, there are only 6 possibilities: $f(x^{*})=6/7$, $f(x^{*})=5/7$, $f(x^{*})=4/7$, $f(x^{*})=3/7$,  $f(x^{*})=2/7$, or $f(x^{*})=1/7$.

\item[] After running \lin{discretization-1}, \lin{discretization-2}, and \lin{discretization-3}, we have a point $x^{*}$ from the discrete set $D_{3}$ such that $\Ord(x)=\Ord(x^{*})$. Since they have the same ordering and $|x_{i}-c_{i}|$ is either $x_{i}$ or $1-x_{i}$ for all $i\in\range{3}$, the function value $f(x^{*})$ should essentially reflect the value of $f(x)$; this is made precise in \lin{discretization-4}.

\item \lin{discretization-4}: Depending on the value of $f(x^{*})$, we have six cases:
\begin{itemize}
\item $f(x^{*})=6/7$: In this case, we must have $c_{3}=1$, so that $|x_{3}-c_{3}|=|1/7-1|=6/7$ ($|x_{1}-c_{1}|$ can only give 5/7 or 2/7, and $|x_{2}-c_{2}|$ can only give 4/7 or 3/7). Because $1-x_{3}$ is the largest in \eq{example_ord}, we must have $f(x)=1-x_{3}$.

\item $f(x^{*})=5/7$: In this case, we must have $c_{1}=0$, so that $|x_{1}-c_{1}|=|5/7-0|=5/7$. Furthermore, we must have $c_{3}=1$ (otherwise if $c_{3}=0$, $f(x)\geq|x_{3}-c_{3}|=6/7$). As a result of \eq{example_ord}, we must have $f(x)=x_{1}$ since $x_{1}\geq x_{3}$ and $x_{1}\geq\max\{x_{2},1-x_{2}\}$.

\item $f(x^{*})=4/7$: In this case, we must have $c_{2}=0$, so that $|x_{2}-c_{2}|=|4/7-0|=4/7$. Furthermore, we must have $c_{3}=1$ (otherwise if $c_{3}=0$, $f(x)\geq|x_{3}-c_{3}|=6/7$) and $c_{1}=1$ (otherwise if $c_{1}=0$, $f(x)\geq|x_{1}-c_{1}|=5/7$). As a result of \eq{example_ord}, we must have $f(x)=x_{2}$ since $x_{2}\geq 1-x_{1}\geq 1-x_{3}$.

\item $f(x^{*})=3/7$: In this case, we must have $c_{2}=1$, so that $|x_{2}-c_{2}|=|4/7-1|=3/7$. Furthermore, we must have $c_{3}=1$ (otherwise if $c_{3}=0$, $f(x)\geq|x_{3}-c_{3}|=6/7$) and $c_{1}=1$ (otherwise if $c_{1}=0$, $f(x)\geq|x_{1}-c_{1}|=5/7$). As a result of \eq{example_ord}, we must have $f(x)=1-x_{2}$ since since $1-x_{2}\geq 1-x_{1}\geq 1-x_{3}$.

\item $f(x^{*})=2/7$ or $f(x^{*})=1/7$: This two cases are impossible because $f(x^{*})\geq |x_{2}-c_{2}|=|4/7-c_{2}|\geq 3/7$, no matter $c_{2}=0$ or $c_{2}=1$.
\end{itemize}
\end{itemize}

While \algo{discretization} is a classical algorithm for querying $O_f$ using a query to $O_{f|D_n}$, it is straightforward to perform this computation in  superposition using standard techniques to obtain a quantum query to $O_f$. However, note that this requires two queries to a quantum oracle for $O_{f|D_n}$ since we must uncompute $f(x^*)$ after computing $f(x)$.

Having the discretization technique at hand, \lem{precision-simulation-formal} is straightforward.
\begin{proof}[Proof of \lem{precision-simulation-formal}]
Recall that $|\hat{f}(x)-f(x)|\leq\tfrac{1}{5n}\ \forall\,x\in[0,1]^{n}$. We run \algo{discretization} to compute $f(x)$ for the queried value of $x$, except that in \lin{discretization-3} we take $k^{*}=\lceil(2n+1)(1-\hat{f}(x^{*}))\rfloor$ (here $\lceil a\rfloor$ is the closest integer to $a$). Because $|\hat{f}(x^{*})-f(x^{*})|\leq\tfrac{1}{5n}$, we have
\begin{align}
\big|(2n+1)(1-\hat{f}(x^{*}))-(2n+1)(1-f(x^{*}))\big|=(2n+1)|\hat{f}(x^{*})-f(x^{*})|\leq\tfrac{2n+1}{5n}<\tfrac{1}{2};
\end{align}
as a result, $k^{*}=(2n+1)(1-f(x^{*}))$ because the latter is an integer (see \lem{discrete-line-3}). Therefore, due to the correctness of \algo{discretization} established in \sec{discretization-appendix}, and noticing that the evaluation oracle is only called at \lin{discretization-3} (with the replacement described above), we successfully simulate one query to $O_{f}$ by one query to $O_{\hat{f}}$ (actually, to $O_{\hat{f}|D_{n}}$).
\end{proof}

The full analysis of \algo{discretization} is deferred to \sec{discretization-appendix}. In particular,
\begin{itemize}
\item In \sec{discrete-line-12} we prove that the discretized vector $x^{*}$ obtained in \lin{discretization-2} is a good approximation of $x$ in the sense that $\Ord(x^{*})=\Ord(x)$;
\item In \sec{discrete-line-3} we prove that the value $k^{*}$ obtained in \lin{discretization-3} satisfies $k^{*}\in\{1,\ldots,n+1\}$;
\item In \sec{discrete-line-4} we finally prove that the output returned in \lin{discretization-4} is correct.
\end{itemize}

\subsection{Proof of \thm{lower-main}}\label{sec:main-together}

We now prove \thm{lower-main} using \thm{main-membership} and \thm{main-evaluation}. Recall that our lower bounds on membership and evaluation queries are both proved on the $n$-dimensional hypercube. It remains to combine the two lower bounds to establish them simultaneously.

\begin{theorem}\label{thm:main-all}
Let $\mathcal{C}_{s}:=\bigtimes_{i=1}^{n}[s_{i}-2,s_{i}+1]$ for some $s\in\{0,1\}^{n}$. Consider a function $f\colon \mathcal{C}_{s}\times [0,1]^{n}\to\R$ such that $f(x)=f_{\emph{M}}(x)+f_{\emph{E},c}(x)$, where for any $x=(x_{1},x_{2},\ldots,x_{2n})\in\mathcal{C}_{s}\times[0,1]^{n}$,
\begin{align}
f_{\emph{M}}(x)=\sum_{i=1}^{n}x_{i},\qquad f_{\emph{E},c}(x)=\max_{i\in\{n+1,\ldots,2n\}}|x_{i}-c_{i-n}|
\end{align}
for some $c\in\{0,1\}^{n}$. Then outputting an $\tilde{x}\in\mathcal{C}_{s}\times[0,1]^{n}$ satisfying
\begin{align}\label{eqn:approx-all}
f(\tilde{x})\leq\min_{x\in\mathcal{C}_{s}\times[0,1]^{n}}f(x)+\tfrac{1}{3}
\end{align}
with probability at least $0.8$ requires $\Omega(\sqrt{n})$ quantum queries to $O_{\mathcal{C}_{s}\times[0,1]^{n}}$ and $\Omega(\sqrt{n}/\log n)$ quantum queries to $O_{f}$.
\end{theorem}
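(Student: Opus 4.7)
The plan is to reduce each of the two lower bounds to the corresponding single-function problem from \thm{main-membership} or \thm{main-evaluation} by exploiting the fact that $f=f_{\text{M}}+f_{\text{E},c}$ decomposes across the product structure $\mathcal{C}_{s}\times[0,1]^{n}$. The very first observation is that $f_{\text{M}}$ depends only on the first $n$ coordinates (and is minimized on $\mathcal{C}_{s}$ at $x_{i}=s_{i}-2$) while $f_{\text{E},c}$ depends only on the last $n$ coordinates (and is minimized on $[0,1]^{n}$ at $x_{n+i}=c_{i}$). Hence
\begin{align}
\min_{x\in\mathcal{C}_{s}\times[0,1]^{n}} f(x)
= \min_{y\in\mathcal{C}_{s}} f_{\text{M}}(y) + \min_{z\in[0,1]^{n}} f_{\text{E},c}(z),
\end{align}
and for any $\tilde{x}=(\tilde{y},\tilde{z})\in\mathcal{C}_{s}\times[0,1]^{n}$ satisfying \eqn{approx-all}, both $f_{\text{M}}(\tilde{y})-\min f_{\text{M}}\geq 0$ and $f_{\text{E},c}(\tilde{z})-\min f_{\text{E},c}\geq 0$ sum to at most $\tfrac{1}{3}$, so each is individually at most $\tfrac{1}{3}$. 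In particular $\tilde{y}$ solves the sum-of-coordinates instance of \defn{sum-coordinate} and $\tilde{z}$ solves the max-norm instance of \defn{max-OPT}, each to precision $\tfrac{1}{3}$.

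For the membership lower bound, I would reduce from the sum-of-coordinates problem: given an adversarial $s$ with oracle $O_{\mathcal{C}_{s}}$, instantiate the combined problem with a freely chosen, publicly known $c$ (say $c=\mathbf{0}$). A membership query at $(y,z)\in\R^{2n}$ to $O_{\mathcal{C}_{s}\times[0,1]^{n}}$ factors as the AND of $O_{\mathcal{C}_{s}}(y)$ and $O_{[0,1]^{n}}(z)$; the latter is known and computable in superposition for free, so one query to $O_{\mathcal{C}_{s}\times[0,1]^{n}}$ is simulated by a single query to $O_{\mathcal{C}_{s}}$. Because $c$ is known, every evaluation query $f(y,z)=\sum_{i}y_{i}+\max_{i}|z_{i}-c_{i}|$ can be computed without any oracle access at all. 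The algorithm's output $\tilde{x}$, restricted to its first $n$ coordinates, solves the sum-of-coordinates problem to precision $\tfrac{1}{3}$ by the decomposition above; success probability $0.8$ exceeds the constant threshold needed (and can be amplified to $0.9$ with only constant overhead), so \thm{main-membership} forces $\Omega(\sqrt{n})$ quantum queries to $O_{\mathcal{C}_{s}\times[0,1]^{n}}$.

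For the evaluation lower bound, the reduction is symmetric: from an adversarial $c$ with oracle $O_{f_{c}}$, instantiate the combined problem with a known, fixed $s$ (say $s=\mathbf{0}$). A membership query to $O_{\mathcal{C}_{s}\times[0,1]^{n}}$ is now fully known and can be implemented coherently with no oracle calls. An evaluation query $f(y,z)=\sum_{i}y_{i}+f_{\text{E},c}(z)$ on $(y,z)\in\mathcal{C}_{s}\times[0,1]^{n}$ can be implemented by one query to $O_{f_{c}}$ at $z$ (since $z\in[0,1]^{n}$ makes $\pi$ the identity, $f_{c}(z)=\max_{i}|z_{i}-c_{i}|=f_{\text{E},c}$ on the last $n$ coordinates) followed by adding the classically computed $\sum_{i}y_{i}$ into the answer register. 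The last $n$ coordinates of the output $\tilde{x}$ then solve the max-norm problem to precision $\tfrac{1}{3}$, and \thm{main-evaluation} yields $\Omega(\sqrt{n}/\log n)$ quantum queries to $O_{f}$.

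The only mild obstacles are bookkeeping ones rather than conceptual: first, ensuring that the simulations respect quantum queries on arbitrary inputs (which is handled because both the ``known'' sides of the decomposition — the $[0,1]^{n}$ membership check and the $\sum_{i}y_{i}$ computation — are pointwise explicit functions implementable as unitary oracles on all of $\R^{2n}$), and second, reconciling the $0.8$ versus $0.9$ success probabilities used in \thm{main-membership} and \thm{main-evaluation}, which is resolved by standard constant-factor amplification or by noting that both lower-bound proofs carry through verbatim for any constant success probability strictly above $1/2$. Combining the two reductions yields \thm{lower-main} as a direct consequence of \thm{main-all}.
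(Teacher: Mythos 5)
Your proof is correct and takes essentially the same approach as the paper's: decompose the minimum coordinate-wise so that an approximate minimizer of $f$ simultaneously approximately minimizes $f_{\text{M}}$ and $f_{\text{E},c}$, then reduce each query lower bound separately to \thm{main-membership} and \thm{main-evaluation} using the fact that once the ``other'' parameter is fixed, the ``other'' oracle becomes a fully known unitary that can be absorbed into the algorithm. Your phrasing of this last step --- instantiating with a publicly known $c$ (resp.\ $s$) and simulating the corresponding oracle with zero queries --- is a cleaner restatement of the paper's observation that $O_{\mathcal{C}_{s}\times[0,1]^{n}}$ reveals no information about $c$ and $O_f$ reveals no information about $s$, and your remark that the sub-lower-bounds hold for any constant success probability above the bounded-error threshold is a more transparent justification than the paper's brief ``$0.9\cdot 0.9>0.8$'' comment.
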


Notice that the dimension of the optimization problem above is $2n$ instead of $n$; however, the constant overhead of 2 does not influence the asymptotic lower bounds.

\begin{proof}[Proof of \thm{main-all}]
First, we prove that
\begin{align}
\min_{x\in\mathcal{C}_{s}\times[0,1]^{n}}f(x)=S\qquad\text{and}\qquad\arg\min_{x\in\mathcal{C}_{s}\times[0,1]^{n}}f(x)=(s-2_{n},c),
\end{align}
where $2_{n}$ is the $n$-dimensional all-twos vector and $S:=\sum_{i=1}^{n}(s_{i}-2)$. On the one hand,
\begin{align}\label{eqn:lower-all-m}
f_{\text{M}}(x)\geq S\qquad\forall\,x\in\mathcal{C}_{s}\times[0,1]^{n},
\end{align}
with equality if and only if $(x_{1},\ldots,x_{n})=s-2_{n}$. On the other hand,
\begin{align}\label{eqn:lower-all-e}
f_{\text{E},c}(x)\geq 0\qquad\forall\,x\in\mathcal{C}_{s}\times[0,1]^{n},
\end{align}
with equality if and only if $(x_{n+1},\ldots,x_{2n})=c$. Thus $f(x)=f_{\text{M}}(x)+f_{\text{E},c}(x)\geq S$ for all $x\in\mathcal{C}_{s}\times[0,1]^{n}$, with equality if and only if $x=(x_{1},\ldots,x_{n},x_{n+1},\ldots,x_{2n})=(s-2_{n},c)$.

If we can solve this optimization problem with an output $\tilde{x}$ satisfying \eqn{approx-all}, then
\begin{align}\label{eqn:approx-all-2}
f_{\text{M}}(\tilde{x})+f_{\text{E},c}(\tilde{x})=f(\tilde{x})\leq S+\tfrac{1}{3}.
\end{align}
Eqs. \eqn{lower-all-m}, \eqn{lower-all-e}, and \eqn{approx-all-2} imply
\begin{align}
f_{\text{M}}(\tilde{x})&\leq S+\tfrac{1}{3}=\min_{x\in\mathcal{C}_{s}\times[0,1]^{n}}f_{\text{M}}(x)+\tfrac{1}{3}; \label{eqn:main-all-membership} \\
f_{\text{E},c}(\tilde{x})&\leq\tfrac{1}{3}=\min_{x\in\mathcal{C}_{s}\times[0,1]^{n}}f_{\text{E},c}(x)+\tfrac{1}{3}. \label{eqn:main-all-evaluation}
\end{align}

On the one hand, Eq. \eqn{main-all-membership} says that $\tilde{x}$ also minimizes $f_{\text{M}}$ with approximation error $\epsilon=\tfrac{1}{3}$. By \thm{main-membership}, this requires $\Omega(\sqrt{n})$ queries to the membership oracle $O_{\mathcal{C}_{s}}$. Also notice that one query to $O_{\mathcal{C}_{s}\times[0,1]^{n}}$ can be trivially simulated one query to $O_{\mathcal{C}_{s}}$; therefore, minimizing $f$ with approximation error $\epsilon=\tfrac{1}{3}$ with success probability 0.9 requires $\Omega(\sqrt{n})$ quantum queries to $O_{\mathcal{C}_{s}\times[0,1]^{n}}$.

On the other hand, Eq. \eqn{main-all-evaluation} says that $\tilde{x}$ minimizes $f_{\text{E},c}$ with approximation error $\epsilon=\tfrac{1}{3}$. By \thm{main-evaluation}, it takes $\Omega(\sqrt{n}/\log n)$ queries to $O_{f_{\text{E},c}}$ to output $\tilde{x}$. Also notice that
\begin{align}
f(x)=f_{\text{M}}(x)+f_{\text{E},c}(x)=\sum_{i=1}^{n}x_{i}+f_{\text{E},c}(x);
\end{align}
therefore, one query to $O_{f}$ can be simulated by one query to $O_{f_{\text{E},c}}$. Therefore, approximately minimizing $f$ with success probability 0.9 requires $\Omega(\sqrt{n}/\log n)$ quantum queries to $O_{f}$.

In addition, $f_{\text{M}}$ is independent of the coordinates $x_{n+1},\ldots,x_{2n}$ and only depends on the coordinates $x_{1},\ldots,x_{n}$, whereas $f_{\text{E},c}$ is independent of the coordinates $x_{1},\ldots,x_{n}$ and only depends on the coordinates $x_{n+1},\ldots,x_{2n}$. As a result,  the oracle $O_{\mathcal{C}_{s}\times[0,1]^{n}}$ reveals no information about $c$, and $O_f$ reveals no information about $s$.
Since solving the optimization problem reveals both $s$ and $c$, the lower bounds on query complexity must hold simultaneously.

Overall, to output an $\tilde{x}\in\mathcal{C}_{s}\times[0,1]^{n}$ satisfying \eqn{approx-all} with success probability at least $0.9\cdot 0.9>0.8$, we need $\Omega(\sqrt{n})$ quantum queries to $O_{\mathcal{C}_{s}\times[0,1]^{n}}$ and $\Omega(\sqrt{n}/\log n)$ quantum queries to $O_{f}$, as claimed.
\end{proof}

\subsection{Smoothed hypercube}\label{sec:smooth}

As a side point, our quantum lower bound in \thm{main-all} also holds for a smooth convex body. Given an $n$-dimensional hypercube $\mathcal{C}_{x,l}:=\bigtimes_{i=1}^{n}[x_{i}-l,x_{i}]$, we define a smoothed version as
\begin{align}\label{eqn:smoothed}
\mathcal{SC}_{x,l}:=B_{2}\bigg(\bigtimes_{i=1}^{n}\Big[x_{i}-\frac{2n}{2n+1}l,x_{i}-\frac{1}{2n+1}l\Big],\frac{1}{2n+1}l\bigg)
\end{align}
using \defn{conv_nbd}. For instance, a smoothed 3-dimensional cube is shown in \fig{smoothed}.

\begin{figure}[htbp]
\centering
\includegraphics[width=2.5in]{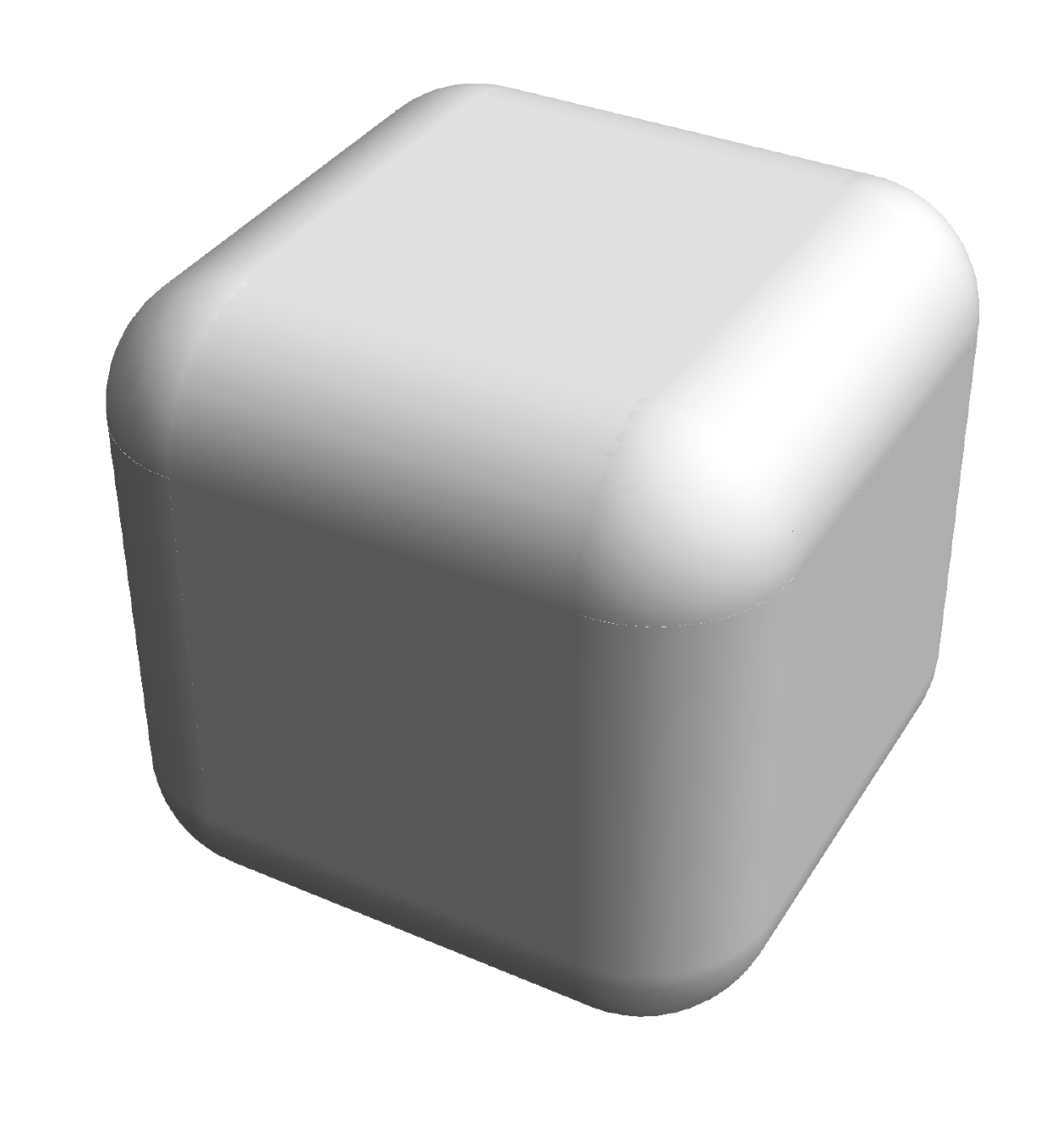}
\caption{Smoothed hypercube of dimension 3.}
\label{fig:smoothed}
\end{figure}

The smoothed hypercube satisfies
\begin{align}
\mathcal{C}_{x-\frac{1}{2n+1}l_{n},\frac{2n-1}{2n+1}l}\subseteq \mathcal{SC}_{x,l}\subseteq \mathcal{C}_{x,l}
\end{align}
where $l_{n}$ is $l$ times the $n$-dimensional all-ones vector; in other words, it is contained in the original (non-smoothed) hypercube, and it contains the hypercube with the same center but edge length $\frac{2n-1}{2n+1}l$. For instance, $\bigtimes_{i=1}^{n}[\frac{1}{2n+1},\frac{2n}{2n+1}]\subseteq \mathcal{SC}_{1_{n},1}\subseteq \bigtimes_{i=1}^{n}[0,1]$; by Eq. \eqn{discrete-set}, $D_{n}\subseteq\mathcal{SC}_{1_{n},1}$. It can be verified that the proof of \thm{main-membership} still holds if the hypercube $\bigtimes_{i=1}^{n}[s_{i}-2,s_{i}+1]=\mathcal{C}_{s+1_{n},3}$ is replaced by $\mathcal{SC}_{s+1_{n},3}$, and the proof of \thm{main-evaluation} still holds if the unit hypercube $[0,1]^{n}$ is replaced by $\mathcal{SC}_{1_{n},1}$; consequently \thm{main-all} also holds. More generally, the proofs remain valid as long as the smoothed hypercube is contained in $[0,1]^{n}$ and contains $D_{n}$ (for discretization).


\section*{Acknowledgements}
We thank Yin Tat Lee for numerous helpful discussions and anonymous reviewers for suggestions on preliminary versions of this paper. We also thank Joran van Apeldoorn, Andr{\'a}s Gily{\'e}n, Sander Gribling, and Ronald de Wolf for sharing a preliminary version of their manuscript \cite{vanApeldoorn2018optimization} and for their detailed feedback on a preliminary version of this paper, including identifying some mistakes in previous lower bound arguments and pointing out a minor technical issue in the evaluation of the height function in \lin{mem-to-sep-6} of \algo{halfspace} (and in \cite{lee2017efficient}). This work was supported in part by the U.S. Department of Energy, Office of Science, Office of Advanced Scientific Computing Research, Quantum Algorithms Teams program. AMC also received support from the Army Research Office (MURI award W911NF-16-1-0349), the Canadian Institute for Advanced Research, and the National Science Foundation (grant CCF-1526380). XW also received support from the National Science Foundation (grants CCF-1755800 and CCF-1816695).


\providecommand{\bysame}{\leavevmode\hbox to3em{\hrulefill}\thinspace}


\begin{appendix}

\section{Auxiliary lemmas}\label{sec:auxiliary}
\subsection{Classical gradient computation}\label{sec:classical-gradient}

Here we prove that the classical query complexity of gradient computation is linear in the dimension.

\begin{lemma}
  Let $f$ be an $L$-Lipschitz convex function that is specified by an evaluation oracle with precision $\delta = 1/\poly(n)$. Any (deterministic or randomized) classical algorithm to calculate a subgradient of $f$ with $L_\infty$-norm error $\epsilon = 1/\poly(n)$ must make $\tilde{\Omega}(n)$ queries to the evaluation oracle.
\end{lemma}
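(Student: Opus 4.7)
The plan is to exhibit a family of convex, $1$-Lipschitz functions $\{f_c\}_{c\in\{-1,+1\}^n}$ on $\R^n$ with two properties: (i) any vector $\tilde g$ with $\|\tilde g - \nabla f_c(0)\|_\infty < 1/n$ uniquely determines $c$, so an $\epsilon$-accurate subgradient computation at $x_0 = 0$ (with $\epsilon = 1/(2n) = 1/\poly(n)$) solves an $n$-bit identification problem; and (ii) every $\delta$-precision evaluation query conveys only $O(\log(1/\delta))$ bits of information about $c$. A standard counting/Yao argument then forces $q = \Omega(n/\log(1/\delta)) = \tilde\Omega(n)$.

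Concretely I would take $f_c(x) = \tfrac{1}{n}\sum_{i=1}^{n}|x_i - c_i|$ for $c \in \{-1,+1\}^n$. This is convex, $1$-Lipschitz in the $L_\infty$ norm, and differentiable at $0$ with $\nabla f_c(0) = -c/n$. Since the coordinates of $\nabla f_c(0)$ all lie in $\{-1/n,+1/n\}$ and are pairwise separated by $2/n$, any $\tilde g$ with $\|\tilde g - \nabla f_c(0)\|_\infty < 1/n$ recovers $c$ exactly, and choosing $\epsilon = 1/(2n)$ suffices. Since $f_c$ is differentiable at $0$, this argument handles every valid approximate subgradient, not just approximate gradients.

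The information bound rests on the algebraic identity
\[
f_c(x) \;=\; A(x) \,-\, \tfrac{1}{n}\langle c, b(x)\rangle,\qquad A(x):=\tfrac{1}{n}\sum_i \max(|x_i|,1),\ \ b_i(x):=\max(-1,\min(1,x_i)),
\]
which I verify coordinate-wise: for $x_i \in [-1,+1]$, $|x_i - c_i| = 1 - c_i x_i$ (regardless of whether $c_i = \pm 1$); for $|x_i| > 1$, $|x_i - c_i| = |x_i| - c_i\,\mathrm{sgn}(x_i)$. In particular, for every query $x$ the map $c \mapsto f_c(x)$ is affine with range of length at most $2$ over $c \in \{-1,+1\}^n$, since $|\langle c, b(x)\rangle| \le \|b(x)\|_1 \le n$. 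Because $A(x)$ is known to the algorithm, the informative part of any $\delta$-precision response takes at most $(2+2\delta)/(2\delta) = O(1/\delta) = \poly(n)$ effectively distinct values, no matter what $x$ is.

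By Yao's minimax principle it suffices to lower-bound deterministic algorithms acting on uniformly random $c \in \{-1,+1\}^n$. A deterministic adaptive algorithm making $q$ queries and receiving $\delta$-precision responses has a transcript lying in a set of size at most $O(1/\delta)^q$, so its output takes at most that many values; identifying $c$ correctly on a $2/3$ fraction of the $2^n$ inputs thus forces $O(1/\delta)^q \ge \tfrac{2}{3}\cdot 2^n$, i.e., $q \ge (n-O(1))/\log(1/\delta) = \tilde\Omega(n)$. The main obstacle is that the evaluation oracle is adversarial (it may return any value within $\delta$ of $f_c(x)$) and the algorithm may be adaptive; one must check that the $O(1/\delta)$ bound on distinguishable responses holds against every such strategy. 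This follows from the two-sided affine structure of $c \mapsto f_c(x)$ together with the fact that any adversarial response still lies in a length-$(2+2\delta)$ interval around $A(x)$, which admits an $O(1/\delta)$-element $\delta$-net independent of both $x$ and the adversary, so the transcript-counting bound is unaffected.
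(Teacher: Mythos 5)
Your proposal is correct and takes essentially the same information-theoretic counting approach as the paper's own (very terse) argument: both hide $\Omega(n)$ bits of information in the gradient at the origin of a $1$-Lipschitz instance, observe that a $\delta$-precision evaluation query can reveal at most $O(\log(1/\delta))=O(\log n)$ bits, and divide. The paper uses a linear instance $f(x)=c^Tx$ with $c\in[0,1]^n$ and states the counting bound informally, while you use the normalized piecewise-linear family $f_c(x)=\tfrac{1}{n}\sum_i|x_i-c_i|$ with $c\in\{-1,+1\}^n$ and fill in the details the paper elides --- the affine decomposition $f_c(x)=A(x)-\tfrac{1}{n}\langle c,b(x)\rangle$ bounding the response range, the adversarial-oracle issue, and the explicit Yao/transcript-counting step --- so yours is a more rigorous rendering of the same idea.
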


\begin{proof}
  Consider the linear function $f(x) = c^Tx$ where each $c_i \in [0,1]$. Since each $c_i$ must be determined to precision $\epsilon$, the problem hides $n \log (1/\epsilon)$ bits of information. Furthermore, since the evaluation oracle has precision $\delta$, each query reveals only $\log (1/\delta)$ bits of information. Thus any classical algorithm must make at least $\frac{n \log(1/\epsilon)}{\log(1/\delta)} = n/\log(n)$ evaluation queries.
\end{proof}

\subsection{Mollified functions}
The following lemma establishes properties of mollified functions:

\begin{lemma}[Mollifier properties] \label{lem:mollify}
  Let $f\colon \R^n \to \R$ be an $L$-Lipschitz convex function with mollification $F_\delta = f \ast m_\delta$, where $m_\delta$ is defined in \eq{mollifier}. Then
  \begin{enumerate}[nosep,label=\normalfont(\roman*)]
  \item $F_\delta$ is infinitely differentiable,
  \item $F_\delta$ is convex,
  \item $F_\delta$ is $L$-Lipschitz continuous, and
  \item $|F_\delta(x) - f(x)| \le L\delta$.
  \end{enumerate}
\end{lemma}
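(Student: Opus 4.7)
The plan is to prove each of the four properties by exploiting the standard convolution identity $F_\delta(x) = \int_{B_2(0,\delta)} f(x-y)\, m_\delta(y)\, \d^n y$, together with the fact that $m_\delta$ is a nonnegative $C^\infty$ bump supported in $B_2(0,\delta)$ with $\int m_\delta = 1$. None of the four items requires any deep machinery; the work is just in identifying the right form to write each integral in.

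For (i), I would differentiate under the integral sign. Since $m_\delta \in C^\infty(\R^n)$ with compact support, any partial derivative of $m_\delta$ is bounded, so $\partial^\alpha F_\delta(x) = \int f(x-y)\, (\partial^\alpha m_\delta)(y)\, \d^n y$ is well-defined for every multi-index $\alpha$, and dominated convergence justifies passing the derivative inside. This yields $F_\delta \in C^\infty$.

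For (ii) and (iii), I would use the probabilistic viewpoint that $F_\delta(x) = \E_{y \sim m_\delta}[f(x-y)]$. Convexity of $f$ in $x$ is preserved under this averaging: for $t \in [0,1]$,
\begin{align}
F_\delta(tx_1 + (1-t)x_2) = \E[f(t(x_1-y) + (1-t)(x_2-y))] \le t F_\delta(x_1) + (1-t) F_\delta(x_2).
\end{align}
Lipschitz continuity follows similarly, since $|F_\delta(x) - F_\delta(x')| \le \E[|f(x-y) - f(x'-y)|] \le L\|x - x'\|_\infty$ by \defn{lipschitz} applied to $f$ and the fact that $m_\delta$ integrates to one.

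For (iv), I would write $F_\delta(x) - f(x) = \int_{B_2(0,\delta)} (f(x-y) - f(x))\, m_\delta(y)\, \d^n y$ and bound $|f(x-y) - f(x)| \le L \|y\|_\infty \le L \|y\|_2 \le L\delta$ using Lipschitz continuity of $f$ and the fact that $\|y\|_\infty \le \|y\|_2 \le \delta$ on the support of $m_\delta$. Integrating against $m_\delta$ (which has total mass one) gives the claim. I do not anticipate a real obstacle here; the only subtlety is that Lipschitzness is stated in the $L_\infty$ norm while the mollifier is defined via the $L_2$ ball, which is why the norm comparison $\|y\|_\infty \le \|y\|_2$ needs to be invoked to close out (iv).
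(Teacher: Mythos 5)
Your proof is correct and follows essentially the same line of reasoning as the paper's: differentiation passes through the convolution for (i), convexity and Lipschitz continuity are preserved under averaging against the probability density $m_\delta$ for (ii) and (iii), and the closeness bound (iv) comes from the Lipschitz property applied on the support $B_2(0,\delta)$. The one small stylistic difference is in (iv): you bound $\|y\|_\infty \le \|y\|_2 \le \delta$ directly on the support of $m_\delta$, which is slightly cleaner than the paper's route via a change of variables $z = \delta u$ and the bound $\|u\|\le 1$ on the unit ball; both yield the same $L\delta$ bound.
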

\begin{proof}~
  \begin{enumerate}[label=\normalfont(\roman*)]
  \item Convolution satisfies $\frac{\d(p \ast q)}{\d{x}} = p \ast \frac{\d{q}}{\d{x}}$, so because $m_\delta$ is infinitely differentiable, $F_\delta$ is infinitely differentiable.
  \item We have $F_\delta(x) = \int_{\R^n} f(x - z)m_\delta(z) \,\d{z} = \int_{\R^n} f(z)m_\delta(x - z) \,\d{z}
    $. Thus
    \begin{align}
      F_\delta(\lambda x + (1 - \lambda) y) &= \int\limits_{\R^n} f(\lambda x + (1 - \lambda) y - z)m_\delta(z) \,\d{z} \\
                                            &\ge \int\limits_{\R^n} [\lambda f(x - z) + (1 - \lambda) f(y - z)]m_\delta(z) \,\d{z} \\
                                            & = \lambda F_\delta(x) + (1 - \lambda) F_\delta(y),
    \end{align}
    where the inequality holds by convexity of $f$ and the fact that $m_\delta \ge 0$. Thus $F_\delta$ is convex.
  \item We have
    \begin{align}
      \norm{ F_\delta(x) - F_\delta(y) } &= \norm{\, \int\limits_{\R^n} [f(x - z) - f(y-z)] m_\delta(z) \,\d{z} } \\
                                         &\le \int\limits_{\R^n} \norm{f(x - z) - f(y-z)} m_\delta(z) \,\d{z} \\
                                         &\le L \norm{x - y} \int\limits_{\R^n} m_\delta(z) \,\d{z} \\
                                         & = L \norm{x - y}.
    \end{align}
    Thus from \defn{lipschitz}, $F_\delta$ is $L$-Lipschitz.
  \item We have
    \begin{align}
      \left| F_\delta(x) - f(x) \right| &= \left| \, \int\limits_{\R^n} f(x - z) g(z) \,\d{z} - \int\limits_{\R^n} f(x) g(z) \,\d{z} \right| \\
                                        &\le \int\limits_{\R^n} \left| f(x - z) - f(z) \right| g(z) \,\d{z} \\
                                        &\le L \int\limits_{\R^n} \left| z \right| g(z) \,\d{z} \\
                                        &= L\int\limits_{B_2(0,\delta)} \frac{|z|}{I_n}\exp\left(-\frac{1}{1 - \norm{z/\delta}^2}\right) \,\d{z} \\
                                        &= L\delta\int\limits_{B_2(0,1)} \frac{|u|}{I_n}\exp\left(-\frac{1}{1 - \norm{u}^2}\right) \,\d{u} \\
      & \le L\delta\int\limits_{B_2(0,1)} \frac{1}{I_n}\exp\left(-\frac{1}{1 - \norm{u}^2}\right) \,\d{u} \\
                                        &= L\delta
    \end{align}
    as claimed.
  \end{enumerate}
\end{proof}

The following lemma shows strong convexity of mollified functions, ruling out the possibility of directly applying \lem{jordan_whp_detailed} to calculate subgradients.

\begin{lemma}
  \label{lem:non-smooth-differentiable}
  There exists a $1$-Lipschitz convex function $f$ such that for any $\beta$-smooth function $g$ with $|f(x) - g(x)| \le \delta$ for all $x$,
  $\beta\delta \ge c$ where $c$ is a constant.
\end{lemma}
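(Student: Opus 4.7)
The plan is to take the one-dimensional function $f(x) = |x|$, which is clearly convex and $1$-Lipschitz, and to exploit the fact that its kink at the origin cannot be resolved simultaneously by a small uniform-approximation error $\delta$ and a small curvature bound $\beta$. The key identity is that $f(t) + f(-t) - 2f(0) = 2t$ grows linearly in $t$, while for any $\beta$-smooth $g$ the analogous quantity grows only like $\beta t^2$.

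Concretely, suppose $g\colon\R\to\R$ is $\beta$-smooth and satisfies $|g(x)-f(x)|\le\delta$ for every $x\in\R$. I would apply the second-order Taylor expansion at $0$ with the bound $|g''|\le\beta$, obtaining
\begin{equation}
g(t) \le g(0) + g'(0)\,t + \tfrac{\beta}{2}t^{2},
\qquad
g(-t) \le g(0) - g'(0)\,t + \tfrac{\beta}{2}t^{2}
\end{equation}
for every $t>0$. Adding the two inequalities makes the unknown derivative $g'(0)$ drop out:
\begin{equation}
g(t) + g(-t) \le 2g(0) + \beta t^{2}.
\end{equation}
Next I would feed in the approximation hypothesis. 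Since $f(t)+f(-t)=2t$ and $f(0)=0$, we have $g(t)+g(-t)\ge 2t-2\delta$ and $g(0)\le\delta$, so
\begin{equation}
2t - 2\delta \;\le\; 2\delta + \beta t^{2},
\qquad\text{i.e.,}\qquad
\beta \;\ge\; \frac{2t - 4\delta}{t^{2}}.
\end{equation}
Finally, optimizing the right-hand side over $t$ by setting $t=4\delta$ yields $\beta\ge 1/(4\delta)$, hence $\beta\delta\ge 1/4$, which is the desired constant.

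There is no real obstacle: the argument is a one-line Taylor estimate combined with the triangle inequality. The only thing to be a little careful about is that $\beta$-smoothness as defined in \defn{beta-smooth} bounds the magnitudes of second derivatives in every direction, which in one dimension is exactly $|g''|\le\beta$, so the Taylor remainder bound $|g(t)-g(0)-g'(0)t|\le\tfrac{\beta}{2}t^{2}$ holds as used. The same $f(x)=|x_{1}|$ (depending on only one coordinate) works in arbitrary dimension $n$, so the lemma is established with $c=1/4$.
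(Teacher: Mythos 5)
Your proof is correct and follows essentially the same route as the paper's: take $f(x)=|x|$, Taylor-expand $g$ around $0$ with the $\beta$-smoothness bound, add the two inequalities to cancel $g'(0)$, and then compare with $f(t)+f(-t)=2t$ via the $\delta$-approximation hypothesis. The only cosmetic difference is the last step: you substitute the optimal $t=4\delta$, while the paper observes that the resulting quadratic $\beta x^2-2x+4\delta\ge 0$ forces a non-positive discriminant. Both routes yield $\beta\delta\ge 1/4$; note the paper's text states $\beta\delta\ge 1$ after a small arithmetic slip in computing the discriminant (it should be $4-16\beta\delta\le 0$, not $16-16\beta\delta\le 0$), so your constant $c=1/4$ is the correct one.
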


\begin{proof}
Let $f(x) = |x|$. Consider $x \ge 0$. By the smoothness of $g$,
\begin{align}
  \label{eq:1Lipschitz-1}
  g(x) &\le g(0) + \nabla g(0)^Tx + \frac{\beta}{2} x^2, \\
  g(-x) &\le g(0) - \nabla g(0)^Tx + \frac{\beta}{2} x^2.
\end{align}
As a result, we have $g(x) + g(-x) \le 2g(0) + \beta x^2$ for all $x > 0$. Since $|f(x) - g(x)| \le \delta$,
\begin{align}
  \label{eq:1Lipschitz-3}
  f(x) + f(-x) &\le 2f(0) + \beta x^2 + 4\delta \quad  \hence\quad \beta x^2  - 2x + 4\delta \ge 0
\end{align}
for all $x > 0$.

Since $4\delta > 0$, the discriminant must be non-positive. Therefore, $16 - 16\beta\delta \le 0$, so $\beta\delta\geq 1$.
\end{proof}


\section{Proof details for upper bound}\label{sec:upper-bound-appendix}
We give the complete proof of \lem{jordan_whp} in this section.

Given a quantum oracle that computes the function $N_0F$ in the form
\begin{align}
U_F\ket{x}\ket{y} = \ket{x}\ket{y \oplus (N_0F(x) \mod N)},
\end{align}
it is well known that querying $U_F$ with
\begin{align}
\ket{y_0} = \frac{1}{\sqrt{N_0}}\sum_{i \in \{0,1,\ldots,N-1\}} e^{\frac{2\pi i x}{N_0}}\ket{i}
\end{align}
allows us to implement the phase oracle $O_F$ in one query. This is a common technique used in quantum algorithms known as \emph{phase kickback}.

First, we prove the following lemma:
\begin{lemma}
  \label{lem:fixed}
  Let $G:=\lbrace -N/2,-N/2+1,\dots,N/2-1 \rbrace$ and define $\gamma\colon \lbrace 0,1,\dots,N-1 \rbrace \to G$ by $\gamma(x) = x - N/2$ for all $x \in \lbrace 0,1,\dots,N-1 \rbrace$. Consider the inverse quantum Fourier transforms
  \begin{align}
    \QFT_N^{-1}\ket{x}&:= \frac{1}{\sqrt{N}}\sum_{y \in [0,N-1]}e^{-\frac{2\pi i xy}{N}}\ket{y}, \quad \forall\,x \in [0,N-1]; \\
    \QFT_G^{-1}\ket{\gamma(x)}&:= \frac{1}{\sqrt{N}}\sum_{\gamma(y) \in G}e^{-\frac{2\pi i \gamma(x)\gamma(y)}{N}}\ket{\gamma(y)}, \quad \forall\,\gamma(x) \in G
  \end{align}
  over $[0,N-1]:= \{0,1,\ldots,N-1\}$ and $G$, respectively.
  Then we have
  $\QFT_G^{-1} = U \QFT_N^{-1} U$, where $U$ is a tensor product of $b = \log_2{N}$ single-qubit unitaries.
\end{lemma}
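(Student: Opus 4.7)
My plan is to prove the identity by a direct computation of matrix entries, exploiting the fact that shifting indices by $N/2$ only introduces phases that factor neatly across the tensor product structure of the $b$-qubit register. First I would expand the product appearing in the exponent of $\QFT_G^{-1}$,
\begin{align}
\gamma(x)\gamma(y) = (x - N/2)(y - N/2) = xy - \tfrac{N}{2}(x+y) + \tfrac{N^2}{4},
\end{align}
and substitute into $\exp(-2\pi i \gamma(x)\gamma(y)/N)$, which then splits as $e^{-2\pi i xy/N} \cdot e^{\pi i (x+y)} \cdot e^{-\pi i N/2}$. The middle factor is $(-1)^{x}(-1)^{y}$, while the trailing factor is a global phase equal to $1$ whenever $N = 2^b$ with $b \geq 2$, since in that case $N^2/4 = 2^{2b-2}$ is an integer multiple of $N = 2^b$.

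Next I would construct $U$ explicitly. Because $(-1)^x$ depends only on the parity of $x$, which is the least significant bit $x_0$ in the binary expansion $x = \sum_{i=0}^{b-1} x_i 2^i$, a valid choice is
\begin{align}
U := Z \otimes I^{\otimes(b-1)},
\end{align}
i.e., Pauli $Z$ acting on the qubit carrying $x_0$ tensored with the identity on the remaining $b-1$ qubits. This is manifestly a tensor product of $b$ single-qubit unitaries, and by construction $U\ket{x} = (-1)^{x}\ket{x}$ on every computational basis state. Identifying the label $\gamma(x) \in G$ with the bit string $x$ (which is precisely what the bijection $\gamma$ does at the level of the underlying physical register), I would then verify
\begin{align}
U\,\QFT_N^{-1}\,U\,\ket{x} = (-1)^{x}\, U\,\QFT_N^{-1}\ket{x} = \frac{1}{\sqrt{N}}\sum_{y} (-1)^{x+y}\, e^{-2\pi i xy/N}\ket{y},
\end{align}
which matches $\QFT_G^{-1}\ket{\gamma(x)}$ by the factorization above.

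I do not anticipate any real obstacle: the proof is essentially bookkeeping in the exponent. The only point that requires a moment's care is confirming that $e^{-\pi i N/2} = 1$ for $N = 2^b$ with $b \geq 2$, so that the stated identity holds exactly rather than merely up to a nontrivial global phase; for $b = 1$ the same argument goes through with a harmless overall sign.
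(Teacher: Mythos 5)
Your proof is correct and follows essentially the same route as the paper: expand $\gamma(x)\gamma(y)$ to factor the phase as $e^{-2\pi i xy/N}\,e^{\pi i(x+y)}$ times a global phase, then absorb $e^{\pi i x}=(-1)^x$ into a diagonal unitary conjugating $\QFT_N^{-1}$. You go slightly further than the paper in two minor respects: you pin down $U$ concretely as $Z$ on the parity qubit tensored with identities, and you explicitly compute the global phase $e^{-\pi i N/2}$ and note it equals $1$ for $N=2^b$, $b\ge 2$, whereas the paper simply asserts equality ``up to a global phase.''
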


\begin{proof}
  For any $x \in [0,N-1]$, we have
  \begin{align}
    \QFT_G^{-1}\ket{x} = \frac{1}{\sqrt{N}}\sum_{y \in [0,N-1]}e^{-\frac{2\pi i \gamma(x)\gamma(y)}{N}}\ket{y}
  \end{align}
  which is equivalent to
  \begin{align}
    \frac{1}{\sqrt{N}}\sum_{y \in [0,N-1]}e^{-\frac{2\pi i xy}{N}}e^{\pi i (x + y)}\ket{y}
  \end{align}
  up to a global phase.
  Setting $U\ket{x} = e^{\pi i x} \ket{x}$ for all $x \in \{0,1,\ldots,N-1\}$, we have the result.
\end{proof}
The above shows that we can implement $\QFT_G^{-1}$ on a single $b$-bit register using $O(b)$ gates. Thus there is no significant overhead in gate complexity that results from using $\QFT_G$ instead of the usual QFT.

Now we prove \lem{jordan_whp}, which is rewritten below:
\begin{lemma}
  \label{lem:jordan_whp_detailed}
  Let $f\colon \R^n \to \R$ be an $L$-Lipschitz function that is specified by an evaluation oracle with error at most $\epsilon$. Let $f$ be $\beta$-smooth in $B_\infty(x,2\sqrt{{\epsilon}/{\beta}})$, and let $\tilde{g}$ be the output of $\algname{GradientEstimate}(f,\epsilon,L,\beta,x_0)$ (from \algo{grad_est}). Let $g = \nabla f(x_0)$. Then
  \begin{align}
    \Pr\left[ |\tilde{g}_i - g_i| > 1500\sqrt{n\epsilon\beta} \right] < \frac{1}{3}, \quad \forall\,i \in \range{n}.
  \end{align}
\end{lemma}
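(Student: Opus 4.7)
The plan is to analyze \algo{grad_est} as a small perturbation of the ideal phase-kickback / inverse-QFT protocol one would run if $f$ were exactly affine with gradient $g=\nabla f(x_0)$. Because the grid points $x_0+(l/N)\gamma(x)$ all lie in $B_\infty(x_0,l/2)\subseteq B_\infty(x_0,2\sqrt{\epsilon/\beta})$, the $\beta$-smoothness hypothesis forces $\tilde F(x)$ to agree with the affine target $F^{\mathrm{lin}}(x):=\langle v^{*},\gamma(x)\rangle/N$, with $v^{*}_i:=Ng_i/(2L)$, up to a small uniform additive constant; a standard Kitaev-style argument then converts this uniform phase bound into a per-coordinate tail bound on the measured integer, and rescaling by $2L/N$ yields the stated $L_\infty$ error on $\tilde g$.

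Concretely, I would first bound the three sources of phase error. By $\beta$-smoothness, for any grid point $y$,
\[
|f(y)-f(x_0)-\langle g,y-x_0\rangle|\le\tfrac{\beta}{2}\|y-x_0\|_2^{2}\le\tfrac{n\beta l^{2}}{8}=\tfrac{\epsilon}{2},
\]
where the last equality uses $l=2\sqrt{\epsilon/(n\beta)}$. Rescaling by $N/(2Ll)$, this contributes at most $N\sqrt{n\epsilon\beta}/(8L)\le 1/(192\pi)$ to $|F(x)-F^{\mathrm{lin}}(x)|$, by the specification $1/N\ge 24\pi\sqrt{n\epsilon\beta}/L$. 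The evaluation error contributes at most $2\epsilon$ to $f(y)-f(x_0)$, hence at most $N\sqrt{n\epsilon\beta}/(2L)\le 1/(48\pi)$ to $F$; and the arithmetic discretization $|\tilde F-F|\le 1/N_0$, by the choice $1/N_0\le N\epsilon/(Ll)$, also contributes at most $1/(48\pi)$. Summing gives a uniform bound $|\tilde F(x)-F^{\mathrm{lin}}(x)|\le\eta$ for a small absolute constant $\eta$, independent of $x$.

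Next I would invoke the Kitaev/Jordan-style analysis of the inverse QFT on affine phases, using the factorization established in \lem{fixed}. Because $F^{\mathrm{lin}}$ is separable across the $n$ coordinates, the ideal state $N^{-n/2}\sum_{g\in G^{n}}e^{2\pi i F^{\mathrm{lin}}(g)}|g\rangle$ is a tensor product of single-register phase-estimation states, and on each register the standard analysis yields $|k_i^{\mathrm{ideal}}-v^{*}_i|\le C_0$ with probability at least $8/\pi^{2}$ for an explicit constant $C_0$. The actual prepared state differs from the ideal only by unit-modulus phase factors $e^{2\pi i(\tilde F-F^{\mathrm{lin}})(x)}$, so its fidelity with the ideal state is at least $\cos(2\pi\eta)$; this yields a small trace distance, whose per-register marginal in turn gives a per-coordinate failure probability below $1/3$ once $\eta$ is taken small enough. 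Translating $|k_i-v^{*}_i|\le C_0$ through $\tilde g_i=(2L/N)k_i$ gives $|\tilde g_i-g_i|\le 96\pi C_0\sqrt{n\epsilon\beta}$, and a careful accounting of constants brings this under $1500\sqrt{n\epsilon\beta}$.

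The main obstacle will be the last step: rigorously converting the uniform $L_\infty$ phase bound $\eta$ into a per-coordinate tail bound with constant-success probability. The subtlety is that the perturbation $\tilde F-F^{\mathrm{lin}}$ is not separable across the registers, so the per-register marginal of the measurement is not exactly the ideal Kitaev distribution; however, the uniform phase bound implies that the perturbed and ideal global states are $O(\eta)$-close in trace distance, hence every marginal is $O(\eta)$-close in total variation, and choosing $\eta$ so that $8/\pi^{2}-O(\eta)>2/3$ suffices. This is essentially the single-coordinate form of the analysis in \cite{gilyen2019optimizing}; porting it here requires only replacing their global $L_\infty$ deviation with the marginal statement, which I would carry out in the detailed proof.
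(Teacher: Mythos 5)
The proposal is correct and takes essentially the same route as the paper's proof: both compare the prepared state to the ideal affine-phase state, convert the uniform phase deviation forced by $\beta$-smoothness, evaluation error, and arithmetic truncation into an $L_2$/trace-distance bound, and then apply the Kitaev per-coordinate tail bound to the ideal state. The only (immaterial) differences are in constant bookkeeping---you separate the evaluation error from the $1/N_0$ arithmetic term where the paper lumps them, and you invoke the $8/\pi^2$ nearest-integer bound where the paper uses $\Pr[\,|k_i-Ng_i/(2L)|>4\,]<1/6$.
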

\begin{proof}
  To analyze the \algname{GradientEstimate} algorithm, let the actual state obtained before applying the inverse QFT over $G$ be

\begin{equation}
  \label{eq:real}
  \ket{\psi} = \frac{1}{N^{n/2}}\sum_{x \in G^d} e^{2 \pi i \tilde{F}(x)} \ket{x},
\end{equation}
where $|\tilde{F}(x)-\frac{N}{2Ll}[f(x_0 + \frac{lx}{N}) - f(x_0)]| \le \frac{1}{N_0}$.  Also consider the idealized state
\begin{align}
  \label{eq:ideal}
  \ket{\phi} &= \frac{1}{N^{n/2}}\sum_{x \in G^d} e^{\frac{2 \pi i g \cdot x}{2L}}\ket{x}.
\end{align}

From \lem{fixed} we can efficiently apply the inverse QFT over $G$; from the analysis of phase estimation (see \cite{brassard2002quantum}), we know that
\begin{align}
  \label{eq:phase_est}
  \forall\,i \in \range{n} \quad \Pr \left[ \Big|\frac{N g_i}{2L} - k_i\Big| > w \right] < \frac{1}{2(w-1)},
\end{align}
so in particular,
\begin{align}
  \forall\,i \in \range{n} \quad \Pr \left[ \Big|\frac{N g_i}{2L} - k_i\Big| > 4 \right] < \frac{1}{6}.
\end{align}

Now, let $g = \nabla{f}(x_0)$. The difference in the probabilities of any measurement on $\ket{\psi}$ and $\ket{\phi}$ is bounded by the trace distance between the two density matrices, which is
\begin{equation}
  \label{eq:tr_distance}
  \norm {\ketbra{\psi}{\psi} - \ketbra{\phi}{\phi}}_1 = 2\sqrt{1 - |\braket{\psi}{\phi}|^2} \le 2\norm{\ket\psi - \ket\phi}.
\end{equation}
Since $f$ is $\beta$-smooth in $B_\infty(x,2\sqrt{\frac{\epsilon}{\beta}})$,

\begin{align}
  \label{eq:expansion}
  \tilde{F}(x) &\le \frac{N}{2Ll}\left[f\left(x_0 + \frac{lx}{N}\right) - f(x_0)\right] + \frac{1}{N_0} \\
               &\le \frac{N}{2Ll}\left( \frac{l}{N}\nabla f(x_0) \cdot x + \frac{\beta l^2 x^2}{2 N^2} \right) + \frac{1}{N_0}\\
               &\le \frac{1}{2L}\nabla f(x_0) \cdot x + \frac{N\beta ln}{4L} + \frac{N\epsilon}{Ll}.
\end{align}
Then we have
\begin{align}
  \label{eq:bound}
  \norm{\ket\psi - \ket\phi}^2 &= \frac{1}{N^d} \sum_{x \in G_d} |e^{2\pi i\tilde{F}(x)} - e^ {\frac{2\pi i g.x}{2L}}|^2\\
                               &= \frac{1}{N^d} \sum_{x \in G_d} 4\pi^2\left(\tilde{F}(x) - \frac{g \cdot x}{2L}\right)^2 \\
                               &\le \frac{1}{N^d} \sum_{x \in G_d} \frac{4 \pi^2 N^2}{L^2}\left(\frac{\beta ln}{4} + \frac{\epsilon}{l}\right)^2\\
                               &= \frac{4 \pi^2 N^2 \beta \epsilon n}{L^2}.
\end{align}
In \algo{grad_est}, $N$ is chosen such that $N \le \frac{L}{24\pi\sqrt{n\epsilon\beta}}$. Plugging this into \eq{bound},
\begin{equation}
  \label{eq:trace-distance-bound}
  \norm{\ket\psi - \ket\phi}^2 \le \frac{1}{144}.
\end{equation}
Thus the trace distance is at most $\frac{1}{6}$. Therefore, $\Pr\left[\big|k_i - \frac{Ng_i}{2L}\big| > 4\right] < \frac{1}{3}$. Thus we have
  \begin{equation}
    \Pr\left[ |\tilde{g}_i - g_i| > \frac{8L}{N} \right] < \frac{1}{3}, \quad \forall i \in \range{n}.
  \end{equation}
  From \algo{grad_est}, $\frac{1}{N} \le \frac{48\pi\sqrt{n\epsilon\beta}}{L}$, so $\frac{8L}{N} < 384\pi\sqrt{n\epsilon\beta} < 1500\sqrt{n\epsilon\beta}$, and the result follows.
\end{proof}

Finally, we prove that the height function $h_p$ can be evaluated with precision $\epsilon$ using $O(\log(1/\epsilon))$ queries to a membership oracle:
\begin{lemma}
  \label{lem:height-eval}
  The function $h_p(x)$ can be evaluated for any $x \in B_\infty(0,r/2)$ with any precision $\epsilon \ge 7\kappa\delta$ using $O(\log(1/\epsilon))$ queries to a membership oracle with error $\delta$.
\end{lemma}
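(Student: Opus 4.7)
The plan is to compute $h_p(x)=-t^*$, where $t^* := \max\{t \in \R : x+t\hat p \in K\}$, by binary search on $t$ using the noisy membership oracle for $K$. First bracket $t^*$ in an interval $[t_-,t_+]$ of length $O(R)$: since $K \subseteq B_2(0,R)$, every $t$ with $x+t\hat p \in K$ satisfies $|t| \le R + \|x\|_2 = O(R)$, and for $x$ in the stated neighborhood of the origin the line $x+\R\hat p$ meets the inner ball $B_2(0,r) \subseteq K$, so $t^*$ is well-defined. Then run the standard binary search: maintain $[a,b]$ with the oracle asserting $x+a\hat p \in B_2(K,\delta)$ and $x+b\hat p \notin B_2(K,-\delta)$, query at the midpoint each iteration, and halve the interval. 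After $T=O(\log(R/\epsilon))$ iterations, $b-a \le \epsilon/2$.

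The correctness claim rests on the Minkowski-sum identity $B_2(K,\delta) = K + B_2(0,\delta)$, together with the dual characterization $B_2(K,-\delta) = \{y : y + B_2(0,\delta) \subseteq K\}$. Combining either with additivity of the support function over Minkowski sums---using that the support of $B_2(0,\delta)$ in any unit direction equals $\delta$---yields
\[
  \max\{t : x+t\hat p \in B_2(K,\pm\delta)\} \;=\; t^* \pm \delta.
\]
Hence the oracle's response is forced outside the strip $[t^*-\delta,\,t^*+\delta]$ and may be either inside, so the binary-search invariants give $a \le t^*+\delta$ and $b \ge t^*-\delta$. Returning $\tilde t := a$ and reporting $-\tilde t$ as the estimate of $h_p(x)$ then gives $|\tilde t - t^*| \le \delta + (b-a) \le \delta + \epsilon/2 \le \epsilon$, using $\delta \le \epsilon/(7\kappa) \le \epsilon/2$ (since $\kappa \ge 1$).

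The query count $T = O(\log(R/\epsilon))$ fits into $O(\log(1/\epsilon))$ under the paper's $\tilde O$ convention, which absorbs polylogarithmic factors in $n$, $R$, $1/r$, and $1/\epsilon$. The main technical point is the Minkowski-sum identity, which pins the oracle's ambiguity region in the $t$-direction to \emph{exactly} $[t^*-\delta,\,t^*+\delta]$; without convexity of $K$ this region could be arbitrarily wide and binary search would not converge to anything meaningful. The factor $7\kappa$ in the hypothesis is generous slack, chosen to match the precision required by the subgradient calls in \thm{gen-halfspace} that invoke this lemma, rather than being tight for the binary search itself.
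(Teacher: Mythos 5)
Your proof has a fatal gap in the central step. You claim
\[
  \max\{t : x+t\hat p \in B_2(K,\pm\delta)\} \;=\; t^* \pm \delta,
\]
and justify it by additivity of the support function under Minkowski sums. But the support function $h_S(u) = \max_{z\in S}\langle z, u\rangle$ maximizes over the \emph{entire} set in direction $u$, not over points constrained to a fixed line. The quantity you need---the largest $t$ with $x + t\hat p$ in the set---is a line-intersection length, and it does not obey Minkowski additivity.

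A concrete counterexample: let $K$ be the triangle with vertices $(0,0)$, $(1,0)$, $(0,1)$, take $x = (0.1,0.1)$ and $\hat p = (1,0)$. The ray exits $K$ at $(0.9,0.1)$ on the hypotenuse $u+v=1$, whose outward unit normal is $(1,1)/\sqrt{2}$, so $\cos\theta = 1/\sqrt{2}$. The $\delta$-neighborhood pushes the hypotenuse out to $u+v = 1 + \delta\sqrt{2}$, so the ray leaves $B_2(K,\delta)$ at $t = t^* + \delta\sqrt{2} = t^* + \delta/\cos\theta$, not $t^* + \delta$. In general the binary search's ambiguity interval in $t$ has width about $2\delta/\cos\theta$, and $1/\cos\theta$ can grow like $\Theta(\kappa)$. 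This dependence is not ``generous slack'' but the entire point of the lemma, and it is exactly where the $7\kappa$ comes from. The paper's proof spends all its effort bounding $1/\cos\theta$: it observes that moving $x$ in the direction $-\hat q$ increases $h_p$ at rate approximately $1/\cos\theta$, and the $3\kappa$-Lipschitz bound on $h_p$ from \prop{height-function-lipschitz} then forces $1/\cos\theta \le 3.5\kappa$, yielding $2\delta/\cos\theta \le 7\kappa\delta$. Your argument skips this step entirely, and the support-function substitute you offer is false for line intersections; with it removed, the binary search only gives precision $O(\delta/\cos\theta)$ with no bound on $1/\cos\theta$, so the claim does not follow.
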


\begin{figure}[htbp]
\centering
\begin{tikzpicture}[scale=1.8]
\coordinate (mid) at (2,-0.25);
\coordinate (up) at (2.5,0.0625);
\coordinate (down) at (2.5,-0.25);
\node at (2.2,-0.19) {\tiny $\theta$};
\node at (2.15,-0.4) {\small $Q$};

\node at (2.15,-1.2) {\small $H$};
\draw [line width=0.1mm, black ] (2.03,-0.95) -- ((2.1,-1.1) {};;

\coordinate (mid2) at (-0.75,-1.5);
\coordinate (up2) at (-0.25,-1.1875);
\coordinate (down2) at (-0.25,-1.5);
\node at (-0.55,-1.44) {\tiny $\theta$};

\node at (-0.4,-1.64) {\footnotesize $-\Delta\vec{q}$};

\node at (-0.9,-1.6) {\small $y$};

\draw[thick,black,->] (0,-1.5) -- (4,1) node[black,right] {$\vec{p}$};

    \draw [dashed] (1.5,-2) -- (1.5,1) {};;
    \draw [line width=0.25mm, black ] (2,-2) -- (2,1) {};;
    \draw [dashed] (2.5,-2) -- (2.5,1) {};;
    \draw [line width=0.25mm, black ] (0,-1.5) -- (-0.75,-1.5) {};;
    \draw [fill] (0,-1.5) circle [radius=0.05] node [below right] {\small $x$};;
    \draw [line width=0.25mm, black ] (-0.75,-1.5) -- (2.5,0.53125) {};;
    \draw [line width=0.1mm, red ] (2.4,-0.05) -- (2.6,-0.1) {};;
    \draw [line width=0.25mm, red ] (1.5,-0.5625) -- (2.5,0.0625) node [below right] {\footnotesize error};;
    \draw [line width=0.25mm, dashed, blue,-> ] (2,-0.25) -- (2.5,-0.25) node [below right] {\footnotesize $\vec{q}$};;
    \draw [line width=0.1mm, blue ] (2.3,-0.3) -- (2.55,-0.4) {};;
    \draw [line width=0.25mm, black ] (0,-1.5) -- (0,-1.03125) {};;
    \draw [densely dotted, black,<->] (1.5,0.75) -- (2,0.75) ;;
    \draw [densely dotted, black,<->] (2,0.75) -- (2.5,0.75) ;;
    \node at (1.75,0.85) {\tiny$\delta$};
    \node at (2.25,0.85) {\tiny$\delta$};
    \node at (-0.45,-1.025) {\footnotesize$\frac{\Delta}{\cos\theta}$};

    \pic [draw, -, angle eccentricity=1.5, angle radius=0.2cm] {angle = down--mid--up};;
    \pic [draw, -, angle eccentricity=1.5, angle radius=0.2cm] {angle = down2--mid2--up2};
\end{tikzpicture}
\caption{Relating the error to $2\delta$ in $n=2$ dimensions.}
\label{fig:height-eval-figure}
\end{figure}
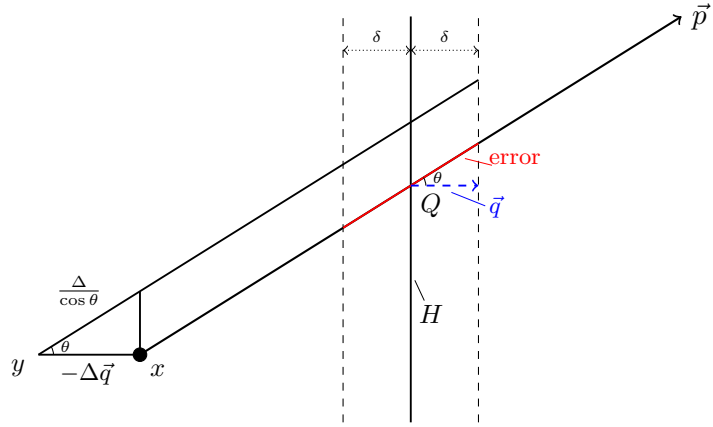

\begin{proof}
We denote the intersection of the ray $x + t \vec{p}$ and the boundary of $K$ by $Q$, and let $H$ be an $(n-1)$-dimensional hyperplane that is tangent to $K$ at $Q$. Because $K$ is convex, it lies on only one side of $H$; we let $\vec{q}$ denote the unit vector at $Q$ that is perpendicular to $H$ and points out of $K$. Let $\theta:=\arccos\<\vec{p},\vec{q}\>$.

Using binary search with $\log(1/\delta)$ queries, we can find a point $P$ on the ray $x + t \vec{p}$ such that $P \notin B(K,-\delta)$ and $P \in B(K,\delta)$. The total error for $t$ is then at most $\frac{2\delta}{\cos{\theta}}$. Now consider $y = x-\Delta\vec{q}$ for some small $\Delta>0$. Then $h_{p}(y)-h_{p}(x)=\frac{\Delta}{\cos{\theta}}+o(\frac{\Delta}{\cos{\theta}})$ (see \fig{height-eval-figure} for an illustration with $n=2$).

By \prop{height-function-lipschitz}, $h_p(x)$ is $3\kappa$-Lipschitz for any $x \in B(0,r/2)$; therefore, $h_{p}(y)-h_{p}(x)\leq 3\kappa\|y-x\|=3\kappa\Delta$, and hence
\begin{align}
\frac{\Delta}{\cos{\theta}}+o\Big(\frac{\Delta}{\cos{\theta}}\Big)\leq 3\kappa\Delta\quad\Rightarrow\quad \frac{1}{\cos\theta}\leq 3.5\kappa
\end{align}
for a small enough $\Delta>0$. Thus the error in $h_p(x)$ is at most $\frac{2\delta}{\cos{\theta}}\leq 7\kappa\delta$, and the result follows.
\end{proof}

\section{Proof details for lower bound} \label{sec:lower-bound-appendix}
In this section, we give proof details for our claims in \sec{lower-evaluation}.

\subsection{Convexity of max-norm optimization}\label{sec:evaluation-convexity}
In this subsection, we prove:
\begin{lemma}\label{lem:evaluation-convexity}
The function
\begin{align}
f_{c}(x)=\max_{i\in\range{n}}|\pi(x_{i})-c_{i}|+\Big(\sum_{i=1}^{n}|\pi(x_{i})-x_{i}|\Big)
\end{align}
is convex on $\R^{n}$, where $c\in\{0,1\}^{n}$ and $\pi\colon\R\rightarrow[0,1]$ is defined as
\begin{align}\label{eqn:projection-definition-appendix}
\pi(x)=\begin{cases}
    0 & \text{if $x<0$} \\
    x & \text{if $0\leq x\leq 1$} \\
    1 & \text{if $x>1$}.
  \end{cases}
\end{align}
\end{lemma}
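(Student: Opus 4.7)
The plan is to rewrite $f_c$ as a pointwise maximum of sums of convex functions of individual coordinates. The starting point is the single-coordinate identity
\[
  |\pi(x_i) - c_i| \;=\; |x_i - c_i| \;-\; |\pi(x_i) - x_i|,
\]
valid for every $c_i \in \{0,1\}$. I would prove it by a case split on whether $x_i < 0$, $x_i \in [0,1]$, or $x_i > 1$, using the piecewise definition \eqn{projection-definition-appendix} to evaluate both sides explicitly in each case. Geometrically, this is a one-dimensional Pythagoras-type identity for the metric projection onto $[0,1]$, and it crucially exploits $c_i \in [0,1]$, so that the segment from $x_i$ to $c_i$ passes through $\pi(x_i)$.

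Substituting the identity into \defn{max-OPT}, the $-|\pi(x_i)-x_i|$ term coming from the max is absorbed by the global sum $\sum_i |\pi(x_i)-x_i|$, giving
\[
  f_c(x) \;=\; \max_{j \in \range{n}} \Bigl( |x_j - c_j| \,+\, \sum_{i \in \range{n},\, i \neq j} |\pi(x_i) - x_i| \Bigr).
\]

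For each fixed $j$, every summand in the bracketed expression is a convex function of one coordinate: $|x_j - c_j|$ is a shifted absolute value, and each $|\pi(x_i) - x_i| = \max\{0,\, -x_i,\, x_i - 1\}$ is the distance from $x_i$ to the convex set $[0,1]$, expressible as a maximum of three affine functions. Sums of convex functions are convex, so each of the $n$ functions inside the outer maximum is convex on $\R^n$, and their pointwise maximum $f_c$ is therefore convex on $\R^n$.

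The only nonroutine step is establishing the identity, since a direct attack on $f_c$ would need to cope with $\pi$ itself being nonconvex (for instance $\pi(x_i)$ is the coordinate function restricted to $[0,1]$, which is neither convex nor concave globally), and with the fact that individual summands such as $|\pi(x_i) - c_i|$ are likewise nonconvex. The identity is exactly what cancels the nonconvex contribution of $\pi$ and reduces the problem to elementary closure properties of convex functions.
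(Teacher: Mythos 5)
Your proposal is correct and matches the paper's proof: the identity $|\pi(x_i)-c_i| = |x_i-c_i| - |\pi(x_i)-x_i|$ is exactly the paper's claim that $h_{c,i}(x):=|\pi(x_i)-c_i|+|\pi(x_i)-x_i|$ equals $|x_i-c_i|$, and the resulting rewriting of $f_c$ as $\max_j\bigl(|x_j-c_j|+\sum_{i\neq j}|\pi(x_i)-x_i|\bigr)$ is the paper's decomposition into a pointwise maximum of sums of convex functions. Both proofs then finish the same way, via $|\pi(x_i)-x_i|=\max\{0,-x_i,x_i-1\}$ and closure of convexity under sums and pointwise maxima.
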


\begin{proof}
For convenience, we define $g_i \colon\R^n \to\R$ for $i \in [n]$ as
\begin{align}\label{eqn:function-g-definition}
g_i(x):=|\pi(x_i)-x_i|=\begin{cases}
    -x_i & \text{if $x_i<0$} \\
    0 & \text{if $0\leq x_i\leq 1$} \\
    x_i-1 & \text{if $x_i>1$}
  \end{cases}
\end{align}
where the second equality follows from \eqn{projection-definition-appendix}. It is clear that $g_i(x) = \max \{-x_i, 0, x_i-1\}$ by \eqn{function-g-definition}. Since the pointwise maximum of convex functions is convex, $g_i(x)$ is convex for all $i \in [n]$.

Moreover, for all $i\in\range{n}$ we define $h_{c,i}\colon\R^n\to\R$ as $h_{c,i}(x):=|\pi(x_{i})-c_{i}|+|\pi(x_{i})-x_{i}|$. We claim that $h_{c,i}(x)=|x_{i}-c_{i}|$, and thus $h_{c,i}$ is convex. If $c_{i}=0$, then $|\pi(x_{i})-c_{i}|+|\pi(x_{i})-x_{i}|=\pi(x_{i})+|\pi(x_{i})-x_{i}|$; as a result,
\begin{align}
x_{i}<0 &\quad\Rightarrow\quad \pi(x_{i})+|\pi(x_{i})-x_{i}|=0+|0-x_{i}|=-x_{i}; \\
0\leq x_{i}\leq 1&\quad\Rightarrow\quad \pi(x_{i})+|\pi(x_{i})-x_{i}|=x_{i}+|x_{i}-x_{i}|=x_{i}; \\
x_{i}>1 &\quad\Rightarrow\quad \pi(x_{i})+|\pi(x_{i})-x_{i}|=1+|1-x_{i}|=x_{i}.
\end{align}
Therefore, $\forall\,i \in\range{n}, h_{c,i}(x)=|x_{i}-c_{i}|$. The proof is similar when $c_{i}=1$.

Now we have
\begin{align}
f_{c}(x)&=\max_{i\in\range{n}}\Big(|\pi(x_{i})-c_{i}|+\sum_{j=1}^{n}|\pi(x_{j})-x_{j}|\Big) \\
&=\max_{i\in\range{n}}\Big(\big(|\pi(x_{i})-c_{i}|+|\pi(x_{i})-x_{i}|\big)+\sum_{j\neq i}g_j(x)\Big) \\
&=\max_{i\in\range{n}}\Big(h_{c,i}(x)+\sum_{j\neq i}g_j(x)\Big).
\end{align}
Because $h_{c,i}$ and $g_j$ are both convex functions on $\R^n$ for all $i,j\in\range{n}$, the function $h_{c,i}(x)+\sum_{j\neq i}g_j(x)$ is convex on $\R^{n}$. Thus $f_c$ is the pointwise maximum of $n$ convex functions and is therefore itself convex.
\end{proof}

\subsection{Proof of \lem{discretization}}\label{sec:discretization-appendix}
\subsubsection{Correctness of \lin{discretization-1} and \lin{discretization-2}}\label{sec:discrete-line-12}

In this subsection, we prove:

\begin{lemma}\label{lem:discrete-line-12}
Let $b$ and $\sigma$ be the values computed in \lin{discretization-1} of \algo{discretization}, and let $x^{*} = \chi(b,\sigma^{-1})$. Then $\Ord(x^{*})=\Ord(x)$.
\end{lemma}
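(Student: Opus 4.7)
The plan is to unfold the definitions of $\chi$ and of the ordering in \lin{discretization-1}, and verify by direct substitution that the pair $(b,\sigma)$ selected for $x$ also certifies the ordering for $x^{*}=\chi(b,\sigma^{-1})$; uniqueness of the certificate for $x^{*}$ will then force $\Ord(x^{*})=(b,\sigma)=\Ord(x)$.

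First I would pin down the components of $x^{*}$. By \eq{chi}, for every $i\in\range{n}$ we have $x^{*}_i=\sigma^{-1}(i)/(2n+1)$ when $b_i=0$ and $x^{*}_i=1-\sigma^{-1}(i)/(2n+1)$ when $b_i=1$. Because $\sigma^{-1}(i)\in\range{n}$, the ratio $\sigma^{-1}(i)/(2n+1)$ lies strictly below $1/2$, so in both cases
\[
\max\{x^{*}_i,1-x^{*}_i\}=\frac{2n+1-\sigma^{-1}(i)}{2n+1},\qquad \min\{x^{*}_i,1-x^{*}_i\}=\frac{\sigma^{-1}(i)}{2n+1},
\]
and $x^{*}_i>1/2$ precisely when $b_i=1$. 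Reading the formula of \lin{discretization-1} backwards, this last equivalence says that $b$ is the unique bit vector that places the larger of $\{x^{*}_i,1-x^{*}_i\}$ into the top half of the chain and the smaller into the bottom half, so the bit vector returned by $\Ord(x^{*})$ must equal $b$.

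Next I would check that the permutation is preserved. Substituting $j=\sigma(k)$ (so $\sigma^{-1}(j)=k$) into the formulas above, the ``top half'' value $b_{\sigma(k)}x^{*}_{\sigma(k)}+(1-b_{\sigma(k)})(1-x^{*}_{\sigma(k)})$ equals $(2n+1-k)/(2n+1)$, which is strictly decreasing in $k$; the corresponding ``bottom half'' value $(1-b_{\sigma(k)})x^{*}_{\sigma(k)}+b_{\sigma(k)}(1-x^{*}_{\sigma(k)})$ equals $k/(2n+1)$, which is strictly decreasing when read from $k=n$ down to $k=1$. Because $(n+1)/(2n+1)>n/(2n+1)$, the last top-half entry strictly exceeds the first bottom-half entry, so the entire chain of \eqn{ord-x} holds with all inequalities strict. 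Hence $(b,\sigma)$ is a valid output of $\Ord(x^{*})$; since the $2n$ numbers $\{x^{*}_i,1-x^{*}_i\}_{i\in\range{n}}=\{1/(2n+1),\ldots,2n/(2n+1)\}$ are pairwise distinct, this output is in fact unique, giving $\Ord(x^{*})=(b,\sigma)=\Ord(x)$ as required.

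I do not anticipate any substantive obstacle here; the only care needed is keeping the roles of $\sigma$ and $\sigma^{-1}$ straight between the definition of $\chi$ in \eq{chi} and the indexing of the ordering in \lin{discretization-1}, which is why the substitution $j=\sigma(k)$ in the second step is the natural bookkeeping device.
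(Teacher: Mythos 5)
Your proposal is correct and follows essentially the same route as the paper: the key step is the substitution $j=\sigma(k)$, which shows that the top-half entry for index $\sigma(k)$ equals $(2n+1-k)/(2n+1)$ and the bottom-half entry equals $k/(2n+1)$ — exactly the paper's equations \eqn{ord-x-1} and \eqn{ord-x-2} — from which the chain \eqn{ord-x} for $x^*$ follows with all inequalities strict. Your additional opening observation (that $b$ is forced because it places the larger of $\{x^*_i,1-x^*_i\}$ on top) is a reasonable sanity check but is logically subsumed by the uniqueness remark at the end, so the paper omits it; otherwise the two arguments match.
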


\begin{proof}
First, observe that $b\in\{0,1\}^{n}$ and $\sigma\in S_{n}$ because
\begin{itemize}
\item For all $i\in\range{n}$, both $x_{i}$ and $1-x_{i}$ can be written as $b_{i}x_{i}+(1-b_{i})(1-x_{i})$ for some $b_{i}\in\{0,1\}$;
\item $\Ord(x)$ is \emph{palindrome}, i.e., if $x_{i_{1}}$ is the largest in $\{x_{1},\ldots,x_{n},1-x_{1},\ldots,1-x_{n}\}$ then $1-x_{i_{1}}$ is the smallest in $\{x_{1},\ldots,x_{n},1-x_{1},\ldots,1-x_{n}\}$; if $1-x_{i_{2}}$ is the second largest in $\{x_{1},\ldots,x_{n},1-x_{1},\ldots,1-x_{n}\}$ then $x_{i_{2}}$ is the second smallest in $\{x_{1},\ldots,x_{n},1-x_{1},\ldots,1-x_{n}\}$; etc.
\end{itemize}

Recall that in \eqn{ord-x}, the decreasing order of $\{x_{1},\ldots,x_{n},1-x_{1},\ldots,1-x_{n}\}$ is
\begin{align}\label{eqn:ord-x-rewrite}
&b_{\sigma(1)}x_{\sigma(1)}+(1-b_{\sigma(1)})(1-x_{\sigma(1)})\geq\cdots\geq b_{\sigma(n)}x_{\sigma(n)}+(1-b_{\sigma(n)})(1-x_{\sigma(n)}) \nonumber \\
&\quad \geq (1-b_{\sigma(n)})x_{\sigma(n)}+b_{\sigma(n)}(1-x_{\sigma(n)})\geq\cdots\geq (1-b_{\sigma(1)})x_{\sigma(1)}+b_{\sigma(1)}(1-x_{\sigma(1)}).
\end{align}
On the other hand, by the definition of $D_{n}$, we have
\begin{align}\label{eqn:ord-x*}
\{x^{*}_{1},\ldots,x^{*}_{n},1-x^{*}_{1},\ldots,1-x^{*}_{n}\}=\Big\{\frac{1}{2n+1},\frac{2}{2n+1},\ldots,\frac{2n}{2n+1}\Big\}.
\end{align}
Combining \eqn{ord-x-rewrite} and \eqn{ord-x*}, it suffices to prove that for any $i\in\range{n}$,
\begin{align}
b_{\sigma(i)}x^{*}_{\sigma(i)}+(1-b_{\sigma(i)})(1-x^{*}_{\sigma(i)})&= 1-\frac{i}{2n+1}; \label{eqn:ord-x-1} \\
(1-b_{\sigma(i)})x^{*}_{\sigma(i)}+b_{\sigma(i)}(1-x^{*}_{\sigma(i)})&= \frac{i}{2n+1}. \label{eqn:ord-x-2}
\end{align}
We only prove \eqn{ord-x-1}; the proof of \eqn{ord-x-2} follows symmetrically.

By \eq{chi}, we have $x_{j}^{*}=(1-b_{j})\frac{\sigma^{-1}(j)}{2n+1}+b_{j}(1-\frac{\sigma^{-1}(j)}{2n+1})$ for all $j\in\range{n}$; taking $j=\sigma(i)$, we have $x_{\sigma(i)}^{*}=(1-b_{\sigma(i)})\frac{i}{2n+1}+b_{\sigma(i)}(1-\frac{i}{2n+1})$. Moreover, since $b_{\sigma(i)}\in\{0,1\}$ implies that $b_{\sigma(i)}(1-b_{\sigma(i)})=0$ and $b_{\sigma(i)}^{2}+(1-b_{\sigma(i)})^{2}=1$, we have
\begin{align}
&b_{\sigma(i)}x^{*}_{\sigma(i)}+(1-b_{\sigma(i)})(1-x^{*}_{\sigma(i)}) \nonumber \\
&\qquad=b_{\sigma(i)}\big[(1-b_{\sigma(i)})\tfrac{i}{2n+1}+b_{\sigma(i)}\big(1-\tfrac{i}{2n+1}\big)\big] \nonumber \\
&\qquad\quad+(1-b_{\sigma(i)})\big[b_{\sigma(i)}\tfrac{i}{2n+1}+(1-b_{\sigma(i)})\big(1-\tfrac{i}{2n+1}\big)\big] \\
&\qquad=2b_{\sigma(i)}(1-b_{\sigma(i)})\tfrac{i}{2n+1}+\big(b_{\sigma(i)}^{2}+(1-b_{\sigma(i)})^{2}\big)\big(1-\tfrac{i}{2n+1}\big) \\
&\qquad=1-\tfrac{i}{2n+1},
\end{align}
which is exactly \eqn{ord-x-1}.
\end{proof}

\subsubsection{Correctness of \lin{discretization-3}}\label{sec:discrete-line-3}
In this subsection, we prove:
\begin{lemma}\label{lem:discrete-line-3}
There is some $k^{*}\in\{1,\ldots,n+1\}$ such that $f(x^{*})=1-\frac{k^{*}}{2n+1}$.
\end{lemma}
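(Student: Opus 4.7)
The plan is to enumerate the possible values of $f(x^*)$ using the explicit structure of $D_n$ and then show that the maximum must lie in the upper half of the discrete grid $\{1/(2n+1),\ldots,2n/(2n+1)\}$. By \lem{discrete-line-12}, $x^* = \chi(b,\sigma^{-1})$ where $(b,\sigma) = \Ord(x)$; in particular $x^* \in [0,1]^n$, so the penalty term in $f_c$ vanishes and $f(x^*) = \max_{i \in \range{n}} |x^*_i - c_i|$.

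Next I would compute $|x^*_i - c_i|$ explicitly from \eq{chi}. Writing $j_i := \sigma^{-1}(i)$, the coordinate $x^*_i$ equals either $j_i/(2n+1)$ (if $b_i=0$) or $(2n+1-j_i)/(2n+1)$ (if $b_i=1$). Pairing this with $c_i \in \{0,1\}$ and using $x^*_i \in [0,1]$, one gets $|x^*_i - c_i| = m_i/(2n+1)$ where $m_i = j_i$ if $b_i = c_i$ and $m_i = 2n+1-j_i$ otherwise. Hence $f(x^*) = (\max_i m_i)/(2n+1)$, which in particular confirms that $f(x^*)$ is always an integer multiple of $1/(2n+1)$ (a fact implicitly used in \lin{discretization-3} of \algo{discretization}).

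The crux is to show that $n \le \max_i m_i \le 2n$. Since $\sigma^{-1}$ is a permutation of $\range{n}$, we have $1 \le j_i \le n$ and $n+1 \le 2n+1-j_i \le 2n$ for every $i$, so $m_i \in \{1,\ldots,2n\}$ and the value of $m_i$ uniquely determines $j_i$ (via whether $m_i \le n$). Consequently the map $i \mapsto m_i$ is injective, so $\{m_1,\ldots,m_n\}$ is a set of $n$ distinct integers in $\{1,\ldots,2n\}$; by pigeonhole its maximum is at least $n$.

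Setting $k^* := 2n+1 - \max_i m_i$ then yields $f(x^*) = 1 - k^*/(2n+1)$ with $k^* \in \{1,\ldots,n+1\}$, as claimed. The only mildly subtle point is the injectivity of $i \mapsto m_i$, which follows cleanly from the gap between $\{1,\ldots,n\}$ and $\{n+1,\ldots,2n\}$; no deeper technical obstacle arises.
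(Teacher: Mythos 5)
Your proof is correct, and the overall structure (compute $|x^*_i - c_i|$ explicitly as a multiple of $1/(2n+1)$, then bound $\max_i |x^*_i - c_i|$) mirrors the paper's. The difference is in how the crucial lower bound $f(x^*) \ge n/(2n+1)$ is established. You argue indirectly: the map $i \mapsto m_i$ is injective (each $m_i$ determines $j_i = \sigma^{-1}(i)$ via the threshold $n$, and $\sigma^{-1}$ is a bijection), so $\{m_1,\ldots,m_n\}$ is a set of $n$ distinct integers in $\{1,\ldots,2n\}$, forcing $\max_i m_i \ge n$ by pigeonhole. The paper instead inspects the single coordinate $i = \sigma(n)$, for which $j_i = n$ and hence $x^*_{\sigma(n)} \in \{n/(2n+1), (n+1)/(2n+1)\}$; combined with $c_{\sigma(n)} \in \{0,1\}$, this coordinate alone already witnesses $|x^*_{\sigma(n)} - c_{\sigma(n)}| \ge n/(2n+1)$. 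The paper's argument is shorter and constructive (it names a witnessing coordinate), while yours gives the mildly stronger structural fact that all the $m_i$'s are distinct, which is not needed for the lemma but is a clean observation in its own right. The upper bound $k^* \ge 1$ (i.e.\ $\max_i m_i \le 2n$) and the integrality of $k^*$ are handled the same way in both.
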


\begin{proof}
Because $|x_{i}^{*}-c_{i}|$ is an integer multiple of $\frac{1}{2n+1}$ for all $i\in\range{n}$, $f(x^{*})$ must also be an integer multiple of $\frac{1}{2n+1}$. As a result, $k^{*}=(2n+1)(1-f(x^{*}))\in\Z$.

It remains to prove that $1\leq k^{*}\leq n+1$. By the definition of $D_{n}$ in \eqn{discrete-set}, we have
\begin{align}\label{eqn:discrete-line-3-1}
x^{*}_{i}=(1-b_{i})\frac{\sigma^{-1}(i)}{2n+1}+b_{i}\Big(1-\frac{\sigma^{-1}(i)}{2n+1}\Big)\qquad\forall\,i\in\range{n};
\end{align}
since $b_{i}=0$ or 1, we have $x^{*}_{i}\in\{\frac{\sigma^{-1}(i)}{2n+1},1-\frac{\sigma^{-1}(i)}{2n+1}\}$. Because we also have $c_{i}\in\{0,1\}$,
\begin{align}
|x^{*}_{i}-c_{i}|\leq 1-\frac{\sigma^{-1}(i)}{2n+1}\leq\frac{2n}{2n+1}.
\end{align}
As a result,
\begin{align}
f(x^{*})=\max_{i\in\range{n}}|x^{*}_{i}-c_{i}|\leq\frac{2n}{2n+1}\ \Rightarrow\ k^{*}\geq 1.
\end{align}

It remains to prove $k^{*}\leq n+1$. By \eqn{discrete-line-3-1}, we have
\begin{align}\label{eqn:discrete-line-3-2}
x^{*}_{\sigma(n)}\in\Big\{\frac{n}{2n+1},\frac{n+1}{2n+1}\Big\};
\end{align}
because $c_{\sigma(n)}\in\{0,1\}$, we have
\begin{align}\label{eqn:discrete-line-3-3}
|x^{*}_{\sigma(n)}-c_{\sigma(n)}|\geq\frac{n}{2n+1}.
\end{align}
Therefore, we have
\begin{align}\label{eqn:discrete-line-3-4}
f(x^{*})=\max_{i\in\range{n}}|x^{*}_{i}-c_{i}|\geq|x^{*}_{\sigma(n)}-c_{\sigma(n)}|\geq\frac{n}{2n+1},
\end{align}
which implies $k^{*}\leq n+1$.
\end{proof}

\subsubsection{Correctness of \lin{discretization-4}}\label{sec:discrete-line-4}
In this subsection, we prove:
\begin{lemma}\label{lem:discrete-line-4}
The output of $f(x)$ in \lin{discretization-4} is correct.
\end{lemma}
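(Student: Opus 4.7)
The plan is to translate the value $k^*$ computed in \lin{discretization-3} into information about which index attains the maximum defining $f(x^*)$, and then to transfer this information to $x$ using the ordering equality $\Ord(x^*)=\Ord(x)$ supplied by \lem{discrete-line-12}.

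First I would analyze the structure of $f(x^*)$ using the explicit position identities \eqn{ord-x-1} and \eqn{ord-x-2}: the $2n$ numbers $\{x_1^*,\ldots,x_n^*,1-x_1^*,\ldots,1-x_n^*\}$ in decreasing order are precisely $\tfrac{2n}{2n+1},\tfrac{2n-1}{2n+1},\ldots,\tfrac{1}{2n+1}$, and for each rank $i\le n$ the entry at position $i$ is the larger element of the pair $\{x_{\sigma(i)}^*,1-x_{\sigma(i)}^*\}$, namely $b_{\sigma(i)}x_{\sigma(i)}^*+(1-b_{\sigma(i)})(1-x_{\sigma(i)}^*)$, while the entry at position $2n+1-i$ is the smaller element $(1-b_{\sigma(i)})x_{\sigma(i)}^*+b_{\sigma(i)}(1-x_{\sigma(i)}^*)$. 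Since for every $i\in\range{n}$ the pick $|x_i^*-c_i|$ is either $x_i^*$ or $1-x_i^*$, the equation $f(x^*)=1-\tfrac{k^*}{2n+1}$ asserts that the largest pick across the $n$ pairs sits at rank $k^*$ in this $2n$-list. I would then verify the two cases: if $k^*\le n$, then for every $j<k^*$ the pick at pair $\sigma(j)$ must be the smaller element (otherwise $f(x^*)$ would exceed $1-\tfrac{k^*}{2n+1}$), and at pair $\sigma(k^*)$ the pick is the larger element, yielding the unique argmax $\sigma(k^*)$; whereas if $k^*=n+1$, every pick is the smaller element of its pair, and the largest such smaller element is at pair $\sigma(n)$ (value $\tfrac{n}{2n+1}$), so the unique argmax is $\sigma(n)$.

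For the second step, rank-preservation across the $2n$-lists of $x$ and $x^*$ (the content of $\Ord(x)=\Ord(x^*)$) immediately gives $\arg\max_{i}|x_i-c_i|=\arg\max_{i}|x_i^*-c_i|$: the ``pick larger vs.\ smaller'' pattern within each pair is determined by $c$ alone and is consistent between the two lists, and the argmax among the picks is simply the pick of highest rank. The bit $b_{\sigma(k^*)}$ (respectively $b_{\sigma(n)}$) was defined precisely so that $b_{\sigma(k^*)}x_{\sigma(k^*)}+(1-b_{\sigma(k^*)})(1-x_{\sigma(k^*)})$ is the larger of $\{x_{\sigma(k^*)},1-x_{\sigma(k^*)}\}$, and $(1-b_{\sigma(n)})x_{\sigma(n)}+b_{\sigma(n)}(1-x_{\sigma(n)})$ is the smaller of $\{x_{\sigma(n)},1-x_{\sigma(n)}\}$. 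Combining with the argmax identification above, these expressions equal $|x_{\sigma(k^*)}-c_{\sigma(k^*)}|$ and $|x_{\sigma(n)}-c_{\sigma(n)}|$ respectively, and hence equal $f(x)$, matching the formula returned in \eqn{discretization-function-value}.

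The main subtlety I anticipate is the bookkeeping in the first step, specifically ensuring that for $k^*\le n$ the larger-element pick at pair $\sigma(k^*)$ strictly dominates every other pick (so the argmax is unique), and that for $k^*=n+1$ no other pair's smaller element equals $\tfrac{n}{2n+1}$. Both follow from the fact that the $2n$ values $\tfrac{1}{2n+1},\ldots,\tfrac{2n}{2n+1}$ of $D_n$ are distinct, so this obstacle is notational rather than conceptual.
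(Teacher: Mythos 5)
Your proof is correct and follows essentially the same route as the paper: both determine from $k^*$ that the pick at $\sigma(k^*)$ is the larger element of its pair while every pick at rank $j<k^*$ is the smaller element, then use $\Ord(x)=\Ord(x^*)$ from \lem{discrete-line-12} to transfer the argmax identification to $x$ and match \eqn{discretization-function-value}. The paper spells this out via explicit casework on the $b$-bits to verify the dominance inequality $|x_{\sigma(k^*)}-c_{\sigma(k^*)}|\ge|x_{\sigma(i)}-c_{\sigma(i)}|$ for every $i$, which your rank-based framing compresses into the observation that every other pick sits strictly lower in the shared $2n$-element ordering.
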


\begin{proof}
A key observation we use in the proof, following directly from \eqn{discrete-line-3-1}, is that
\begin{align}\label{eqn:discrete-line-4-1}
|x^{*}_{\sigma(i)}-c_{\sigma(i)}|=
\begin{cases}
\frac{i}{2n+1} & \text{ if }c_{\sigma(i)}=b_{\sigma(i)}; \\
1-\frac{i}{2n+1} & \text{ if }c_{\sigma(i)}=1-b_{\sigma(i)}.
\end{cases}
\end{align}

First, assume that $k^{*}\in\{1,\ldots,n\}$ (i.e., the ``otherwise'' case in \eqn{discretization-function-value} happens).
By \eqn{discrete-line-4-1},
\begin{align}
x^{*}_{\sigma(k^{*})}\in\Big\{\frac{k^{*}}{2n+1},1-\frac{k^{*}}{2n+1}\Big\};\quad x^{*}_{\sigma(i)}\notin\Big\{\frac{k^{*}}{2n+1},1-\frac{k^{*}}{2n+1}\Big\}\quad\forall\,i\neq k^{*},
\end{align}
which implies that for all $i\neq k^{*}$, $|x^{*}_{\sigma(i)}-c_{\sigma(i)}|\neq 1-\frac{k^{*}}{2n+1}$ since $c_{\sigma(i)}\in\{0,1\}$. As a result, we must have
\begin{align}
|x^{*}_{\sigma(k^{*})}-c_{\sigma(k^{*})}|=1-\frac{k^{*}}{2n+1}.
\end{align}
Together with \eqn{discrete-line-4-1}, this implies
\begin{align}\label{eqn:discrete-line-4-2}
c_{\sigma(k^{*})}=1-b_{\sigma(k^{*})}.
\end{align}

For any $i<k^{*}$, if $c_{\sigma(i)}=1-b_{\sigma(i)}$, then \eqn{discrete-line-4-1} implies that
\begin{align}
f(x^{*})\geq |x^{*}_{\sigma(i)}-c_{\sigma(i)}|=1-\frac{i}{2n+1}>1-\frac{k^{*}}{2n+1},
\end{align}
which contradicts with the assumption that $f(x^{*})=1-\frac{k^{*}}{2n+1}$. Therefore, we must have
\begin{align}\label{eqn:discrete-line-4-3}
c_{\sigma(i)}=b_{\sigma(i)}\quad\forall\,i\in\{1,\ldots,k^{*}-1\}.
\end{align}

Recall that the decreasing order of $\{x_{1},\ldots,x_{n},1-x_{1},\ldots,1-x_{n}\}$ is
\begin{align}\label{eqn:discrete-line-4-4}
&b_{\sigma(1)}x_{\sigma(1)}+(1-b_{\sigma(1)})(1-x_{\sigma(1)})\geq\cdots\geq b_{\sigma(n)}x_{\sigma(n)}+(1-b_{\sigma(n)})(1-x_{\sigma(n)}) \nonumber \\
&\quad \geq (1-b_{\sigma(n)})x_{\sigma(n)}+b_{\sigma(n)}(1-x_{\sigma(n)})\geq\cdots\geq (1-b_{\sigma(1)})x_{\sigma(1)}+b_{\sigma(1)}(1-x_{\sigma(1)}).
\end{align}
Based on \eqn{discrete-line-4-2}, \eqn{discrete-line-4-3}, and \eqn{discrete-line-4-4}, we next prove
\begin{align}\label{eqn:discrete-line-4-5}
|x_{\sigma(k^{*})}-c_{\sigma(k^{*})}|\geq |x_{\sigma(i)}-c_{\sigma(i)}|\quad \forall\,i\in\range{n}.
\end{align}
If \eqn{discrete-line-4-5} holds, it implies
\begin{align}\label{eqn:discrete-line-4-6}
f(x)=\max_{i\in\range{n}}|x_{i}-c_{i}|=|x_{\sigma(k^{*})}-c_{\sigma(k^{*})}|.
\end{align}
If $b_{\sigma(k^{*})}=0$, then \eqn{discrete-line-4-2} implies $c_{\sigma(k^{*})}=1$, \eqn{discrete-line-4-6} implies $f(x)=1-x_{\sigma(k^{*})}$, and the output in \lin{discretization-4} satisfies
\begin{align}
b_{\sigma(k^{*})}x_{\sigma(k^{*})}+(1-b_{\sigma(k^{*})})(1-x_{\sigma(k^{*})})=1-x_{\sigma(k^{*})}=f(x);
\end{align}
If $b_{\sigma(k^{*})}=1$, then \eqn{discrete-line-4-2} implies $c_{\sigma(k^{*})}=0$, \eqn{discrete-line-4-6} implies $f(x)=x_{\sigma(k^{*})}$, and the output in \lin{discretization-4} satisfies
\begin{align}
b_{\sigma(k^{*})}x_{\sigma(k^{*})}+(1-b_{\sigma(k^{*})})(1-x_{\sigma(k^{*})})=x_{\sigma(k^{*})}=f(x).
\end{align}
The correctness of \lin{discretization-4} follows.

It remains to prove \eqn{discrete-line-4-5}. We divide its proof into two parts:
\begin{itemize}
\item Suppose $i<k^{*}$. By \eqn{discrete-line-4-4}, we have
\begin{align}\label{eqn:discrete-line-4-7}
b_{\sigma(k^{*})}x_{\sigma(k^{*})}+(1-b_{\sigma(k^{*})})(1-x_{\sigma(k^{*})})
\geq(1-b_{\sigma(i)})x_{\sigma(i)}+b_{\sigma(i)}(1-x_{\sigma(i)}).
\end{align}
\begin{itemize}
\item If $b_{\sigma(k^{*})}=0$ and $b_{\sigma(i)}=0$, we have $c_{\sigma(k^{*})}=1$ and $c_{\sigma(i)}=0$ by \eqn{discrete-line-4-2} and \eqn{discrete-line-4-3}, respectively; \eqn{discrete-line-4-7} reduces to $1-x_{\sigma(k^{*})}\geq x_{\sigma(i)}$;
\item If $b_{\sigma(k^{*})}=0$ and $b_{\sigma(i)}=1$, we have $c_{\sigma(k^{*})}=1$ and $c_{\sigma(i)}=1$ by \eqn{discrete-line-4-2} and \eqn{discrete-line-4-3}, respectively; \eqn{discrete-line-4-7} reduces to $1-x_{\sigma(k^{*})}\geq 1-x_{\sigma(i)}$;
\item If $b_{\sigma(k^{*})}=1$ and $b_{\sigma(i)}=0$, we have $c_{\sigma(k^{*})}=0$ and $c_{\sigma(i)}=0$ by \eqn{discrete-line-4-2} and \eqn{discrete-line-4-3}, respectively; \eqn{discrete-line-4-7} reduces to $x_{\sigma(k^{*})}\geq x_{\sigma(i)}$;
\item If $b_{\sigma(k^{*})}=1$ and $b_{\sigma(i)}=1$, we have $c_{\sigma(k^{*})}=0$ and $c_{\sigma(i)}=1$ by \eqn{discrete-line-4-2} and \eqn{discrete-line-4-3}, respectively; \eqn{discrete-line-4-7} reduces to $x_{\sigma(k^{*})}\geq 1-x_{\sigma(i)}$.
\end{itemize}
In each case, the resulting expression is exactly \eqn{discrete-line-4-5}.
Overall, we see that \eqn{discrete-line-4-5} is always true when $i<k^{*}$.

\item Suppose $i>k^{*}$. By \eqn{discrete-line-4-4}, we have
\begin{align}
b_{\sigma(k^{*})}x_{\sigma(k^{*})}+(1-b_{\sigma(k^{*})})(1-x_{\sigma(k^{*})})
&\geq b_{\sigma(i)}x_{\sigma(i)}+(1-b_{\sigma(i)})(1-x_{\sigma(i)}); \label{eqn:discrete-line-4-8a} \\
b_{\sigma(k^{*})}x_{\sigma(k^{*})}+(1-b_{\sigma(k^{*})})(1-x_{\sigma(k^{*})})
&\geq(1-b_{\sigma(i)})x_{\sigma(i)}+b_{\sigma(i)}(1-x_{\sigma(i)}). \label{eqn:discrete-line-4-8b}
\end{align}
\begin{itemize}
\item If $b_{\sigma(k^{*})}=0$, we have $c_{\sigma(k^{*})}=1$ by \eqn{discrete-line-4-2}; \eqn{discrete-line-4-8a} and \eqn{discrete-line-4-8b} give $1-x_{\sigma(k^{*})}\geq \max\{x_{\sigma(i)},1-x_{\sigma(i)}\}$;
\item If $b_{\sigma(k^{*})}=1$, we have $c_{\sigma(k^{*})}=0$ by \eqn{discrete-line-4-2}; \eqn{discrete-line-4-8a} and \eqn{discrete-line-4-8b} give $x_{\sigma(k^{*})}\geq \max\{x_{\sigma(i)},1-x_{\sigma(i)}\}$.
\end{itemize}
Both cases imply \eqn{discrete-line-4-5}, so we see this also holds for $i>k^{*}$.
\end{itemize}

The same proof works when $k^{*}=n+1$. In this case, there is no $i\in\range{n}$ such that $i>k^{*}$; on the other hand, when $i<k^{*}$, we replace \eqn{discrete-line-4-7} by
\begin{align}
(1-b_{\sigma(n)})x_{\sigma(n)}+b_{\sigma(n)}(1-x_{\sigma(n)})\geq (1-b_{\sigma(i)})x_{\sigma(i)}+b_{\sigma(i)}(1-x_{\sigma(i)}),
\end{align}
and the argument proceeds unchanged.
\end{proof}

\subsection{Optimality of \thm{main-evaluation}}\label{sec:decision-optimality}
In this section, we prove that the lower bound in \thm{main-evaluation} is optimal (up to poly-logarithmic factors in $n$) for the max-norm optimization problem:
\begin{theorem}\label{thm:main-evaluation-optimal}
Let $f_{c}\colon[0,1]^{n}\rightarrow [0,1]$ be an objective function for the max-norm optimization problem (\defn{max-OPT}). Then there exists a quantum algorithm that outputs an $\tilde{x}\in [0,1]^{n}$ satisfying \eqn{max-norm-main} with $\eps=1/3$ using $O(\sqrt{n}\log n)$ quantum queries to $O_{f}$, with success probability at least $0.9$.
\end{theorem}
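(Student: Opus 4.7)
The plan is to reduce the max-norm optimization problem to search with wildcards on $c$, and then invoke the $O(\sqrt{n}\log n)$-query quantum algorithm from \thm{wildcard}. Since $f_c$ attains its minimum on $[0,1]^n$ uniquely at $x=c$ with $f_c(c)=0$, once the string $c$ has been determined it suffices to output $\tilde{x}=c$, which trivially satisfies $f_c(\tilde{x})\le \min_{x\in[0,1]^n} f_c(x)+1/3$. So the whole task is to recover $c$ using $O(\sqrt{n}\log n)$ evaluation queries.

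To this end I would simulate each wildcard query to $O_c$ with $O(1)$ evaluation queries to $O_{f_c}$. Given a wildcard query specified by $T\subseteq\range{n}$ and $y\in\{0,1\}^{|T|}$, define $x^{(T,y)}\in[0,1]^n$ by
\begin{align}
x^{(T,y)}_i = \begin{cases} y_i & \text{if } i\in T, \\ 1/2 & \text{if } i\notin T. \end{cases}
\end{align}
If $c_{|T}=y$, then $|x^{(T,y)}_i-c_i|=0$ for $i\in T$ and $|x^{(T,y)}_i-c_i|=1/2$ for $i\notin T$, so $f_c(x^{(T,y)})\le 1/2$. Conversely, if $c_{|T}\ne y$, then there is some $i\in T$ with $|x^{(T,y)}_i-c_i|=1$, whence $f_c(x^{(T,y)})=1$. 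Therefore $Q_c(T,y)=\delta[f_c(x^{(T,y)})\le 1/2]$, and this indicator is well defined even when the evaluation oracle has only modest precision (any precision $<1/4$ suffices). The degenerate case $T=\emptyset$ requires no query at all since $Q_c(\emptyset,\varnothing)=1$.

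The standard compute--copy--uncompute trick turns this into a clean coherent implementation of $O_c$: reversibly prepare $|T,y\rangle|0\rangle\mapsto |T,y\rangle|x^{(T,y)}\rangle$, apply $O_{f_c}$, copy the bit $\delta[f_c(x^{(T,y)})\le 1/2]$ into the answer register, then apply $O_{f_c}^{-1}$ and uncompute $x^{(T,y)}$. This uses two evaluation queries per wildcard query. Feeding this simulated $O_c$ into the $O(\sqrt{n}\log n)$-query wildcard-search algorithm of \thm{wildcard} recovers $c$ with success probability at least $0.9$ using $O(\sqrt{n}\log n)$ queries to $O_{f_c}$, which completes the proof. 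There is no real obstacle here; the only delicate point is making sure the ancilla registers used in the coherent simulation are properly uncomputed so that no garbage remains entangled with the answer, which the standard pattern above handles automatically.
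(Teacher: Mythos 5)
Your proposal is correct and follows essentially the same route as the paper: the same reduction to search with wildcards via the same query point $x^{(T,y)}$ (constant $1/2$ off $T$, the claimed bit on $T$), the same dichotomy between $f_c(x^{(T,y)})\le 1/2$ and $f_c(x^{(T,y)})=1$, and the same final step of outputting $\tilde{x}=c$. The only differences are cosmetic (you explicitly note that two evaluation queries are used per wildcard query to allow clean uncomputation, and that modest precision suffices — both true and both immaterial to the $O(\sqrt{n}\log n)$ bound).
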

\noindent
In other words, the quantum query complexity of the max-norm optimization problem is $\tilde{\Theta}(\sqrt{n})$.

We prove \thm{main-evaluation-optimal} also using search with wildcards (\thm{wildcard}).

\begin{proof}
It suffices to prove that one query to the wildcard query model $O_{c}$ in \eqn{wildcard-defn} can be simulated by one query to $O_{f_{c}}$, where the $c$ in \eqn{max-OPT-defn} is the string $c$ in the wildcard query model.

Assume that we query $(T,y)$ using the wildcard query model. Then we query $O_{f_{c}}(x^{(T,y)})$ where for all $i\in\range{n}$,
\begin{align}
x_{i}^{(T,y)}=\begin{cases}
      \frac{1}{2} & \text{if } i\notin T; \\
      0 & \text{if } i\in T\text{ and }y_{i}=0; \\
      1 & \text{if } i\in T\text{ and }y_{i}=1.
    \end{cases}
\end{align}

If $c_{|T}=y$, then
\begin{itemize}
\item if $|T|=n$ (i.e., $T=\range{n}$), then
\begin{align}
f_{c}(x)=\max_{i\in\range{n}}|x_{i}^{(T,y)}-c_{i}|=0
\end{align}
because for any $i\in\range{n}$, $x_{i}^{(T,y)}=y_{i}=c_{i}$;
\item if $|T|\leq n-1$, then
\begin{align}
f_{c}(x)=\max_{i\in\range{n}}|x_{i}^{(T,y)}-c_{i}|+g_{i}=\frac{1}{2},
\end{align}
because for all $i\in T$ we have $x_{i}^{(T,y)}=y_{i}=c_{i}$ and hence $|x_{i}^{(T,y)}-c_{i}|=0$, and for all $i\notin T$ we have $|x_{i}^{(T,y)}-c_{i}|=|\frac{1}{2}-c_{i}|=\frac{1}{2}$.
\end{itemize}
Therefore, if $c_{|T}=y$, then we must have $f_{c}(x^{(T,y)})\in\big\{0,\frac{1}{2}\big\}$.

On the other hand, if $c_{|T}\neq y$, then there exists an $i_{0}\in T$ such that $c_{i_{0}}\neq y_{i_{0}}$. This implies $x_{i_{0}}^{(T,y)}=1-c_{i_{0}}$; as a result, $f_{c}(x^{(T,y)})=1$ because on the one hand $f_{c}(x^{(T,y)})\geq|1-c_{i_{0}}-c_{i_{0}}|=1$, and on the other hand $f_{c}(x^{(T,y)})\leq 1$ as $|x_{i}^{(T,y)}-c_{i}|\leq 1$ for all $i\in\range{n}$.

Notice that the sets $\big\{0,\frac{1}{2}\big\}$ and $\{1\}$ do not intersect. Therefore, after we query $O_{f_{c}}(x^{(T,y)})$ and obtain the output, we can tell $Q_{s}(T,y)=1$ in \eqn{wildcard-defn} if $O_{f_{c}}(x^{(T,y)})\in\big\{0,\frac{1}{2}\big\}$, and output $Q_{s}(T,y)=0$ if $O_{f_{c}}(x^{(T,y)})=1$. In all, this gives a simulation of one query to the wildcard query model $O_{c}$ by one query to $O_{f_{c}}$.

As a result of \thm{wildcard}, there is a quantum algorithm that outputs the $c$ in \eqn{max-OPT-defn} using $O(\sqrt{n}\log n)$ quantum queries to $O_{f}$. If we take $\tilde{x}=c$, then $f_{c}(\tilde{x})=\max_{i}|c_{i}-c_{i}|=0$, which is actually the optimal solution with $\epsilon=0$ in \eqn{max-norm-main}. This establishes \thm{main-evaluation-optimal}.
\end{proof}
\end{appendix}

\end{document}